\documentclass[12pt,reqno]{amsart}
\usepackage[utf8]{inputenc}
\usepackage[T1]{fontenc}
\usepackage{amsmath}
\usepackage{amsthm}
\usepackage{amssymb}
\usepackage{dsfont}
\usepackage{graphicx}
\usepackage{hyperref}
\usepackage[font=small,labelfont=bf,tableposition=top]{caption}
\usepackage{cleveref}
\usepackage{pgfplots}
\usepackage{booktabs}
\DeclareCaptionLabelFormat{andtable}{#1~#2  \&  \tablename~\thetable}

\Crefname{convention}{Convention}{Conventions}
\DeclareCaptionLabelFormat{andtable}{#1~#2  \&  \tablename~\thetable}
\newtheorem{theorem}{Theorem}
\newtheorem{lemma}[theorem]{Lemma}
\newtheorem{proposition}[theorem]{Proposition}

\theoremstyle{remark}
\newtheorem{remark}{Remark}
\newtheorem{convention}{Convention}

\Crefname{problem}{Problem}{Problems}
\newcommand\pcref[1]{{\sf(\Cref{#1})}}
\setcounter{tocdepth}{1}

\newcommand\R{{\ensuremath {\mathbb R} }}

\newcommand\G{{\ensuremath {\mathbb G} }}
\newcommand\N{{\ensuremath {\mathbb N} }}
\newcommand\Pro{{\ensuremath {\mathbb P} }}
\newcommand\cM{{\ensuremath {\mathcal M} }}
\newcommand\cF{{\ensuremath {\mathcal F} }}
\newcommand\cE{{\ensuremath {\mathcal E} }}

\newcommand\cR{{\ensuremath {\mathcal R} }}
\newcommand\1{{\ensuremath {\mathds 1} }}
\newcommand\intRd{{\ensuremath {\int_{\mathbb{R}^d}}}}
\newcommand{\eps}{\epsilon}
\renewcommand{\epsilon}{\varepsilon}

\newcommand{\supp}{{\rm supp}}

\renewcommand{\geq}{\geqslant}
\renewcommand{\leq}{\leqslant}
\newcommand{\rhoext}{{\rho_{\rm ext}}}
\newcommand{\rhoextbeta}{{\rho_{{\rm ext}, \beta}}}
\newcommand{\muext}{{\mu_{\rm ext}}}
\renewcommand{\d}{{\rm d}}
\renewcommand{\r}{\mathbf{r}}
\newcommand{\wsl}{\overset{\ast}{\rightharpoonup}}

\newcommand{\e}[1]{\emph{#1}}
\renewcommand{\b}[1]{\textbf{#1}}
\newcommand{\ext}{{\rm ext}}
\newcommand{\SCE}{{\rm SCE}}
\begin{document}
\title[Dual charge approach to the MOT with Coulomb cost]{An external dual charge approach to the Optimal Transport with Coulomb cost}

\author[R. Lelotte]{Rodrigue LELOTTE}
\address{CEREMADE, Universit\'e Paris-Dauphine, PSL Research University, Place de Lattre de Tassigny, 75016 Paris, France.}
\email{lelotte@ceremade.dauphine.fr}

\date{\today}

\begin{abstract}
In this paper, we study the multimarginal optimal transport with Coulomb cost, also known in the physics literature as the \e{Strictly-Correlated Electrons} (SCE) functional. We prove that the dual Kantorovich potential is an electrostatic potential induced by an external charge density, which we call the \e{dual charge}. We study its properties and use it to discretize the potential in one and three space dimensions. 
\end{abstract}

\maketitle
{\hypersetup{linkcolor=black}
\tableofcontents
}
\section{Introduction}\label{sec:intro}

In the recent years, \e{Multimarginal Optimal Transport} (MOT) has begun to attract considerable attention, due to a wide variety of emerging applications outside of mathematics, such as economics, finance, physics and image processing (see \cite{pass2014multimarginal} for a rather detailed review and citations therein). As such, it has become of valuable importance to develop numerical methods to solve this problem, which is plagued with the infamous \e{curse of dimensionality}.

In physics, MOT appears in a variety of applications, \e{e.g.} in defining the \e{Uniform Electron Gas} (UEG; see \cite{lewin2018statistical}), which in turn serves as a building block for the very important \e{Local Density Approximation} in \e{Density Functional Theory} (DFT; see \cite{lewin2019universal}), a successful computational modeling method in quantum physics. Another example of MOT in physics, and closely related to DFT, is the paradigm of \e{classical} DFT (see \e{e.g.} \cite{yang1976molecular}), which exactly reformulates as a MOT problem together with \e{entropic regularization}. 

From a numerical viewpoint, MOT is notoriously hard to solve. Most algorithms require exponential time in the number of marginals and in the discretization size of their supports. While for some specific costs, it is possible to fashion polynomial time methods, in the case of the Coulomb cost the problem is known to be $\mathcal{NP}$-hard  \cite{altschuler2021hardness,altschuler2020polynomial}. Current methods then consist in either adding an entropic regularization \cite{benamou2016numerical}, which physically translates into adding positive temperature to the system, and which makes the numerics more tractable by heavily exploiting parallel computing; or using ideas from \e{Linear Programming} (\e{e.g.} leveraging on the sparsity of optimal plans \cite{friesecke2022genetic}). One can also mention the approach proposed in \cite{alfonsi2021approximation}, namely the \e{Moment Constrained Optimal Transport} (MCOT), which relies on a neat relaxation of the primal constraints (and somehow closely related to our own approach).

From a physical viewpoint, the Kantorovich dual of the MOT is a very meaningful and interesting object in its own right. In fact, as readily noticed in the literature, the so-called (not necessarily unique) Kantorovich potential can be interpreted as an external potential which forces the particles to live, at equilibrium and zero temperature, on the support of an optimal plan. Moreover, at positive temperature, the dual relates to physics still, and the (unique up to an additive constant) Kantorovich potential is interpreted, from a statistical physics viewpoint, as the external potential which forces the associated canonical ensemble to have the target density of the MOT as its one-particle density.

In this paper, we suggest to parametrize the Kantorovich potential (both at zero and positive temperature) as an electrostatic potential generated by an external charge distribution, which we call the \e{dual charge}. While such an approach had already been proposed at zero temperature in \cite{mendl2013kantorovich}, here we mainly focus on its the theoretical aspects. Moreover, from a numerical viewpoint, this dual charge seems a rather amenable candidate for discretization, as it allows for a natural ``smoothing'' from the discretization space to the space of potentials. 

\vspace{1em}

\e{\b{Acknowledgments.}}  I would like to warmly thank Mathieu Lewin (\textsc{CNRS \& Ceremade}, Université Paris Dauphine -- PSL) for having advised me during this work. This project has received funding from the European Research Council (ERC) under the European Union's Horizon 2020 research and innovation program (grant agreement MDFT No 725528).

\section{Theoretical properties on the dual charge}\label{sec:th_res}
\subsection{Transport at zero temperature}\label{sec:th_res:OT_0}
In what follows, we fix an absolutely continuous density $\rho \in L^1(\R^d, \R_+)$ with $\intRd \rho = N \in \N$ where $N \geq 2$ and $d \geq 3$, representing the expected charge density of a system of $N$ indistinguishable electrons interacting through the Coulomb potential. Among all symmetric $N$-particle probability measures $\Pro$ such that $\rho_\Pro = \rho$, where
\begin{align}\label{1_particle_density}
\rho_\Pro(\r) = N \int_{\R^{d(N-1)}} \Pro(\r, \d\r_2, \dots, \d\r_N),
\end{align}
we want to determine the one(s) yielding the lowest possible electrostatic energy, that is, we seek to solve the following minimization problem
\begin{equation}\label[problem]{prob:MOT_Coulomb}\tag{SCE}
F_{\rm SCE}(\rho) = \inf_{\rho_\Pro = \rho} \left\{ \int_{\R^{dN}} \sum_{1 \leq i < j \leq N} \frac{1}{|\r_i - \r_j|^{d-2}} \,\Pro(\d\r_1, \dots, \d\r_N) \right\}.
\end{equation}

The \pcref{prob:MOT_Coulomb} is known in the physics literature as the \e{Strictly-Correlated Electrons} (SCE) \cite{seidl1999strong, seidl1999strictly}. The functional $\rho \mapsto F_{\rm SCE}(\rho)$ arises in DFT as the semiclassical limit of the celebrated \e{Levy-Lieb functional} \cite{levy1979universal,lieb83dft, lewin2018semi}, and is used in the context of strongly-correlated systems. In particular, the SCE approach is to be thought as the exact counterpart to the \e{Kohn-Sham} (KS) approach \cite{kohn1965self}, where one seeks to map a many-body system of interacting electrons into another (fictitious) system of \e{non}-interacting electrons with the same electronic density, whereas in the SCE approach, the fictitious system is purported to have infinite electronic correlation and zero kinetic energy.  We refer the reader to the recent survey \cite{friesecke2022strong} for further details regarding \pcref{prob:MOT_Coulomb}.

As by now well-established \cite{buttazzo2012optimal, cotar2013density}, the SCE problem reformulates as a MOT problem with all marginals constrained to $\rho/N$ and cost of transportation given by the \e{Coulomb cost}
$$
c(\r_1, \dots, \r_N) =  \sum_{1 \leq i < j \leq N} \frac{1}{|\r_i - \r_j|^{d-2}}.
$$
In particular, \pcref{prob:MOT_Coulomb} is equivalent to the following maximization problem \cite{buttazzo2018continuity}, the so-called Kantorovich dual, which reads
\begin{equation}\label[problem]{prob:MOT_Coulomb_d}\tag{SCE$_D$}
F_{{\rm SCE}}(\rho) = \sup_{\substack{v  \text{ s.t.} \\\intRd |v| \rho \, < \, + \infty }} \left\{ E_N(v) + \intRd v(\r) \rho(\r)\d \r \right\},
\end{equation}
where the infimum runs over all continuous functions $v$ and where $E_N(v)$ is defined as 
$$
E_N(v) = \inf_{\r_1, \dots, \r_N} \left\{ c(\r_1, \dots, \r_N) - \sum_{i = 1}^N v(\r_i)\right\}.
$$
\begin{remark}
From a numerical viewpoint $E_N(v)$ is intractable. Instead, one can consider the following problem, which is equivalent to \pcref{prob:MOT_Coulomb_d}:
\begin{equation}\label[problem]{prob:MOT_Coulomb_d_o}\tag{SCE$_{D, \Omega}$}
F_{{\rm SCE}}(\rho) = \sup_{\substack{v \text{ s.t.} \\\int_{\Omega} |v| \rho \, < \, + \infty }} \left\{ E_{N, \Omega}(v) + \intRd v(\r) \rho(\r)\d \r \right\},
\end{equation}
where $E_{N, \Omega}(v)$ is defined for any continuous $v$ similarly to $E_N(v)$ only with the particles constrained to the support $\Omega$ of $\rho$, that is
$$
E_{N, \Omega}(v) = \inf_{\r_1, \dots, \r_N \in \Omega} \left\{ c(\r_1, \dots, \r_N) - \sum_{i = 1}^N v(\r_i)\right\}.
$$
\end{remark}
A maximizer $v$ of \pcref{prob:MOT_Coulomb_d} is called a \e{Kantorovich potential}. Notice that (minus) $v$ can be regarded as a physical potential, and that $E_N(v)$ is to be understood as the classical counterpart to the ground-state energy in quantum physics. In fact, in the physics literature, $v$ is coined as the \e{SCE potential}, which is to be thought as the effective one-body potential which emulates the SCE system, and which captures the effects of the strongly-correlated regime.

Since \pcref{prob:MOT_Coulomb} entirely pertains to electrostatics, one might conjecture that $v$ shall also relate to electrostatics. More precisely, we ask ourselves whether or not there exists a Kantorovich potential $v$ which is a \e{Coulomb potential}, that is, such that there exists a measure $\rho_{\rm ext} \in \cM(\R^d)$, which we coined as the \e{dual charge}, so that $v(\r) = \rhoext\ast|\r|^{2-d}$, where $\ast$ denotes the usual convolution operator. We would expect that $\rhoext$ verifies $\rhoext \geq 0$ and $\supp(\rhoext) \subset \supp(\rho)$, in order to attract the electrons into an optimal transport plan, which, in the absence of an external potential landscape, would escape to infinity, and that $\intRd \rhoext = N-1$, since each fixed electron ``sees'' the $N-1$ other electrons, so as to counterbalance the repulsive force created by those electrons.

\begin{figure}
\centering
  \includegraphics[width=0.7\linewidth]{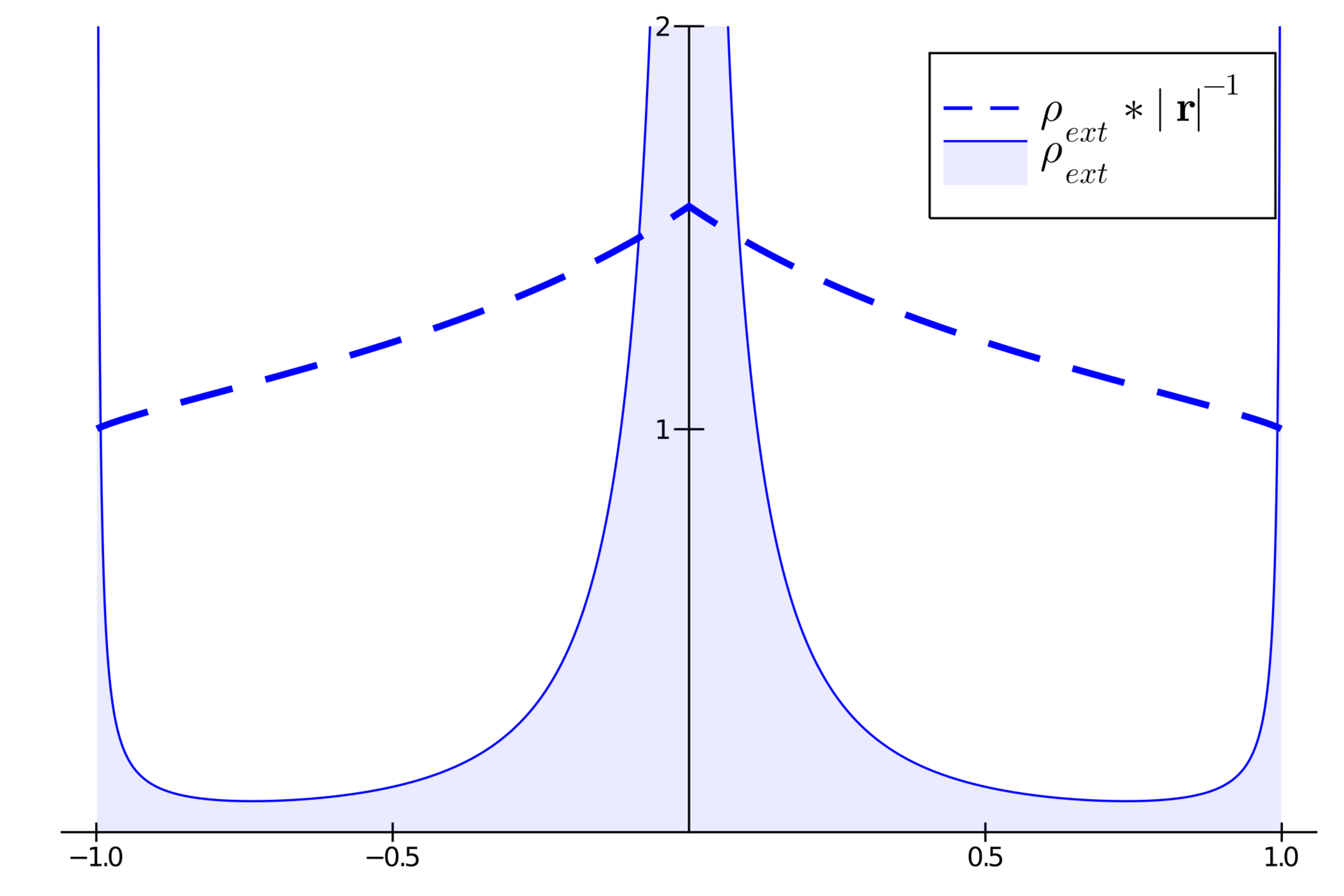}
  \label{fig:V}
\caption{Example for \Cref{thm:charge}: the (radial components of a) Kantorovich potential $\rhoext \ast |\r|^{-1}$ and its associated dual charge $\rhoext$ for a two-electron system ($N=2$) with density $\rho = 2 |B_1|^{-1} \1_{B_1}$ where $B_1$ is the unit ball of $\R^3$.}\label{fig:Vrho}
\end{figure}
The following theorem confirms our intuition:
\begin{theorem}[Existence of a dual charge at zero temperature]\label{thm:charge}
Given $\rho \in L^1(\R^d, \R_+)$ with $\intRd \rho = N \geq 2$, there exists a maximizer $v$ of \pcref{prob:MOT_Coulomb_d} which is an \e{attractive} Coulomb potential, that is of the form $v(\r) = \rhoext \ast |\r|^{2-d}$ for some positive measure\footnote{We will aways suppose that measures are \e{locally finite}.} $\rhoext$ of mass $\intRd \rhoext \leq N - 1$. Moreover, if the support $\Omega$ of $\rho$ is bounded, one can suppose that $\supp(\rhoext) \subset \Omega$ and that $\int_{\Omega} \rhoext = N-1$.\end{theorem}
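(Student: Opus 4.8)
The plan rests on the fact that the Coulomb kernel is, up to the positive constant $c_d$ with $-\Delta|\cdot|^{2-d}=c_d\,\delta_0$, the Newtonian kernel of $\R^d$; hence for every fixed $(\r_2,\dots,\r_N)$ the map $\r\mapsto\sum_{j=2}^N|\r-\r_j|^{2-d}+\mathrm{const}$ is superharmonic on $\R^d$, and a pointwise infimum over a compact parameter set of such maps is again superharmonic (being attained, it is lower semicontinuous, and it inherits the super-mean-value inequality). I would first recall from the duality theory for \pcref{prob:MOT_Coulomb} (\cite{buttazzo2012optimal,cotar2013density,buttazzo2018continuity}) that there is a symmetric optimal plan $\Pro$ and a continuous Kantorovich potential; I treat bounded $\Omega$ first, where everything is cleanest, and then deduce the general case by approximation.

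For bounded $\Omega$, take a continuous maximizer $\varphi$ of \pcref{prob:MOT_Coulomb_d_o} normalized so that $E_{N,\Omega}(\varphi)=0$, i.e. $\sum_i\varphi(\r_i)\leq c(\r_1,\dots,\r_N)$ on $\Omega^N$ with equality on $\supp\Pro$. Replace $\varphi$ by its $c$-transform
$$
\varphi^c(\r):=\inf_{\r_2,\dots,\r_N\in\Omega}\left\{\sum_{j=2}^N|\r-\r_j|^{2-d}+\Big(\sum_{2\leq i<j\leq N}|\r_i-\r_j|^{2-d}-\sum_{j=2}^N\varphi(\r_j)\Big)\right\}.
$$
Since the infimum is over the compact set $\Omega^{N-1}$, it is attained and $\varphi^c$ is superharmonic on all of $\R^d$ by the remark above. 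Moreover $\varphi^c=\varphi$ on $\Omega$ — this follows from complementary slackness on $\supp\Pro$, whose first-coordinate projection contains $\Omega$, together with continuity — so $\varphi^c$ is still a maximizer of \pcref{prob:MOT_Coulomb_d_o}, whose value depends only on the restriction of the potential to $\Omega$. The delicate point here is genuinely establishing $\varphi^c=\varphi$ on $\Omega$: for a multimarginal cost the $c$-transform of a single function is paired with the original function on the other marginals, so one must symmetrise the feasible tuple $(\varphi^c,\varphi,\dots,\varphi)$ over permutations (permissible since $c$ is symmetric and all marginals equal $\rho/N$) to see that the dual value cannot strictly increase.

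Next I would invoke the Riesz decomposition. Writing $\ell:=\lim_{|\r|\to\infty}\varphi^c(\r)$, which exists and equals the infimum of the bracketed term since $\sum_{j\geq 2}|\r-\r_j|^{2-d}\to0$, the function $\varphi^c-\ell$ is superharmonic and vanishes at infinity, hence equals $\mu\ast|\cdot|^{2-d}$ for the positive measure $\mu=-c_d^{-1}\Delta\varphi^c$ (the harmonic part being bounded, hence constant by Liouville, hence absorbed). Sandwiching $\varphi^c(\r)-\ell$ between $\inf_{\r_j\in\Omega}\sum_{j\geq 2}|\r-\r_j|^{2-d}$ and $\sum_{j\geq 2}|\r-\r_j^\star|^{2-d}$ for a configuration $(\r_2^\star,\dots,\r_N^\star)$ realising the bracket's infimum — both asymptotic to $(N-1)|\r|^{2-d}$ — yields $\intRd\mu=N-1$. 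It remains to force $\supp\mu\subset\Omega$: here one sweeps $\mu$ onto $\Omega$ (equivalently, replaces $\varphi^c$ off $\Omega$ by the harmonic function with the same boundary values on $\partial\Omega$ and the same limit at infinity), and checks that the resulting potential $v$ still agrees with $\varphi^c$ on $\Omega$ — hence stays superharmonic on $\R^d$ (the super-mean-value inequality transfers from $\varphi^c$, using $v\leq\varphi^c$ by comparison) and stays a maximizer of \pcref{prob:MOT_Coulomb_d_o} — is harmonic off $\Omega$, and tends to its constant limit at infinity, so that $v$ minus that constant is $\rhoext\ast|\cdot|^{2-d}$ with $\rhoext\geq0$ supported in $\Omega$. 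The subtlety is that no mass should be lost in this sweeping, i.e. $\int_\Omega\rhoext=N-1$ still: physically the $N-1$ co-moving electrons all lie in the bounded set $\Omega$, so the counterbalancing charge cannot escape to infinity, and this must be turned into a quantitative statement on the far-field of the swept potential.

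Finally, for general $\rho$, I would approximate $\rho$ in $L^1$ by compactly supported densities $\rho_k$ with $\intRd\rho_k=N$, apply the bounded case to obtain dual charges $\rho_{{\rm ext},k}\geq0$ of mass $N-1$ supported in $\supp\rho_k$, extract a weak-$\ast$ limit $\rho_{{\rm ext},k}\wsl\rhoext$ — where now mass may leak to infinity, so that only $\intRd\rhoext\leq N-1$ — pass to the limit in the potentials $\rho_{{\rm ext},k}\ast|\cdot|^{2-d}$, and conclude via the continuity of $\rho\mapsto F_{\SCE}(\rho)$ (\cite{buttazzo2018continuity}) and the lower semicontinuity of $E_N$ that $v:=\rhoext\ast|\cdot|^{2-d}$ maximizes \pcref{prob:MOT_Coulomb_d}. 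I expect the two genuinely hard points to be the optimality-preservation of the multimarginal $c$-transform and, above all, the potential-theoretic confinement of $\rhoext$ to $\Omega$ without losing total mass.
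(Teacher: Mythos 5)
The bounded-support half of your plan follows the paper's route in substance (a $c$-transform maximizer, superharmonicity of an infimum of superharmonic functions, Riesz decomposition plus Liouville, then balayage onto $\Omega$), and your far-field sandwich of $\varphi^c-\ell$ between $\inf_{\Omega^{N-1}}\sum_{j\ge2}|\r-\r_j|^{2-d}$ and $\sum_{j\ge2}|\r-\r_j^\star|^{2-d}$ is a clean way to read off the coefficient $N-1$ when $\Omega$ is compact, where the paper instead runs its chain of inequalities between the $E_{N-K}(v_{\rm ext})$'s. But even in this half you assume three things silently. First, writing $\varphi^c-\ell=\mu\ast|\cdot|^{2-d}$ needs the fact that the Newtonian potential of the (a priori only locally finite) Riesz measure has $\liminf$ equal to $0$ at infinity; Liouville alone gives $\varphi^c-\ell=\mu\ast|\cdot|^{2-d}+C$ with $C\le0$, and for $C\neq0$ your sandwich no longer pins down the mass. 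This is exactly \Cref{limit_vc}, which the paper flags as not entirely trivial and proves via Newton's theorem. Second, your construction produces a maximizer of the $\Omega$-constrained dual \pcref{prob:MOT_Coulomb_d_o}, while the statement concerns \pcref{prob:MOT_Coulomb_d}: you still must show $E_N(v)=E_{N,\Omega}(v)$, i.e.\ that letting electrons leave $\Omega$ or escape to infinity cannot lower the infimum — this is precisely the role of the paper's $E_{N-K}$ argument, and it is also the mechanism behind the mass identity you leave open after sweeping. Third, exterior balayage in $d\ge3$ genuinely can lose mass (the swept measure of a point charge is a sub-probability harmonic measure), so ``no mass is lost'' requires an argument, not only the physical intuition you invoke; you correctly flag this but do not resolve it.

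The genuine gap is the reduction of the general case to compact support by approximation. The paper does not do this: it treats general $\rho$ directly, using the truncated-cost duality theory to obtain a bounded Lipschitz potential satisfying the fixed-point equation on all of $\R^d$, and gets $\intRd\rhoext\le N-1$ by locating a compactly attained $E_{N-K}(v_{\rm ext})$ and dominating $v_{\rm ext}$ by the potential of $N-K$ point charges — an argument designed to work without compact support, which your sandwich cannot replace. Your limiting step fails at the key inequality: from $\rho_{{\rm ext},k}\wsl\rhoext$ alone, the lower semicontinuous kernel only gives $v\le\liminf_k v_k$ pointwise (Fatou/portmanteau), hence $\int v\rho$ is controlled from \emph{above} by the approximate pairings, whereas to conclude that $v$ is optimal you need $E_N(v)+\intRd v\rho\ \ge\ \lim_k\bigl(E_N(v_k)+\intRd v_k\rho_k\bigr)$, i.e.\ a lower bound on the pairing term (the $E_N$ term is fine, since $E_N(v)\ge\limsup_k E_N(v_k)$ under pointwise convergence). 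Repairing this requires locally uniform convergence of the potentials together with tail control of $\intRd v_k\rho$, i.e.\ bounds on the approximate Kantorovich potentials (Lipschitz constant, $L^\infty$ norm) uniform in $k$; those constants come from the non-degeneracy radius $\alpha$ of each $\rho_k$ and are not uniform for free, and even the admissibility $\intRd |v|\rho<\infty$ of the limit potential is not automatic. At that point you would essentially be redoing the paper's direct argument for general $\rho$.
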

The proof of \Cref{thm:charge} is consigned in \Cref{sec:proof:thm:charge}.
\begin{remark}[Non-uniqueness of dual charges]\label{rem:rhoext}
Let $\rhoext$ be a dual charge as in \Cref{thm:charge}, \e{i.e.} $v(\r) = \rhoext \ast |\r|^{2-d}$ is a Kantorovich potential for \pcref{prob:MOT_Coulomb_d}. If the support $\Omega$ of $\rho$ is bounded, \e{i.e.} $\Omega \subset B_R$ for some $R > 0$, then $\muext = \rhoext + \sigma_{R}$ is still an admissible dual charge, where $\sigma_{R}$ is the surface measure on $\partial B_R$. Indeed, $\sigma_{R}$ generates a constant potential $c_R = R^{2-d}$ inside of $B_R$ and $\sigma_R \ast |\r|^{2-d} \leq c_R$ outside of $B_R$, so that $w(\r) = \muext\ast|\r|^{2-d}$ is still a Kantorovich potential, since
\begin{equation}
\intRd v \rho + E_N(v) = \intRd w \rho + \underbrace{E_N(v + \tfrac{c_R}{N})}_{\leq E_{N}(w)}.
\end{equation}
This shows that dual charges need not be unique. Moreover, by considering $\rhoext + c\sigma_{R}$ with $c > 0$, the mass of $\rhoext$ can be made arbitrarily large. In fact, even restricting our attention to the dual charges supported on $\Omega$, the mass $\int_{\Omega} \rhoext$ can also be made arbitrarily large, by considering $\rhoext + c\chi_{\rm eq}$ where $\chi_{\rm eq} \in \cM_+(\Omega)$ is the \e{equilibrium measure} of $\Omega$ (see \cite[Chap II.]{land}). We note that generically (\e{i.e.} for << nice >> $\Omega$), $\chi_{\rm eq}$ is supported on $\partial \Omega$.
\end{remark}
\begin{remark}
In the recent survey \cite{friesecke2022strong}, it is mentioned as a conjecture \cite[Eq. (66)]{friesecke2022strong} that, when $\rho$ is supported over the entire space $\R^d$, the SCE potential $v$ shall verify the asymptotic
\begin{align}\label{asym}
v(\r) \sim \frac{N-1}{|\r|^{d-2}} \quad \text{as }\r \to \infty.
\end{align}
We remark that \Cref{thm:charge} implies that $v$ is lower than the asymptotic at \eqref{asym} — with equality in the case of a compactly supported density $\rho$, even though in this case the problem is not well-posed since $v$ can be freely modified outside of the support of $\rho$.
\end{remark}
Though the dual charge need not be unique over the entire space, we obtain uniqueness (among the class of measures which generate Lipschitz Coulomb potentials) over the support of $\rho$.
\begin{theorem}[Uniqueness of the dual charge]\label{thm:charge_unique}
Let $\rho \in L^1(\R^d, \R_+)$ with $\intRd \rho = N \geq 2$. If $\rhoext, \muext \in \cM(\R^d)$ are two dual charges for \pcref{prob:MOT_Coulomb_d} such that $\rhoext \ast |\r|^{2-d}$ and $\muext \ast |\r|^{2-d}$ are Lipschitz, then $\rhoext = \muext$ over the (topological) interior of $\Omega$, where $\Omega$ is the support of $\rho$, provided that $\Omega$ is connected.
\end{theorem}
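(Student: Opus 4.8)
Write $v := \rhoext \ast |\r|^{2-d}$ and $w := \muext \ast |\r|^{2-d}$; by hypothesis these are Lipschitz Kantorovich potentials for \pcref{prob:MOT_Coulomb_d}. The plan is to prove that $g := v - w$ is locally constant on the interior of $\Omega$. Since $-\Delta\,|\r|^{2-d} = c_d\,\delta_0$ for an explicit constant $c_d > 0$, one has $-\Delta g = c_d(\rhoext - \muext)$ in the distributional sense; hence, once $g$ is known to be locally constant on the open set $\operatorname{int}(\Omega)$, integration by parts against any $\phi \in C_c^\infty(\operatorname{int}\Omega)$ gives $\intRd \phi\,\d(\rhoext - \muext) = c_d^{-1} \intRd \nabla g \cdot \nabla\phi \,\d\r = 0$, i.e.\ $\rhoext = \muext$ on $\operatorname{int}(\Omega)$. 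Connectedness of $\Omega$ enters only at the end, to turn ``locally constant'' into a single constant.

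First, fix a minimizer $\Pro$ of \pcref{prob:MOT_Coulomb} (it exists by the standard weak-$\ast$ compactness argument). By symmetry its one-particle marginals all equal $\rho/N$, so $\pi_i(\supp\Pro) \subset \Omega$ for every $i$; moreover, when $\Omega$ is bounded, $\pi_1(\supp\Pro) \supset \supp\rho = \Omega$, since for $x \in \supp\rho$ the closed set $\supp\Pro$ meets $\pi_1^{-1}(U)$ for every neighbourhood $U$ of $x$, and compactness yields a point of $\supp\Pro$ above $x$. Writing $\Phi_u(\r_1, \dots, \r_N) := c(\r_1, \dots, \r_N) - \sum_i u(\r_i)$, the chain $\int_{\R^{dN}} \Phi_v\,\d\Pro = \int_{\R^{dN}} c\,\d\Pro - \intRd v\rho = F_{\rm SCE}(\rho) - \intRd v\rho = E_N(v)$, together with the pointwise bound $\Phi_v \geq E_N(v)$, forces $\Phi_v \equiv E_N(v)$ on $\supp\Pro$; likewise $\Phi_w \equiv E_N(w)$ there. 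Thus every point of $\supp\Pro$ is a global minimizer over $\R^{dN}$ of both $\Phi_v$ and $\Phi_w$.

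The crux is first-order optimality. Fix $\r^\ast = (\r_1^\ast, \dots, \r_N^\ast) \in \supp\Pro$. Since $c$ blows up whenever two particles coincide while $E_N(v)$ and $E_N(w)$ are finite, the $\r_i^\ast$ are pairwise distinct, so $c$ is smooth near $\r^\ast$; minimizing $\Phi_v$ in the $i$-th block of variables alone gives $\nabla v(\r_i^\ast) = \nabla_{\r_i}c(\r^\ast)$ at every $i$ for which $v$ is differentiable at $\r_i^\ast$, and the identical equation holds for $w$ with the \emph{same} right-hand side. Hence $\nabla v = \nabla w$ at every point of $\pi_i(\supp\Pro)$ at which both potentials are differentiable. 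Being Lipschitz, $v$ and $w$ are differentiable Lebesgue-a.e., and (in the bounded case) $\pi_1(\supp\Pro) \supset \Omega$, so $\nabla g = 0$ a.e.\ on $\Omega$; a Lipschitz function with a.e.-vanishing gradient on the open set $\operatorname{int}(\Omega)$ is locally constant there. By the first paragraph this already yields $\rhoext = \muext$ on $\operatorname{int}(\Omega)$, and connectedness of $\Omega$ upgrades it to $g$ being a single constant on the interior.

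The step I expect to require most care is the measure-theoretic bookkeeping just used: the first-order identity only produces $\nabla v = \nabla w$ at points of $\pi_1(\supp\Pro)$ where both Lipschitz potentials happen to be differentiable, so passing to ``$\nabla g = 0$ Lebesgue-a.e.'' needs $\pi_1(\supp\Pro)$ to exhaust $\operatorname{int}(\Omega)$ up to a null set — automatic when $\Omega$ is bounded, where $\pi_1(\supp\Pro) = \Omega$, but delicate when $\Omega$ is unbounded or $\rho$ vanishes on a positive-measure subset of $\operatorname{int}(\Omega)$ (one then reduces to the bounded case by a truncation/exhaustion of $\Omega$, or argues more directly). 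One should also check that the unconstrained minimizers of $\Phi_v$ lie in $\Omega^N$, so that confining the particles to $\Omega$ costs nothing; this follows from the attractiveness of $v$ together with the mass bound $\int \rhoext \leq N-1$ furnished by \Cref{thm:charge}.
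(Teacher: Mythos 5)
Your overall strategy coincides with the paper's: prove that two Lipschitz Kantorovich potentials have equal gradients Lebesgue-a.e.\ on the support, deduce that their difference is (locally) constant there, and then identify $\rhoext$ and $\muext$ on the interior of $\Omega$ by taking distributional Laplacians — this is exactly \Cref{prop:vunique} followed by the one-line proof of \Cref{thm:charge_unique}. The only real difference is that where the paper simply cites \cite[Thm 1.15]{santambrogio2015optimal} for the a.e.\ gradient identity, you reprove it directly: strong duality forces $c-\sum_i v(\r_i)\equiv E_N(v)$ on $\supp\Pro$ (closedness of this set coming from lower semicontinuity), points of $\supp\Pro$ are therefore unconstrained minimizers with pairwise distinct coordinates, and the first-order condition gives $\nabla v(\r_1)=\nabla_{\r_1}c=\nabla w(\r_1)$ at every differentiability point of $\pi_1(\supp\Pro)$; since in the bounded case $\pi_1(\supp\Pro)=\Omega$, this is Lebesgue-a.e.\ on ${\rm int}(\Omega)$, which is what the Laplacian step needs. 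This part is correct, self-contained, and in fact cleaner than the citation (which is stated for two marginals); note also that your argument only needs ``locally constant'', so the fact that connectedness of $\Omega$ does not imply connectedness of ${\rm int}(\Omega)$ is harmless — the ``single constant'' upgrade is cosmetic and unnecessary.

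The one genuine gap is the case of unbounded $\Omega$, which the statement of \Cref{thm:charge_unique} allows. There $\pi_1(\supp\Pro)$ is only dense in $\Omega$ and of full $\rho$-measure; it need not be closed, and it can miss a positive-Lebesgue-measure portion of ${\rm int}(\Omega)$ on which $\rho$ vanishes, so ``$\nabla v=\nabla w$ at differentiability points of $\pi_1(\supp\Pro)$'' no longer yields $\nabla(v-w)=0$ a.e.\ on ${\rm int}(\Omega)$ (gradient agreement on a dense null set says nothing for Lipschitz functions). Your proposed repair by truncation/exhaustion of $\Omega$ does not obviously work: restricting $\rho$ to a bounded piece changes the transport problem, its optimal plans and its Kantorovich potentials, and $v$, $w$ need not remain Kantorovich potentials for the truncated problem, so there is no locality to exploit. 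A direct repair would be to work with the closed set $\{c-\sum_i v(\r_i)=E_N(v)\}\cap\{c-\sum_i w(\r_i)=E_N(w)\}$ and show that its first projection still covers ${\rm int}(\Omega)$, which requires handling minimizing configurations whose remaining coordinates escape to infinity, using the behaviour of $v$ and $w$ at infinity in the spirit of \Cref{limit_v} and \Cref{limit_vc}; the paper sidesteps this entirely by invoking the cited a.e.\ uniqueness of gradients of Kantorovich potentials on $\Omega$.
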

The proof of \Cref{thm:charge_unique} is consigned in \Cref{sec:thm:charge_unique}.
\subsection{Transport at positive temperature} At positive temperature $\beta^{-1} > 0$, \e{i.e.} adding an entropic term to \pcref{prob:MOT_Coulomb}, we are led to the following variational problem
\begin{equation}\label[problem]{prob:MOT_Coulomb_temp}\tag{SCE$_\beta$}
F_{{\rm SCE}, \beta}(\rho) = \inf_{\rho_\Pro = \rho} \left\{ \int_{\R^{dN}} c(\r_1, \dots, \r_N)  \d \Pro + \beta^{-1}{\sf Ent}(\Pro| \mu^{\otimes N})\right\},
\end{equation}
where $\mu = \rho/N$ and where ${\sf Ent}(\Pro | \mu^{\otimes N}) \geq 0$ denotes the relative entropy of $\Pro$ with respect to the product measure $\mu^{\otimes N}$, defined for all probability measures $\Pro$ which are absolutely continuous with respect to $\mu^{\otimes N}$ with Radon-Nykodim density $ \frac{\d \Pro}{\d \mu^{\otimes N}}$ as
$$
{\sf Ent}(\Pro | \mu^{\otimes N}) = \int_{\R^{dN}}  \frac{\d \Pro}{\d \mu^{\otimes N}} (\r_1, \dots, \r_N) \log \left(\frac{\d \Pro}{\d \mu^{\otimes N}}(\r_1, \dots, \r_N)\right) \d \mu^{\otimes N}
$$

While \pcref{prob:MOT_Coulomb_temp} is an instance of entropy-regularized OT \cite{leonard2012schrodinger, gerolin2020multi}, which has received growing interest in the past few years, it originally pertains to the statistical mechanics of non-uniform liquids, which shares substantial connections with the DFT of classical systems \cite{wu2008density,singh1991density,evans1992density}. Indeed, the functional $\rho\mapsto F_{{\rm SCE}, \beta}(\rho)$ is the Legendre transform of the (canonical) Helmholtz free energy $\cF_\beta(v)$ with external (minus) potential $v$, 
\begin{equation}\label[problem]{prob:MOT_Coulomb_temp_d}\tag{SCE$_{D,\beta}$}
F_{{\rm SCE}, \beta}(\rho) = \sup_{v} \left\{\cF_\beta(v) + \intRd v(\r) \rho(\r) \d \r \right\}.
\end{equation}

The existence of a maximizer $v_\beta$ for \pcref{prob:MOT_Coulomb_temp_d}, which is to be thought as the effective one-body potential which forces the system into the constraint density $\rho$, was proved under relatively weak hypotheses through variational techniques in \cite[Thm 2.2.]{chayes1984inverse}. Note that $v_\beta$ is unique up to an additive constant by strict concavity, and that to respect the density constraint it must be that $v_\beta$ is infinite outside of the support of $\rho$. 

In what follows, we will make use of the two following assumptions, namely 
\begin{align}\tag{A1}\label{A1}
D(\rho) := \iint_{\R^d \times \R^d} \frac{\rho(\d \r) \rho(\d \r')}{|\r - \r'|^{d-2}} < \infty,
\end{align}
and
\begin{align}\tag{A2}\label{A2}
\Vert \rho\ast|\r|^{2-d} \Vert_{L^\infty} < \infty.
\end{align}
\begin{remark}
Appealing to \e{Hardy-Littlewood Sobolev inequality} \cite[Thm. 4.3]{lieb1996analysis}, the assumption \eqref{A1} is verified as soon as there exists $\eps > 0$ such that $\rho \in L^{\frac{2d}{d+2} - \eps}(\R^d) \cap L^{\frac{2d}{d+2} + \eps}(\R^d)$. Furthermore, it follows from \e{Hölder's inequality} that if there exists $\eps > 0$ such that $\rho \in L^{\frac{d}{2} - \eps}(\R^d) \cap L^{\frac{d}{2} + \eps}(\R^d)$, then the assumption \eqref{A2} is verified.
\end{remark}
The counterpart of \Cref{thm:charge} at positive temperature reads:
\begin{theorem}[Existence of a dual charge at positive temperature]\label{thm:charge_temp}
Let $\rho \in L^1(\R^d, \R_+)$ with $\intRd \rho = N \geq 2$ which verifies the assumptions  \labelcref{A1,A2}. Then, the unique (up to an additive constant) maximizer $v_\beta$ of \pcref{prob:MOT_Coulomb_temp_d} is an attractive Coulomb potential, \e{i.e.} there exists a positive measure $\rhoextbeta$ such that $v_\beta(\r) = \rhoextbeta \ast |\r|^{2-d}$ (up to an additive constant) on the support $\Omega$ of $\rho$. Moreover, if $\Omega$ is bounded, one can suppose that $\supp(\rhoextbeta) \subset \Omega$.
\end{theorem}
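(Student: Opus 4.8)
The plan is to combine the explicit Gibbs form of the free energy $\cF_\beta$ with elementary potential theory. I would start from
\[
\cF_\beta(v) = -\beta^{-1}\log\int_{\R^{dN}} e^{-\beta(c(\r_1,\dots,\r_N)-\sum_{i=1}^N v(\r_i))}\,\d\mu^{\otimes N},
\]
so that $v\mapsto\cF_\beta(v)+\intRd v\,\rho\,\d\r$ is concave and G\^ateaux-differentiable, and its first variation vanishes at $v_\beta$ precisely when the associated Gibbs measure $\Pro_\beta=Z^{-1}e^{-\beta(c-\sum_i v_\beta(\r_i))}\mu^{\otimes N}$ has one-particle density $\rho$ (this is essentially the content of \cite{chayes1984inverse}). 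Computing that density as $\rho_{\Pro_\beta}(\r)=Z^{-1}N\mu(\r)\,e^{\beta v_\beta(\r)}f(\r)$, where $g:=e^{\beta v_\beta}\rho/N\ge0$ and $f(\r):=\int_{\R^{d(N-1)}}e^{-\beta c(\r,\r_2,\dots,\r_N)}\,g(\r_2)\cdots g(\r_N)\,\d\r_2\cdots\d\r_N$, and matching it with $\rho=N\mu$, I obtain for $\r\in\Omega$ the self-consistent identity
\[
v_\beta(\r) = \beta^{-1}\log Z - \beta^{-1}\log f(\r).
\]
This is where I would spend the hypotheses \labelcref{A1,A2}: they ensure $Z<\infty$ and $g\in L^1(\R^d)$, so that $f$ is finite, continuous and strictly positive on $\R^d$, with a strictly positive limit as $|\r|\to\infty$ (by dominated convergence, since $|\r-\r_i|^{2-d}\to0$); in particular $v_\beta$ is bounded on $\Omega$.

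The heart of the argument is to show that $f$ is \emph{log-subharmonic}, that is, $\log f$ is subharmonic on $\R^d$. Decomposing $c(\r,\r_2,\dots,\r_N)=\sum_{i=2}^N|\r-\r_i|^{2-d}+c(\r_2,\dots,\r_N)$, for each fixed $(\r_2,\dots,\r_N)$ the map $\r\mapsto\sum_{i=2}^N|\r-\r_i|^{2-d}$ is superharmonic (each $|\r-\r_i|^{2-d}$ is, since $-\Delta|\r|^{2-d}=(d-2)|\mathbb S^{d-1}|\,\delta_0\ge0$), hence $\r\mapsto e^{-\beta\sum_{i=2}^N|\r-\r_i|^{2-d}}$ is log-subharmonic. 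As $f$ is then an integral of such functions against the nonnegative measure $e^{-\beta c(\r_2,\dots,\r_N)}g(\r_2)\cdots g(\r_N)\,\d\r_2\cdots\d\r_N$, I would invoke the stability of log-subharmonicity under integration against nonnegative measures: finite sums of log-subharmonic functions are log-subharmonic because $(x_1,\dots,x_k)\mapsto\log\sum_j e^{x_j}$ is convex and nondecreasing in each variable, hence a convex nondecreasing function of subharmonic functions, which is subharmonic; and the general case follows by monotone approximation of the integral by such sums, the finiteness and continuity of $f$ taking care of the semicontinuity. This yields that $\tilde v:=\beta^{-1}\log Z-\beta^{-1}\log f$ is superharmonic and bounded on $\R^d$ and coincides with $v_\beta$ on $\Omega$.

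It then remains to decode $\tilde v$ through the Riesz representation theorem (recall $d\ge3$, as in \Cref{sec:th_res:OT_0}): $\tilde v=\rhoextbeta\ast|\r|^{2-d}+h$, where $\rhoextbeta:=\frac{1}{(d-2)|\mathbb S^{d-1}|}(-\Delta\tilde v)$ is a nonnegative, locally finite measure and $h$ is harmonic on $\R^d$. Since $\rhoextbeta\ast|\r|^{2-d}\ge0$, the harmonic function $h\le\tilde v$ is bounded above, hence constant, and therefore $v_\beta=\rhoextbeta\ast|\r|^{2-d}+\mathrm{const}$ on $\Omega$. Finally, if $\Omega$ is bounded I would sweep the part of $\rhoextbeta$ carried by $\R^d\setminus\Omega$ onto the closed set $\Omega$ (balayage, cf. \cite{land}): the resulting nonnegative measure is supported in $\Omega$ and its Coulomb potential agrees with that of $\rhoextbeta$ quasi-everywhere on $\Omega$, hence everywhere since $v_\beta$ is continuous there, so it is again a dual charge — now with support in $\Omega$.

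The main obstacle is the potential-theoretic core — establishing that an integral of log-subharmonic functions is log-subharmonic (via the log-sum-exp convexity and a passage to the limit under the integral sign) — together with the a priori bounds $Z<\infty$ and $g\in L^1$, which is precisely the role of \labelcref{A1,A2}. The remaining pieces (the explicit form of $\cF_\beta$ and its first-order optimality condition, the continuity of $v_\beta$ on $\Omega$, and the balayage cleanup) are classical.
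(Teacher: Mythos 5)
Your endgame is the same as the paper's: once $v_\beta$ is known to be superharmonic and bounded, apply the Riesz decomposition, kill the harmonic part by Liouville, and sweep the charge onto $\Omega$ by balayage when $\Omega$ is bounded (\Cref{thm:riesz,thm:bal}). Where you genuinely diverge is the superharmonicity step. The paper (\Cref{lem:Vsmooth}) differentiates the fixed-point equation \eqref{eq:Vbetaeq} twice under the integral sign (dominated convergence gives $v_\beta\in C^2$) and identifies $-\beta^{-1}\Delta v_\beta(\r)$ with the variance of $\nabla_\r c$ under the conditional Gibbs measure, which is nonnegative; harmonicity of $|\r|^{2-d}$ away from the origin makes the mean term vanish. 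You instead write $v_\beta=\beta^{-1}\log Z-\beta^{-1}\log f$ and invoke closure of log-subharmonic functions under integration against a nonnegative measure. These are two faces of the same inequality, and your route avoids justifying the differentiation under the integral, at the price of the closure lemma, which is classical but whose proof you only sketch: approximating the integral by finite sums and passing to the limit is delicate (pointwise limits of subharmonic functions need not be subharmonic); the clean argument checks the sub-mean-value inequality for $\log f$ directly, combining the sub-mean-value property of each $\log F(\cdot,y)$ with the superadditivity of the geometric mean (H\"older), using the continuity of $f$.

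There is, however, a genuine gap in your preliminary bounds, and it matters because the Liouville step needs $v_\beta$ (extended to all of $\R^d$) bounded above. Your chain ``\eqref{A1}--\eqref{A2} $\Rightarrow$ $Z<\infty$ and $g\in L^1$ $\Rightarrow$ $f$ continuous, bounded below away from zero $\Rightarrow$ $v_\beta$ bounded'' is circular at its first link: $g=e^{\beta v_\beta}\mu\in L^1$ is essentially equivalent to an upper bound on $v_\beta$ on $\Omega$, which is precisely what you are trying to produce, and the normalization of the partition function only controls $\int e^{-\beta c}\,g^{\otimes N}$, not $\int g$ (note also that $\Omega$ need not be bounded in this theorem). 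The paper closes this in \Cref{lem:Vbounds}: apply Jensen's inequality in \eqref{eq:Vbetaeq} with respect to $\mu^{\otimes(N-1)}$, which is where \eqref{A1}--\eqref{A2} enter through $\Vert\rho\ast|\r|^{2-d}\Vert_{L^\infty}$ and $D(\rho)$, and eliminate the resulting term $\intRd v_\beta\rho$ using \Cref{conv:1} together with $0\leq F_{\SCE,\beta}(\rho)\to F_{\SCE}(\rho)$; the lower bound then follows from positivity of the cost. With that lemma supplied (after which $g\in L^1$ and your bounds on $f$ do follow), the rest of your plan goes through.
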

The proof of \Cref{thm:charge_temp} is consigned in \Cref{sec:thm:charge_temp}.

As one lowers the temperature, it is known \cite{carlier2017convergence} that one recovers the zero-temperature problem, \e{i.e.} \pcref{prob:MOT_Coulomb}. In the same spirit, we prove that the dual charge at positive temperature converges to a dual charge at zero temperature in the small temperature limit.
\begin{theorem}[Zero-temperature limit of the dual charge]\label{thm:charge_conv}
Let $\rho \in L^1(\R^d, \R_+)$ with $\intRd \rho = N \geq 2$ which verifies the assumptions  \labelcref{A1,A2}, and suppose that the support $\Omega$ of $\rho$ is bounded. Let $\rhoextbeta$ with $\supp(\rhoextbeta) \subset \Omega$ be as in \Cref{thm:charge_temp}, \e{i.e.} $v_\beta(\r) = \rhoextbeta \ast |\r|^{2-d}$ is the unique (up to an additive constant) maximizer of \pcref{prob:MOT_Coulomb_temp_d}. Then, the sequence $(\rhoextbeta)_{\beta > 0}$ has at least one accumulation point for the vague topology on $\cM(\Omega)$, and any of its accumulation point is an external dual charge for \pcref{prob:MOT_Coulomb_d_o}.
\end{theorem}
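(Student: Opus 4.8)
The plan is to combine a vague-compactness argument for the family $(\rhoextbeta)$ as $\beta\to\infty$ with an elementary limit passage from the finite-temperature duality \pcref{prob:MOT_Coulomb_temp_d} to the zero-temperature one \pcref{prob:MOT_Coulomb_d_o}. Throughout I write $v_\beta = \rhoextbeta \ast |\r|^{2-d}$, $\mu = \rho/N$, recall that $\Omega = \supp(\rho)$ is compact, and use both $F_{{\rm SCE},\beta}(\rho) = \cF_\beta(v_\beta) + \intRd v_\beta\rho$ and the explicit formula $\cF_\beta(v) = -\beta^{-1}\log \int_{\Omega^N} \exp(-\beta(c - \sum_{i=1}^{N} v(\r_i)))\,\d\mu^{\otimes N}$.

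\textbf{Step 1 (uniform bound and compactness).} First I would establish a bound $v_\beta \leq \kappa\,(\rho \ast |\r|^{2-d})$ on $\Omega$ with $\kappa$ independent of $\beta \geq \beta_0$ (or at least a uniform $L^\infty(\Omega)$ bound on $v_\beta$): heuristically $v_\beta$ is, up to an additive constant, generated by the conditional law of the $N-1$ remaining particles of the associated Gibbs state — a positive measure of total mass $N-1$ — whose Coulomb potential is controlled through \eqref{A2} by $\Vert \rho \ast |\r|^{2-d}\Vert_{L^\infty}$, and I expect this to follow from the same Gibbs-state estimates that underlie \Cref{thm:charge_temp} (this is where \labelcref{A1,A2} are genuinely used). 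Granting it, put $C := \sup_{\beta \geq \beta_0} \Vert v_\beta\Vert_{L^\infty(\Omega)} < \infty$. Since $|\r - \r'| \leq \operatorname{diam}(\Omega)$ for $\r,\r' \in \Omega$ and $2-d < 0$, one has $v_\beta(\r) \geq (\operatorname{diam}(\Omega))^{2-d}\,\rhoextbeta(\Omega)$ for every $\r \in \Omega$, hence $\sup_{\beta \geq \beta_0}\rhoextbeta(\Omega) \leq (\operatorname{diam}(\Omega))^{d-2}C < \infty$. As $\Omega$ is compact, the Banach--Alaoglu theorem makes $\{\rhoextbeta : \beta \geq \beta_0\}$ relatively compact for the vague topology on $\cM(\Omega)$, so $(\rhoextbeta)$ admits vague accumulation points as $\beta \to \infty$. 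Fix a sequence $\beta_k \to \infty$ with $\rho_{{\rm ext},\beta_k} \wsl \rho_{\ext}^\star$, $\rho_{\ext}^\star \geq 0$, and put $v^\star := \rho_{\ext}^\star \ast |\r|^{2-d}$.

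\textbf{Step 2 (limit passage).} Because $\r' \mapsto |\r - \r'|^{2-d}$ is nonnegative and lower semicontinuous, the portmanteau theorem yields $v^\star(\r) \leq \liminf_k v_{\beta_k}(\r)$ for every $\r$; in particular $v^\star \leq C$ on $\Omega$, so $v^\star$ is bounded on $\Omega$ and $\intRd |v^\star|\rho \leq CN < \infty$. On the one hand, for any $v$ with $\intRd|v|\rho < \infty$ and any $\Pro$ with $\rho_\Pro = \rho$ (hence supported in $\Omega^N$), the pointwise bound $E_{N,\Omega}(v) \leq c - \sum_i v(\r_i)$ on $\Omega^N$ together with $\int \sum_i v(\r_i)\,\d\Pro = \intRd v\rho$ give $E_{N,\Omega}(v) + \intRd v\rho \leq \int c\,\d\Pro$; minimizing over $\Pro$ and applying this to $v^\star$ yields $E_{N,\Omega}(v^\star) + \intRd v^\star\rho \leq F_{\rm SCE}(\rho)$. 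On the other hand, testing \pcref{prob:MOT_Coulomb_temp_d} with the fixed function $v^\star$ and using the elementary inequality $\cF_\beta(v^\star) \geq E_{N,\Omega}(v^\star)$ — valid for every $\beta$, since $c - \sum_i v^\star(\r_i) \geq E_{N,\Omega}(v^\star)$ on $\Omega^N$ and $\mu^{\otimes N}$ is a probability measure, both quantities being finite thanks to the boundedness of $v^\star$ on $\Omega$ — one gets $F_{{\rm SCE},\beta}(\rho) \geq \cF_\beta(v^\star) + \intRd v^\star\rho \geq E_{N,\Omega}(v^\star) + \intRd v^\star\rho$ for all $\beta \geq \beta_0$. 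Letting $\beta = \beta_k \to \infty$ and invoking the known convergence $F_{{\rm SCE},\beta_k}(\rho) \to F_{\rm SCE}(\rho)$ \cite{carlier2017convergence}, one obtains $F_{\rm SCE}(\rho) \geq E_{N,\Omega}(v^\star) + \intRd v^\star\rho$. Combined with the previous inequality this is an equality, so $v^\star$ maximizes \pcref{prob:MOT_Coulomb_d_o}, i.e.\ $\rho_{\ext}^\star$ is an external dual charge for it; as $(\beta_k)$ was an arbitrary subsequence realizing an accumulation point, every accumulation point of $(\rhoextbeta)$ is such a charge.

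\textbf{Main obstacle.} The only genuinely delicate step is Step 1 — the bound on $v_\beta$ \emph{uniform in $\beta$}, equivalently ruling out that $\rhoextbeta$ leaks mass to $\partial\Omega$ or concentrates at a point of $\Omega$ as $\beta \to \infty$. Concentration would force $v^\star = +\infty$ somewhere on $\Omega$, hence $E_{N,\Omega}(v^\star) = -\infty$, which would break both the admissibility estimate and the inequality $\cF_\beta(v^\star) \geq E_{N,\Omega}(v^\star)$ in Step 2 (indeed one would then have $\cF_\beta(v^\star) = -\infty$). The remaining ingredients — the compactness and the limit passage — are routine once this uniform control is in hand, and it is precisely here that the regularity assumptions \labelcref{A1,A2} must be exploited, presumably along the lines of the proof of \Cref{thm:charge_temp}.
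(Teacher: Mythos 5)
Your Step 1 is essentially sound, and in fact simpler than the paper's route: the paper bounds the Coulomb self-energy $D(\rhoextbeta)=c_d^{-1}\int_{\R^d}|\nabla v_\beta|^2$ uniformly in $\beta$ and uses vague compactness of energy balls, whereas you get vague compactness directly from a uniform mass bound, which does follow from the uniform $L^\infty$ bound of \Cref{lem:Vbounds} (the bound you defer to ``Gibbs-state estimates'' is exactly that lemma). You should only add a word on the normalization: \Cref{thm:charge_temp} gives $v_\beta=\rhoextbeta\ast|\r|^{2-d}$ only up to an additive constant on $\Omega$, so to bound $\rhoextbeta(\Omega)$ via the potential you must check that this constant is itself uniformly bounded, which follows since $\rhoextbeta\ast|\r|^{2-d}\geq 0$ tends to zero at infinity (\Cref{limit_vc}) while $v_\beta$ is uniformly bounded above and below.

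The genuine gap is in Step 2: the two inequalities you ``combine'' are the same inequality, not opposite ones. Weak duality gives $E_{N,\Omega}(v^\star)+\intRd v^\star\rho\leq F_{\rm SCE}(\rho)$, and your second chain, $F_{{\rm SCE},\beta}(\rho)\geq\cF_\beta(v^\star)+\intRd v^\star\rho\geq E_{N,\Omega}(v^\star)+\intRd v^\star\rho$, produces in the limit once more $F_{\rm SCE}(\rho)\geq E_{N,\Omega}(v^\star)+\intRd v^\star\rho$ — the same one-sided bound, so no equality follows. Indeed it could not: nowhere in Step 2 do you use that $v^\star$ arises from the maximizers $v_{\beta_k}$ (only its boundedness), so the identical argument would ``prove'' that every bounded potential is a Kantorovich potential. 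What is needed is the reverse inequality $E_{N,\Omega}(v^\star)+\intRd v^\star\rho\geq F_{\rm SCE}(\rho)$, and this is where the work lies: one needs (i) $\intRd v_{\beta_k}\rho\to\intRd v^\star\rho$, which does not follow from vague convergence alone since the kernel is only lower semicontinuous (your portmanteau step gives the wrong direction; a Fubini rewriting $\intRd v_{\beta_k}\rho=\int(\rho\ast|\r|^{2-d})\,\d\rho_{{\rm ext},\beta_k}$ helps only if $\rho\ast|\r|^{2-d}$ is continuous, which \labelcref{A2} does not provide), and (ii) a bound of the type $E_{N,\Omega}(v^\star)\geq\lim_k\cF_{\beta_k}(v_{\beta_k})$, i.e.\ $E_{N,\Omega}(v^\star)\geq 0$ under \Cref{conv:1}. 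The paper obtains both from the uniform $W^{1,\infty}$ bound on $v_\beta$ (\Cref{lem:vconvW1}, proved via the Gibbs variational principle), which yields locally uniform convergence $v_{\beta_k}\to v_\infty$, the identity $\intRd v_\beta\rho=F_{{\rm SCE},\beta}(\rho)+\beta^{-1}\ln N\to F_{\rm SCE}(\rho)$, and a Fatou argument on $z_\beta(v_\beta)=N$ giving $E_{N,\Omega}(v_\infty)\geq 0$ (\Cref{lem:vconvAA}); the same uniform convergence is then used to identify $v_\infty$ with $\rho_{\rm ext}^\star\ast|\r|^{2-d}$. So your closing assessment is inverted: the uniform bound of Step 1 is available in the paper, while the ``routine limit passage'' of Step 2 is precisely the part that fails as written.
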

The proof of \Cref{thm:charge_conv} is consigned in \Cref{sec:2thm}.

\Cref{thm:charge_conv} follows from a more general result, namely the convergence at the level of the Kantorovich potentials, as consigned in \Cref{thm:vconv} hereafter. While it came to our attention that in \cite{nutz2021entropic} it is proved a similar (and decidedly more general) result, our convergence is stronger, though \e{a priori} specific to the Coulomb cost.

\begin{theorem}[Zero-temperature limit of the Kantorovich potential]\label{thm:vconv}
Let $\rho \in L^1(\R^d, \R_+)$ with $\intRd \rho = N \geq 2$ which verifies the assumptions  \labelcref{A1,A2}, and suppose that the support $\Omega$ of $\rho$ is bounded. Let $v_\beta$ be the unique maximizer of \pcref{prob:MOT_Coulomb_temp_d} which, up to an additive constant, we can suppose to verify $\cF_\beta(v_\beta) = 0$. Then, the sequence $(v_\beta)_{\beta > 0}$ is bounded in $W^{1, \infty}(\R^d)$ uniformly in $\beta$ in the limit $\beta \to \infty$, and therefore admits at least one accumulation point $v$, which is necessarily a Kantorovich potential for \pcref{prob:MOT_Coulomb_d_o}. In particular 
\begin{equation}\label{eq:Vuniform}
v_\beta \xrightarrow[\beta \to \infty]{} v \quad \text{uniformly on every compact set}.
\end{equation}
\end{theorem}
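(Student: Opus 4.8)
The strategy is to obtain a uniform $W^{1,\infty}$ bound on a good $\R^d$-representative of $v_\beta$ and then pass to the limit; the uniform Lipschitz bound is the crux. Write $\mu=\rho/N$, a probability measure with $\supp(\mu)=\Omega$. The engine is the explicit form of the free energy, $\cF_\beta(v)=-\beta^{-1}\log\int_{\Omega^{N}}\exp(-\beta(c(\r_1,\dots,\r_N)-\sum_{l=1}^{N}v(\r_l)))\,\d\mu^{\otimes N}$ (Gibbs' variational principle applied to the Lagrangian dual of \pcref{prob:MOT_Coulomb_temp}), which depends on $v$ only through $v|_\Omega$; I therefore take as $\R^d$-representative of the maximizer the function supplied by its own Euler--Lagrange equation,
\[
v_\beta(\r)=-\tfrac1\beta\log I_\beta(\r),\qquad I_\beta(\r):=\int_{\Omega^{N-1}}\exp\!\Big(-\beta\big(c(\r,\r_2,\dots,\r_N)-\sum_{l=2}^{N}v_\beta(\r_l)\big)\Big)\,\d\mu^{\otimes(N-1)},
\]
obtained by noting that the Gibbs plan $\Pro_\beta\propto\exp(-\beta(c-\sum_{l}v_\beta(\r_l)))\,\mu^{\otimes N}$ is the unique primal optimizer (hence has first marginal $\rho$), marginalizing out $N-1$ variables, using $N\mu=\rho$, and imposing $\cF_\beta(v_\beta)=0$; off $\Omega$ the formula is a definition, still defining a maximizer of \pcref{prob:MOT_Coulomb_temp_d} that on $\Omega$ coincides with the dual-charge potential of \Cref{thm:charge_temp}. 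Since $I_\beta$ is an integral over $\r_2,\dots,\r_N\in\Omega$ of a function of $\r$ that is $C^\infty$ off $\{\r=\r_l\}$ and vanishes to infinite order there, $v_\beta\in C^\infty(\R^d)$.

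\emph{Uniform $L^\infty$ bound.} From $F_{\SCE,\beta}(\rho)=\cF_\beta(v_\beta)+\intRd v_\beta\rho\,\d\r=N\int_\Omega v_\beta\,\d\mu\ge0$ we get $\int_\Omega v_\beta\,\d\mu\ge0$, so Jensen's inequality in the Euler--Lagrange equation gives, for every $\r\in\R^d$,
\[
v_\beta(\r)\le\int_{\Omega^{N-1}}c(\r,\r_2,\dots,\r_N)\,\d\mu^{\otimes(N-1)}-(N-1)\!\int_\Omega v_\beta\,\d\mu\le\tfrac{N-1}{N}\big(\rho\ast|\r|^{2-d}\big)(\r)+\tfrac{(N-1)(N-2)}{2N^{2}}D(\rho)=:C<\infty,
\]
finite and $\beta$-independent by \eqref{A1}--\eqref{A2}; plugging $v_\beta\le C$ back into $e^{-\beta v_\beta}=I_\beta$ and using $c\ge0$ yields $v_\beta\ge-(N-1)C$. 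Hence $M:=\norm{v_\beta}_{L^\infty(\R^d)}$ is bounded uniformly in $\beta$.

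\emph{Uniform Lipschitz bound (main obstacle).} Differentiating under the integral sign — legitimate because $|\r-\r_l|^{1-d}e^{-\beta|\r-\r_l|^{2-d}}$ is bounded — gives $\nabla v_\beta(\r)=\int_{\Omega^{N-1}}\big(\sum_{l=2}^{N}\nabla_\r|\r-\r_l|^{2-d}\big)\,\d\gamma_{\beta,\r}$, where $\gamma_{\beta,\r}$ is the probability measure on $\Omega^{N-1}$ with density $\propto\exp(-\beta(c(\r,\cdot)-\sum_{l=2}^{N}v_\beta(\r_l)-v_\beta(\r)))$ relative to $\mu^{\otimes(N-1)}$. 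I would set $\delta_0:=(NM)^{-1/(d-2)}$ and, for each $l$, split $\int|\r-\r_l|^{1-d}\,\d\gamma_{\beta,\r}$ at $|\r-\r_l|=\delta_0/2$: on $\{|\r-\r_l|\ge\delta_0/2\}$ the integrand is $\le(\delta_0/2)^{1-d}=2^{d-1}(NM)^{(d-1)/(d-2)}$; on $\{|\r-\r_l|<\delta_0/2\}$, since $c\ge|\r-\r_l|^{2-d}$, $\sum_{l=2}^{N}v_\beta(\r_l)+v_\beta(\r)\le NM$ and $|\r-\r_l|^{2-d}>2^{d-2}\delta_0^{2-d}=2^{d-2}NM\ge2NM$ (as $d\ge3$), one has
\[
c(\r,\cdot)-\sum_{l=2}^{N}v_\beta(\r_l)-v_\beta(\r)\ \ge\ |\r-\r_l|^{2-d}-NM\ \ge\ \tfrac12|\r-\r_l|^{2-d},
\]
so the density of $\gamma_{\beta,\r}$ is $\le e^{-\frac\beta2|\r-\r_l|^{2-d}}$ there and $\int_{\Omega}|\r-\r_l|^{1-d}e^{-\frac\beta2|\r-\r_l|^{2-d}}\,\mu(\d\r_l)\le\sup_{t>0}t^{1-d}e^{-\frac\beta2t^{2-d}}=c_d\,\beta^{-(d-1)/(d-2)}$. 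Summing, $\norm{\nabla v_\beta}_{L^\infty(\R^d)}\le(N-1)(d-2)\big(2^{d-1}(NM)^{(d-1)/(d-2)}+c_d\big)$ for $\beta\ge1$, uniformly. This is where all the difficulty sits: $c$ does suppress near-collisions, but a crude estimate on $\{|\r-\r_l|\approx0\}$ would lose a useless factor $e^{c\beta M}$, and the one genuinely Coulomb-specific point is that the potential energy $NM$ can be absorbed into the singular term $|\r-\r_l|^{2-d}$ precisely on the scale $\delta_0=(NM)^{-1/(d-2)}$ handed down by the $L^\infty$ bound.

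\emph{Passage to the limit.} Now $(v_\beta)_{\beta\ge1}$ is bounded in $W^{1,\infty}(\R^d)$, so by Arzel\`a--Ascoli some subsequence converges uniformly on every compact set to $v\in W^{1,\infty}(\R^d)$, which is \eqref{eq:Vuniform}. Since $\cF_\beta(v_\beta)=0$, $\intRd v_\beta\rho\,\d\r=F_{\SCE,\beta}(\rho)\to F_{\SCE}(\rho)$ by \cite{carlier2017convergence}, and uniform convergence with $\rho\in L^1$ gives $\intRd v\rho\,\d\r=F_{\SCE}(\rho)$. A Laplace argument yields $\cF_\beta(v_\beta)\to E_{N,\Omega}(v)$ — $\liminf$ from $c-\sum_{l}v_\beta(\r_l)\ge E_{N,\Omega}(v_\beta)\to E_{N,\Omega}(v)$, and $\limsup$ by restricting the integral defining $\cF_\beta(v_\beta)$ to the open set $\{c-\sum_lv(\r_l)<E_{N,\Omega}(v)+\epsilon\}$, which has positive $\mu^{\otimes N}$-mass since $\supp(\mu^{\otimes N})=\Omega^N$. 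Hence $E_{N,\Omega}(v)=0$ and $E_{N,\Omega}(v)+\intRd v\rho\,\d\r=F_{\SCE}(\rho)$, i.e.\ $v$ is a Kantorovich potential for \pcref{prob:MOT_Coulomb_d_o}. The remaining steps (Gibbs/Jensen, differentiation under the integral, Laplace asymptotics) are routine.
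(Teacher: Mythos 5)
Your argument is correct, and its overall skeleton (uniform $L^\infty$ bound by Jensen in the fixed-point equation \eqref{eq:Vbetaeq}, uniform gradient bound, Arzel\`a--Ascoli, identification of the limit using $F_{\SCE,\beta}(\rho)\to F_{\SCE}(\rho)$) matches the paper's, but the two decisive steps are carried out by genuinely different means. For the uniform Lipschitz bound, the paper (\Cref{lem:vconvW1}) controls $\int |\r-\r_2|^{1-d}\,\d\overline{G_\beta^\r}$ by writing it as a ratio of partition functions and invoking the Gibbs variational principle twice, together with the inequality $-\log t^{d-1}\le t^{2-d}$ and a test measure uniform on $\Omega$, plus a separate far-field estimate outside a large compact set; you instead use the exactly normalized Gibbs density (available because your representative solves the fixed-point equation everywhere) and split at the explicit scale $\delta_0=(NM)^{-1/(d-2)}$, where the Coulomb singularity $|\r-\r_l|^{2-d}$ absorbs the potential energy $NM$, so that the near-collision contribution is $O(\beta^{-(d-1)/(d-2)})$ and the far contribution is trivially bounded. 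This is more elementary, gives an explicit quantitative constant valid on all of $\R^d$ at once, and isolates the same Coulomb-specific mechanism (exponential suppression of collisions) without the free-energy machinery. For the identification of the limit, the paper (\Cref{lem:vconvAA}) shows $E_{N,\Omega}(v)\ge 0$ by Fatou's lemma applied to the normalized partition function, forcing the open ``bad set'' $A_\epsilon$ to be empty because $\mu^{\otimes N}$ has full support on $\Omega^N$; you obtain the same conclusion through a two-sided Laplace estimate on $\beta^{-1}\ln z_\beta(v_\beta)$, whose upper-bound half uses exactly the same full-support/openness input ($\mu^{\otimes N}$ charges the relatively open set $\{c-\sum_l v<E_{N,\Omega}(v)+\epsilon\}$) and whose lower-bound half replaces the paper's appeal to weak duality. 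Both routes are sound; the only points where your write-up is terser than it should be are the justification of differentiation under the integral sign (the paper's \Cref{lem:Vsmooth} records the needed domination) and the fact that the stated convergence \eqref{eq:Vuniform} holds along the extracted subsequence only, a caveat the paper shares.
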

The proof of \Cref{thm:vconv} is consigned in \Cref{sec:2thm}.

\section{Numerical investigations}

In \cite{mendl2013kantorovich}, where it is introduced an analogous of our dual charge, the authors solve \pcref{prob:MOT_Coulomb_d} by discretizing the dual charge as a combination of few Gaussian functions and performing a nested unconstrained optimization. The functional derivative of $v \mapsto E_N(v)$ being intractable, they appealed to derivative-free methods for the outer optimization, while computing $E_N(v)$ through standard quasi-Newton methods. Nevertheless, as noticed by the authors, << \e{the derivative-free methods are not suitable for optimizing with respect to a large number of degrees of freedom. More efficient numerical methods need to be developed in order to obtain the Kantorovich dual solution for more general systems} >>.

We propose to approximate the solution of \pcref{prob:MOT_Coulomb_d} by considering the problem at positive temperature, which is more amenable for a computational viewpoint. Similarly to \cite{mendl2013kantorovich}, we discretize the dual charge as a combination of basis functions. As the temperature is lowered, we shall recover the zero-temperature dual charge. Note that the idea of approaching numerically the optimal transport by its entropy-regularized version has gained a lot of popularity in the recent years \cite{peyre2019computational, cuturi2013sinkhorn}. In the context of \pcref{prob:MOT_Coulomb}, this approach has already been used in \cite{benamou2016numerical, nenna2016numerical}. 
\subsection{Discretization of the dual charge}
We give ourselves a finite set of $M$ basis functions $B = \left\{\rho_i \right\}_{i = 1, \dots, M}$ where $\rho_i \in L^1(\R^d, \R_+)$ for all $i = 1, \dots, M$, and we seek to approximate the external dual charge of \pcref{prob:MOT_Coulomb_temp_d} as a linear combination of the $\rho_i$'s, that is by $\rho_\ext[\nu]$ for some set of weights $\nu = (\nu_i)_{i= 1, \dots, M}$, where
\begin{equation}\label{eq:discr}
\rho_\ext[\nu] := \sum_{i = 1}^{M} \nu_{i} \rho_i, \quad \nu \in \R^{M}.
\end{equation}
Otherwise stated, we approximate the Kantorovich potential $v_\beta$ of \pcref{prob:MOT_Coulomb_temp_d} by $v[\nu]$ where $v[\nu](\r) = \rho_\ext[\nu]\ast|\r|^{2-d}$. We are then lead to the following optimization problem
\begin{equation}\label[problem]{prob:MOT_d}\tag{D$_{B}$}
G_{\SCE, \beta}(\rho; B) = \sup_{\nu \in \R^M}  \left\{ \cF_\beta(v[\nu]) + \intRd v[\nu](\r)\rho(\r) \d \r \right\}.
\end{equation}
\begin{remark}
If we define the set of functions $B' = \{\rho_i \ast |\r|^{2-d}\}_{i = 1, \dots, M}$, one can show that the following duality for \pcref{prob:MOT_d} holds
$$
G_{\SCE, \beta}(\rho; B) = \inf_{\Pro \in \Pi(\rho; B') } \left\{ \int_{\R^{dN}} c(\r_1, \dots, \r_N) \d \Pro + \beta^{-1} {\sf Ent}(\Pro | \mu^{\otimes N}) \right\},
$$
where as previously $\mu = \rho/N$ and where $\Pi(\rho; B')$ is defined as the set of $N$-particle probability measures $\Pro$ verifying the following moment constraints for all $\phi \in B'$ 
\begin{align}\label{relax_constraint}
\int_{\R^{dN}} \sum_{i = 1}^N \phi(\r_i) \d \Pro(\r_1, \dots, \r_N) = \intRd \phi(\r) \rho(\r) \d \r.
\end{align}
Therefore \pcref{prob:MOT_d} can be regarded as the dual approach to the \e{Moment Constrained Optimal Transport} (MCOT) as introduced in \cite{alfonsi2021approximation}.
\end{remark}
\begin{remark}
According to \Cref{thm:charge_temp}, the dual charge is a positive measure. Moreover, in the small temperature limit the dual charge can be assumed to have total mass no greater than $N-1$ according to \Cref{thm:charge}. Therefore, it seems natural to consider the following constrained version of \pcref{prob:MOT_d}
$$
\sup_{\nu \in \Delta_{B}}  \left\{ \cF_\beta(v[\nu]) + \intRd v[\nu](\r)\rho(\r) \d \r \right\}
$$
where $\Delta_B$ is the convex set defined as
$$
\Delta_{B} = \left\{ \nu \in \R_+^{M} : \sum_{i = 1}^{M} \nu_i  \int_{\R^d} \rho_i  \leq N-1 \right\}.
$$
Nevertheless, the optimum $\nu^*$ of \pcref{prob:MOT_d} need not be in $\Delta_{B}$. In fact, one should dismiss this additional constraint when dealing with rather crude discretizations, as it might yield far-from-optimal results. 
\end{remark}

Denoting by $\G[\nu]$ the objective of \pcref{prob:MOT_d}, we note that $\nu \mapsto \mathbb{G}[\nu]$ is strictly concave. We can solve \pcref{prob:MOT_d} using a classical steepest ascent maximization algorithms. A straightforward computation shows that, for all $i = 1, \dots, M$, we have
\begin{equation}\label{eq:gradient}
\partial_{\nu_i} \mathbb{G}[\nu] = D(\rho_i, \rho - \rho[\nu]) 
\end{equation}
where $\rho[\nu]$ denotes the one-particle density of the configurational canonical ensemble with external potential (minus) $v[\nu]$, and where for any two measures $\mu, \nu \in \cM(\R^d)$ we define
$$
D(\mu, \nu) = \intRd \frac{\mu(\d \r) \nu(\d \r')}{|\r - \r'|^{d-2}}.
$$ 

From a numerical perspective, the quantity $D(\rho_i, \rho)$ can easily be computed for all $i = 1, \dots, M$. In fact, all the computational burden is hidden in the computation of $\rho[\nu]$. Nevertheless, we notice that 
\begin{align}
D(\rho_i, \rho[\nu]) = \left\langle\sum_{j = 1}^N \rho_i \ast |\r_j|^{2-d} \right\rangle_{\beta, \nu}
\end{align}
where $\langle \cdot \rangle_{\beta, \nu}$ denotes the expectation value with respect to the canonical ensemble. Therefore, we can appeal to Monte-Carlo (MC) methods. At high temperature (and in the compactly-supported case), one can resort to vanilla MC with uniform proposals over the support of $\rho$. At low temperature, as the ensemble crystallizes onto a minimizer of \pcref{prob:MOT_Coulomb}, one needs to resort to fancier sampling methods since minimizers are believed to be generically singular. For the canonical ensemble is the unique invariant ergodic measure of the (\e{overdamped}) \e{Langevin diffusion}, we can for example produce approximate samples of the ensemble through a myriad of algorithms \cite{rousset2010free, cances2007theoretical}.

\subsection{Numerical investigations}
\subsubsection{One-dimensional system} We start, as a sanity check, with a one-dimensional system, \e{i.e.} $d=1$. This problem is amenable for numerical investigations, as it is entirely solvable for all $N \geq 2$, as stated in \Cref{thm:1d_fwc} below. Recall that, in one-dimension, the Coulomb potential is given by $-|\r - \r'|$. We have not considered this specific case (nor the two-dimensional case where the Coulomb potential is given by  $-\ln|\r - \r'|$) in our main theorems to avoid potential problems due to the divergence of both potentials at infinity. Nevertheless, we expect similar results to hold (see \Cref{rmk:extension1d} below).

\begin{theorem}[{\cite[Thm. 1.1]{colombo2015multimarginal}}]\label{thm:1d_fwc}
Let $\rho \in L^1(\R, \R_+)$ with $\int_{\R} \rho = N \geq 2$, and suppose that $\int_{\R} |r| \rho < \infty$. Then, there exists an explicit measurable map $t : \R \to \R$ such that, denoting $t^{(j)}$ the $j$-th fold composition of $t$ with itself, the probability measure 
\begin{align}\label{monge_type}
\Pro_t(r_1, \dots, r_N) = \frac{ \rho(r_1) }{N}\otimes \delta(r_2 - t(r_1)) \otimes \cdots \otimes \delta(r_N - t^{(N-1)}(r_1))
\end{align}
is a minimizer for the \pcref{prob:MOT_Coulomb} in dimension $d = 1$, that is 
$$
\inf_{\rho_\Pro = \rho} \left\{ - \int_{\R^N} \sum_{1 \leq i < j \leq N} |r_i - r_j| \d \Pro(r_1, \dots, r_N)\right\}.
$$
Moreover, given $\ell_1 < \dots < \ell_{N-1}$ such that $\int_{\ell_i}^{\ell_{i+1}} \rho = 1$ for all $i = 0, \dots, N$, with $\ell_0 = -\infty$ and $\ell_N = +\infty$, the function $v$ defined as
$$
v(r) = - \left( \sum_{i = 1}^{N-1} \delta_{\ell_i} \right) \ast |r|
$$
is a Kantorovich potential for the  \pcref{prob:MOT_Coulomb_d} in $d=1$. In particular, the comb $\sum_{i = 1}^{N-1} \delta_{\ell_i} $ is, according to our terminology, a dual charge, which only depends on $\rho$ through the $\ell_i$'s.
\end{theorem}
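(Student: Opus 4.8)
The plan is to exhibit a matching primal--dual pair: I would write the rearrangement $t$ explicitly, check that $\Pro_t$ is admissible, and then verify that $v=-\big(\sum_{k=1}^{N-1}\delta_{\ell_k}\big)\ast|r|$ solves the Kantorovich dual by complementary slackness (this recovers \cite[Thm.~1.1]{colombo2015multimarginal}). For the map: let $F(r)=\int_{-\infty}^r\rho$ be the cumulative distribution function, so the $\ell_k$ are characterised by $F(\ell_k)=k$ (unique on the support of $\rho$, free on intervals where $\rho$ vanishes), $\ell_0=-\infty$, $\ell_N=+\infty$, and each cell $I_j=(\ell_{j-1},\ell_j)$ carries $\rho$-mass $1$. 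I would take $t$ to be the \emph{cyclic monotone rearrangement}: with $F^{-1}(y)=\inf\{r:F(r)\geq y\}$, set $t(r)=F^{-1}(F(r)+1)$ when $F(r)<N-1$ and $t(r)=F^{-1}(F(r)-(N-1))$ otherwise, so that $t$ sends $I_j$ onto $I_{j+1}$ for $j<N$ and $I_N$ onto $I_1$, each time preserving $\rho$; hence $t$ pushes $\rho/N$ to itself, every marginal of $\Pro_t$ in \eqref{monge_type} equals $\rho/N$, and after symmetrisation (which changes neither the cost nor the one-particle density) $\Pro_t$ is admissible for \pcref{prob:MOT_Coulomb}, with finite cost since $\int|r|\rho<\infty$.

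Next I would set up the dual inequality. Fix $(r_1,\dots,r_N)$, let $R(x)=\#\{i:r_i\leq x\}$ and $L(x)=\#\{k:\ell_k\leq x\}$, and use $|a-b|=\int_\R\1[x\text{ between }a,b]\,\d x$ to get
\begin{align*}
\sum_{i<j}|r_i-r_j|=\int_\R R(x)\big(N-R(x)\big)\,\d x,\qquad \sum_{i=1}^N\sum_{k=1}^{N-1}|r_i-\ell_k|=\int_\R\big[\big(N-1-L(x)\big)R(x)+L(x)\big(N-R(x)\big)\big]\,\d x.
\end{align*}
Writing $\Phi(r_1,\dots,r_N):=c(r_1,\dots,r_N)-\sum_i v(r_i)$, with $c=-\sum_{i<j}|r_i-r_j|$ and $v(r_i)=-\sum_k|r_i-\ell_k|$, subtraction and the identity $R^2-(1+2L)R+NL=(R-L)(R-L-1)+L(N-1-L)$ give
\begin{align*}
\Phi(r_1,\dots,r_N)=\int_\R\Big[\big(R(x)-L(x)\big)\big(R(x)-L(x)-1\big)+L(x)\big(N-1-L(x)\big)\Big]\,\d x.
\end{align*}
Since $m(m-1)\geq 0$ for every integer $m$, this yields $\Phi\geq C:=\int_\R L(x)(N-1-L(x))\,\d x$ everywhere — a finite constant, the integrand vanishing off $[\ell_1,\ell_{N-1}]$ — with equality exactly when $R(x)\in\{L(x),L(x)+1\}$ for a.e.\ $x$, i.e.\ (by monotonicity of $R$ and $L$) exactly when there is precisely one particle in each cell $I_1,\dots,I_N$. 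In particular $E_N(v)=\inf\Phi=C$, which is attained, and $\supp\Pro_t\subseteq\{\Phi=C\}$ because $r_1\in I_j$ forces $t(r_1)\in I_{j+1},\dots$, so the configuration visits each cell once.

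Finally I would close the duality loop. Weak duality is immediate: for admissible $\Pro$ and continuous $v$ with $\int|v|\rho<\infty$, integrating $c\geq E_N(v)+\sum_i v(r_i)$ gives $\int c\,\d\Pro\geq E_N(v)+\int v\rho$, hence $F_{\SCE}(\rho)\geq E_N(v)+\int v\rho$. Applied to our $v$, together with $c=\Phi+\sum_i v(r_i)$, $\Phi\equiv C$ on $\supp\Pro_t$ and $\int\sum_i v(r_i)\,\d\Pro_t=\int v\rho$,
\begin{align*}
F_{\SCE}(\rho)\geq E_N(v)+\int_\R v\rho=C+\int_\R v\rho=\int\Phi\,\d\Pro_t+\int\sum_i v(r_i)\,\d\Pro_t=\int c\,\d\Pro_t\geq F_{\SCE}(\rho),
\end{align*}
so every inequality is an equality: $\Pro_t$ (after symmetrisation) minimises \pcref{prob:MOT_Coulomb} and $v=-\big(\sum_k\delta_{\ell_k}\big)\ast|r|$ is a Kantorovich potential, manifestly depending on $\rho$ only through the $\ell_k$. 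The step I expect to be the main obstacle is not analytic depth but pinning down the dual inequality exactly — realising that the right lower bound for $\Phi$ is the \emph{constant} $C$ (not $0$) and that it is saturated precisely by the one-particle-per-cell configurations, with the factorisation $(R-L)(R-L-1)$ as the crux — together with the bookkeeping needed to make $t$ well defined when $\rho$ degenerates. (Strictly, the duality $F_{\SCE}(\rho)=\sup_v\{E_N(v)+\int v\rho\}$ for the one-dimensional cost $-|r-r'|$ is not among the results recalled in \Cref{sec:th_res}, but only the trivial inequality $\geq$ is used above.)
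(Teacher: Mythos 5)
Your argument is correct, but it is worth saying up front that it does something the paper does not: the paper offers no proof of \Cref{thm:1d_fwc} at all — it imports the statement from \cite[Thm.~1.1]{colombo2015multimarginal} and merely observes in a remark that the cited theorem does not literally apply (the one-dimensional Coulomb cost $-|r-r'|$ is not strictly convex) while asserting that optimality of $t$ "is straightforward". Your proposal supplies exactly the missing self-contained verification, and by a route that is genuinely independent of the CDD machinery: you exhibit the cyclic monotone map through the quantile function, and then certify optimality of the pair $(\Pro_t,v)$ by a primal--dual (complementary slackness) argument whose engine is the counting identity $\Phi=\int_\R\big[(R-L)(R-L-1)+L(N-1-L)\big]\,\d x$; I checked the algebra and the factorisation, and the lower bound $\Phi\geq C$ with saturation on one-particle-per-cell configurations is right, as is the closing chain $F_{\SCE}(\rho)\geq E_N(v)+\int v\rho=\int c\,\d\Pro_t\geq F_{\SCE}(\rho)$, which only uses the trivial ($\geq$) direction of duality and the admissibility of the symmetrised $\Pro_t$ (all marginals $\rho/N$, finite cost from $\int|r|\rho<\infty$). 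Two small points of hygiene, neither a gap: the equality set $\{\Phi=C\}$ should be invoked only up to $\Pro_t$-null boundary configurations (particles sitting exactly at some $\ell_k$, which is harmless since $\rho$ is atomless and only $\int\Phi\,\d\Pro_t=C$ is used), and the pushforward property $t_\#\rho=\rho$ deserves the one-line justification via $F_\#\rho=\mathrm{Leb}_{(0,N)}$. What your approach buys beyond the citation: it needs no strict convexity, so it simultaneously substantiates the paper's remark about the degenerate cost, and it proves the dual-charge claim (that the comb $\sum_k\delta_{\ell_k}$ is a Kantorovich potential's charge) directly rather than by appeal to a duality theorem that the paper only establishes for $d\geq 3$.
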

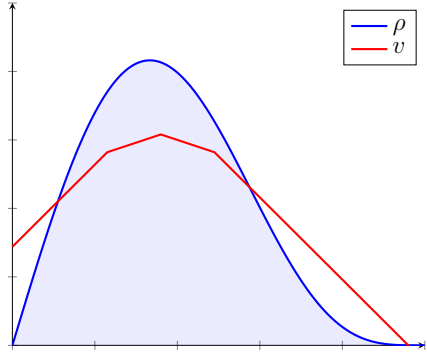
\begin{figure}
\begin{tikzpicture}[scale = 0.8]
\begin{axis}[
    axis lines = left,
    ymin=0,
    ymax=2.5,
    yticklabels={,,},
    xticklabels={,,}
]
\addplot [
    domain=0:1, 
    samples=1000, 
    color=blue,
    line width = 1,
    fill = blue,
    fill opacity = 0.08
]
{10*x*(1-x^2)^4};
\addlegendentry{$\rho$}

\addplot [
    domain=0:1, 
    samples=1000, 
    color=red,
    line width = 1,
]
{-abs(x - 0.23) - abs(x-0.36) - abs(x-0.49) +1.8};

\addlegendentry{$v$}
\end{axis}
\end{tikzpicture}

\caption{Example for \Cref{thm:1d_fwc} : $\rho(r) \propto r(1-r^2)^4$ on $[0,1]$ and the corresponding (up to an additive constant) optimal Kantorovich potential $v$ for $N = 4$.}
\end{figure}

\begin{remark}
The result in \cite[Thm. 1.1]{colombo2015multimarginal} does not \e{per se} apply to the one-dimensional Coulomb potential as claimed above, for it is not strictly convex. Nevertheless, it is straightforward to prove that $t$ remains an optimal transport plan which in this case need not be unique. In fact, there might exist optimal plans which are not of the form \eqref{monge_type} and which carry entropy. For instance, for any $N \geq 2$ and $\rho = \1_{[-N/2,N/2]}$, the following probability measure 
\begin{align}
\Pro_N = \frac{1}{N!}\sum_{\substack{i_1, \dots, i_N \,=\, 0, \dots, N-1\\ i_k \neq i_l \text{ for } k \neq l}} \1_{[\ell_{i_1}, \ell_{i_1+1}]}(r_1) \otimes \cdots \otimes \1_{[\ell_{i_N}, \ell_{i_N +1}]}(r_N) 
\end{align}
is optimal. We conjecture that among all minimizers of \pcref{prob:MOT_Coulomb} for this given density $\rho$, $\Pro_N$ is the one which maximizes the entropy. Therefore, if one considers the one-dimensional problem at positive temperature $\beta^{-1}$, then in the small temperature limit $\beta \to \infty$, one shall recover $\Pro_N$. \end{remark}
\begin{remark}
We see from \Cref{thm:1d_fwc} that in one-dimension \pcref{prob:MOT_Coulomb_d} is degenerate with respect to $\rho$. Indeed, the maximizer $v$ only depends on the positions $\ell_i$'s, and not on the ``shape'' of the density $\rho$ inside each segment $[\ell_i, \ell_{i +1}]$. This pertains to the special case of the one-dimensional Coulomb force, which does not depend on the distance between the particles — \e{i.e.} when the electrons are consigned to their respective segment, the potential felt by them is effectively constant, so that their exact positions inside the segments are irrelevant.
\end{remark}
\begin{figure}
\centering
  \includegraphics[width=.9\linewidth]{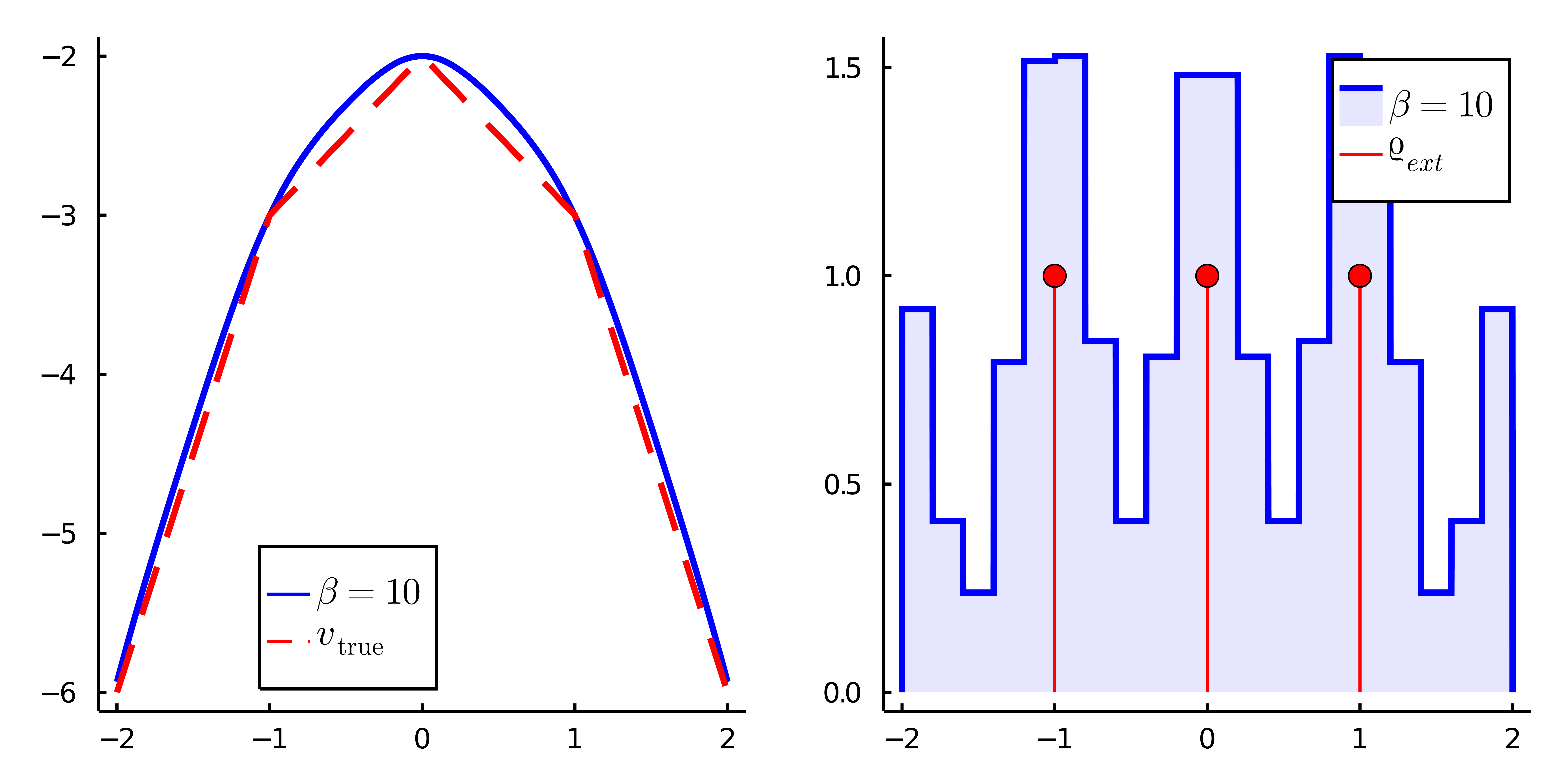}
\caption{One-dimensional four-electron droplet, \emph{i.e.} $\rho = \1_{[-2,2]}$, and $\beta = 10$. \e{Left}: the approximate Kantorovich potential $v[\nu]$ for $\nu$ obtained with our algorithm compared to the exact Kantorovich potential given by \Cref{thm:1d_fwc}. \e{Right}: the associated dual charge compared to the exact dual charge.}  \label{fig:N1}
\end{figure}
We run our algorithm with $N=4$ at inverse temperature $\beta = 10$ with density $\rho = \1_{[-2,2]}$. We select a crude discretization where the elements of $B$ are indicator functions of evenly-spaced segments of width $2/M$, that is
$$
\rho_i = \1_{\left[i\frac2M, (i+1)\frac2M\right]} \quad \text{for all }i=0, \dots, M-1
$$
with $M = 10$, the segment $[-2, 0]$ being dealt with symmetrically. The results obtained, which are displayed at \Cref{fig:N1}, are consistent with \Cref{thm:1d_fwc}.

\begin{remark}[Extension of the main theorems to the one-dimensional case]\label{rmk:extension1d}
In one-dimension, we see that \Cref{thm:charge}, which is stated only for $d \geq 3$, remains valid according to \Cref{thm:1d_fwc}. Similar uniqueness results as in \Cref{thm:charge_unique} holds in one-dimension. At positive temperature $\beta^{-1}$, the existence of a dual charge (\e{i.e.} \Cref{thm:charge_temp}) also remains valid. In fact, we even have a stronger result in the one-dimensional case, namely that the dual charge $\rho_{\ext, \beta} = -\frac12 v_\beta''$ at positive temperature has mass $N-1$, as in the zero-temperature case. Indeed, suppose that the support $\Omega$ of $\rho$ is connected, that is of the form $[a, b]$. Then, we have 
\begin{align}
-\frac{1}{2}\int_{[a,b]} v_\beta''(r) \d r = \frac12(v_\beta'(a) - v_\beta'(b)).
\end{align}
Now, appealing to the equation verified by $v_\beta$ (see \Cref{eq:Vbetaeq} below in the case $d\geq3$), we have
\begin{align}
v_\beta'(r) = - \int_{\R^{N-1}} \sum_{i = 2}^N {\rm sgn}(r - r_i) \d \overline{G_\beta^r}(r_2, \dots, r_N),
\end{align}
where $\overline{G_\beta^r}$ is the (normalized) Gibbs measure defined as in \eqref{def:Gbr} below. Therefore, we have $v_\beta'(a) = N-1$ and $v_\beta'(b) = -(N-1)$, leading to the claimed fact. Finally, regarding the convergence of the Kantorovich potential (\Cref{thm:vconv}) and the dual charge (\Cref{thm:charge_conv}) at positive temperature in the small temperature limit $\beta \to \infty$, these results also remain veracious in one-dimension (using similar arguments).
\end{remark}
\subsubsection{Three-dimensional droplets} We now consider uniform droplets in three-dimension, that is, $\rho = N |B_1|^{-1} \1_{B_1}$ where $B_1$ is the unit ball of $\R^3$. Uniform droplets were numerically investigated in \cite{rasanen2011strictly} up to $N = 30$ using the now called \e{Seidl's maps}, which are believed to furnish a minimizer of \pcref{prob:MOT_Coulomb} and which, irrespective of its conjectured optimality, is known to be near-optimal. In the special two-electron case, the problem is analytically solvable. The unique minimizer of \pcref{prob:MOT_Coulomb} is known to be
\begin{align}
\frac{\rho(\r_1)}{2} \otimes \delta(\r_2 - t(\r_1)) \quad \text{ where }\,\, t(\r) = - \frac{\r}{|\r|}\left(1 - |\r|^3\right)^{1/3}.
\end{align}
Given a Lipschitz Kantorovich potential $v$ (which always exists; see \Cref{duality_zero}), we have for almost-every $\r \in B_1$
\begin{equation}\label{ex:eqV}
\nabla v(\r) = - \frac{\r - t(\r)}{|\r - t(\r)|^3}.
\end{equation}
Then $v$ can be retrieved by integrating \eqref{ex:eqV}. Conversely, the dual charge associated to $v$ is found to be $-\Delta v(\r)/4\pi$ where $\Delta$ denotes the usual Laplacian operator, that is, for almost-every $\r\in B_1$
\begin{align}\label{ex:eqlap}
\rhoext(\r) = \frac{(4\pi)^{-1}}{|\r|(1 -|\r|^3)^{\frac23} (|\r| + (1 - |\r|^3)^{\frac23})^3}.
\end{align}

We run our algorithm with $N = 2$ and $\rho = 2|B_1|^{-1}\1_{B_1}$ at decreasing temperatures. The dual charge is discretized into $M = 15$ evenly-spaced concentric shells in which the charge is constant. The results obtained at \Cref{fig:N2} are consistent. We then select a fixed temperature $\beta = 50$ and we run our algorithm for an increasing number of particles up to $N = 30$. We use the same discretization as for the two-particle case, except when $N=30$, in which case $M$ is raised up to $25$. Given the discretized version of \pcref{prob:MOT_d} at zero-temperature, that is
$$
F_{\SCE} [\nu] = E_N(v[\nu]) + \intRd v[\nu](\r) \rho(\r) \d \r,
$$
we can assess the performance of our procedure by comparing $F_{\SCE}[\nu]$ with the values obtained in \cite{rasanen2011strictly}, which are known to be (near-)optimal. The results obtained are displayed at \Cref{fig:droplet}. We notice that, even appealing to rather crude discretization, our algorithm yields near-optimal values. The case $N=30$ was obtained in $\sim 6$ min on a personal computer using the programming language \texttt{Julia}. 
\begin{figure}
\centering
  \includegraphics[width=.9\linewidth]{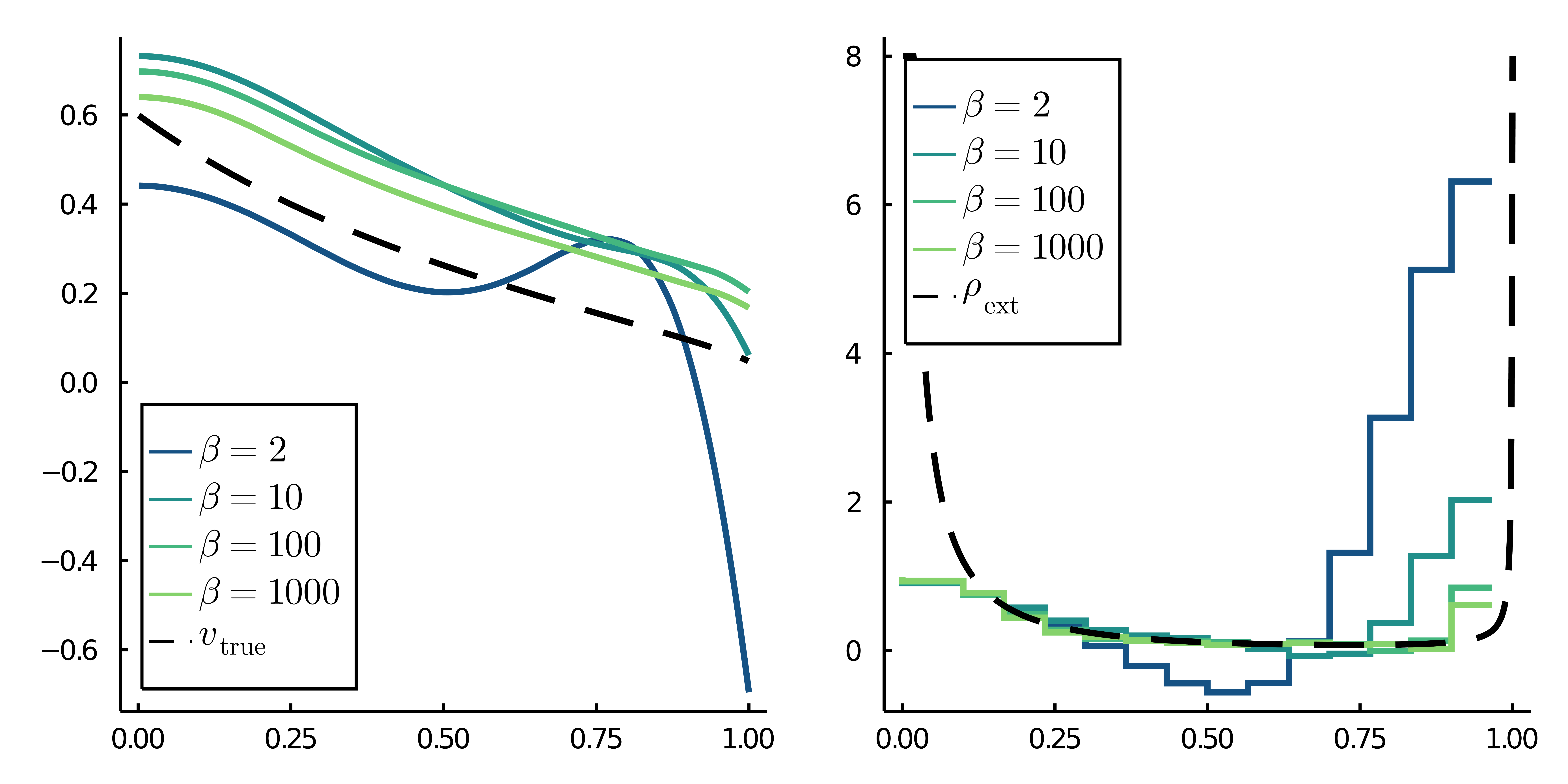}
\caption{Two-electron droplet, \emph{i.e.} $\rho = 2|B_1|^{-1}\1_{B_1}$ where $B_1$ is the unit ball of $\R^3$. \e{Left}: the Kantorovich potentials $v[\nu]$ for $\nu$ obtained with our algorithm at decreasing temperatures compared to the exact Kantorovich potential (defined up to an additive constant). \e{Right}: the associated dual charges for the corresponding temperatures compared to the exact dual charge. }\label{fig:N2}
\end{figure}

\begin{figure}[t]
  \begin{minipage}{0.49\textwidth}
    \includegraphics[trim=0 250 0 0, width=\textwidth, height=6cm]{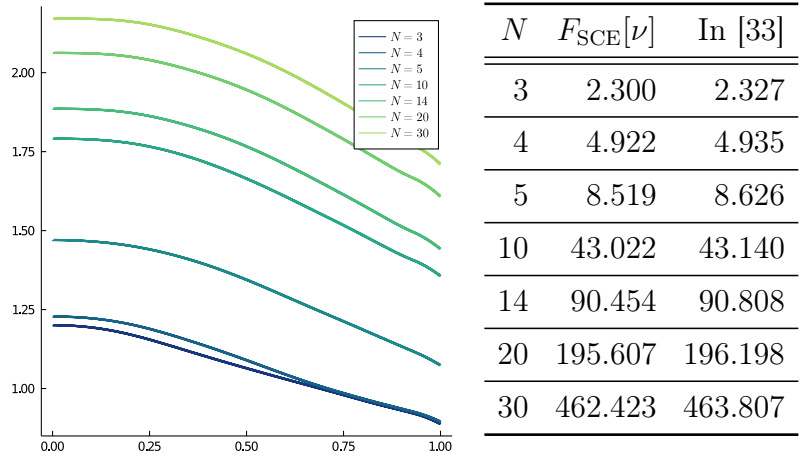}
  \end{minipage}
  \begin{minipage}{0.35\textwidth}
  \begin{center}
  \begin{tabular}{rrrr} 
	\toprule[1pt]
  $N$ & $F_{\SCE}[\nu]$ & In \cite{rasanen2011strictly}  \\ 
  \midrule\midrule
  3 & 2.300 & 2.327  \\ 
\midrule
  4 & 4.922 & 4.935 \\ 
\midrule
  5 & 8.519 & 8.626   \\ 
\midrule
  10 & 43.022 & 43.140   \\
\midrule
  14 & 90.454 & 90.808   \\
\midrule
  20 & 195.607 & 196.198   \\
\midrule
  30 & 462.423 & 463.807   \\
  \bottomrule[1pt]
\end{tabular}
\end{center}
  \end{minipage}
      \caption{Droplets, \emph{i.e.} $\rho = N|B_1|^{-1}\1_{B_1}$ where $B_1$ is the unit ball of $\R^3$. The inverse temperature is fixed to $\beta = 50$, and the number of electrons increases until $N = 30$. We compare the total energy of the system at zero temperature with the values in \cite{rasanen2011strictly}. We also display the approximate Kantorovich potentials obtained, macroscopically rescaled \e{i.e.} $v[\nu]/N$. }\label{fig:droplet}
\end{figure}

\subsection{Implementation and bottlenecks}\label{discussions}
Given the gradient $\nabla \G[\nu_t]$ of the objective at the iteration point $\nu_t$, the outer optimization in \pcref{prob:MOT_d} can be conducted using first-order or quasi-Newton methods. In our experiments, we used a \e{Gradient Ascent} algorithm with fixed step-size and added momentum, more precisely the \emph{Nesterov’s Accelerated Gradient} (NAG) procedure. The objective $\G$ being extremely expansive to compute because of the free energy term $\cF_\beta$, classical line search procedures are impracticable. Note that it would be interesting to implement a procedure which bypasses the usual line searches algorithms to only enforce approximate orthogonality of the gradient with the search direction.

The inner optimization for \pcref{prob:MOT_d} consists in determining the gradient $\nabla \G[\nu_t]$. As mentioned above, this can done using many different algorithms, the simplest of them being the \emph{Unadjusted Langevin Algorithm} (ULA; see \cite{roberts1996exponential}). Though the original ULA is not \e{a priori} tailored for compactly supported densities, we found it to perform rather well — at least in the case of the droplets. Evidently, an important bottleneck is that, as the temperature is decreased, it becomes harder to sample from the canonical ensemble.

\section{Proofs}\label{sec:dual_ext}

\subsection{On duality theory at zero temperature}\label{duality_zero}We briefly recall some important facts regarding the duality theory for \pcref{prob:MOT_Coulomb}. In \cite{buttazzo2018continuity, colombo2019continuity}, it is proved that the electrons cannot overlap at optimality, in the sense that there exists $\alpha > 0$ such that any minimizer $\Pro$ of \pcref{prob:MOT_Coulomb} is supported away from $D_\alpha$, \e{i.e.} $\Pro(D_\alpha) = 0$, where
\begin{align}\label{diag}
D_\alpha = \{ (\r_1, \dots, \r_N) : \exists i \neq j \text{ such that } |\r_i - \r_j| \leq \alpha \}.
\end{align}
In particular, one can substitute to $c$ the \e{truncated} Coulomb cost $c_\alpha$, where
\begin{align}
c_\alpha(\r_1, \dots, \r_N) = \sum_{1 \leq i < j \leq N} \min\left\{ \frac{1}{\alpha^{d-2}},  \frac{1}{|\r_i - \r_j|^{d-2}}\right\}.
\end{align}
Using this fact, one can prove \cite[Thm. 2.6]{buttazzo2018continuity} that \pcref{prob:MOT_Coulomb} admits the following dual formulation, which is equivalent to  \pcref{prob:MOT_Coulomb_d}:
\begin{align}\label[problem]{dual}\tag{D$_\alpha$}
F_{\rm SCE}(\rho) = \max_{\substack{v \text{ s.t.} \\\intRd |v| \rho \, < \, + \infty }} \left\{ E_{N, \alpha}(v) + \intRd v(\r) \rho(\r) \d \r \right\},
\end{align}
where $E_{N, \alpha}(v)$ is defined similarly to $E_N(v)$, only with the truncated cost $c_\alpha$ substituted to $c$, that is
\begin{align}\label{Eva}
E_{N, \alpha}(v) = \inf_{\r_1, \dots, \r_N} \left\{ c_\alpha(\r_1, \dots, \r_N) - \sum_{i = 1}^N v(\r_i) \right\}.
\end{align}
For brevity, we drop the subscript $\alpha$ in \eqref{Eva} from now on. Moreover, according to \cite[Lem. 3.3 and Thm. 3.6]{buttazzo2018continuity}, there exists a maximizer $v$ of \pcref{dual} which verifies
\begin{align}\label{eq_v}
v(\r) = \inf_{\r_2, \dots, \r_N} \left\{ c_\alpha(\r, \r_2, \dots, \r_N) - \sum_{i = 2}^N v(\r_i) \right\}.
\end{align}
Note that, for a Kantorovich potential $v$ which verifies \eqref{eq_v}, it holds that $E_N(v) = 0$. Moreover, we have the following lemma.
\begin{lemma}\label{limit_v}
Let $v$ be a Kantorovich potential for \pcref{prob:MOT_Coulomb_d} verifying \Cref{eq_v}. Then, $v$ is Lipschitz and uniformly bounded. Moreover, it satisfies the following limit
\begin{align}
\lim_{|\r| \to \infty} v(\r) = E_{N-1}(v) < 0
\end{align}
where, according to \eqref{Eva}, we have
$$
E_{N-1}(v) = \inf_{\r_2, \dots, \r_N} \left\{ c_\alpha(\r_2, \dots, \r_N) - \sum_{i = 2}^N v(\r_i) \right\}.
$$
\end{lemma}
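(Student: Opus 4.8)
The plan is to exploit the self-consistent equation \eqref{eq_v} directly. First I would establish that $v$ is Lipschitz: since $v$ satisfies \eqref{eq_v}, it is an infimum over a family of functions $\r \mapsto c_\alpha(\r, \r_2, \dots, \r_N) - \sum_{i=2}^N v(\r_i)$, and because we are using the \emph{truncated} cost $c_\alpha$, each such function has gradient bounded by a constant depending only on $\alpha$, $d$ and $N$ (the truncation $\min\{\alpha^{2-d}, |\cdot|^{2-d}\}$ is globally Lipschitz, with Lipschitz constant controlled by the derivative of $r \mapsto r^{2-d}$ at $r = \alpha$, namely $(d-2)\alpha^{1-d}$). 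An infimum of $L$-Lipschitz functions is $L$-Lipschitz, hence $v$ is Lipschitz with a constant independent of the particular minimizer. For uniform boundedness, I would argue that $v$ is bounded below because $c_\alpha \geq 0$ and, iterating \eqref{eq_v}, $v$ is controlled in terms of $\sup v$; conversely $v$ is bounded above since if $v$ were unbounded above along some sequence $\r^{(k)}$, then, plugging into the dual functional $E_N(v) + \intRd v\rho$ and using that $\intRd |v|\rho < \infty$ forces $v$ not to grow too fast on the bulk of $\rho$, while $E_N(v) \to -\infty$; more cleanly, boundedness follows from the fact that $v$ is Lipschitz and admits a finite limit at infinity, which is what the second part of the lemma asserts, so I would prove the limit first and deduce global boundedness from it together with continuity.

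The heart of the argument is the limit $\lim_{|\r| \to \infty} v(\r) = E_{N-1}(v)$. The key observation is that in \eqref{eq_v}, as $|\r| \to \infty$, the interaction terms $\min\{\alpha^{2-d}, |\r - \r_i|^{2-d}\}$ between the ``free'' particle at $\r$ and the others either vanish (if the $\r_i$ stay bounded) or, because of truncation, contribute at most $\alpha^{2-d}$ each. I would show that the minimizing configuration $(\r_2, \dots, \r_N)$ in \eqref{eq_v} must remain bounded: indeed, sending any $\r_i$ to infinity costs nothing in the pairwise terms but loses the gain $+v(\r_i)$, which is bounded above, so it is never strictly advantageous, and conversely one shows the infimum is attained on a compact set uniformly in $\r$. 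Then, for $|\r|$ large, the term $\sum_{i=2}^N \min\{\alpha^{2-d}, |\r - \r_i|^{2-d}\}$ tends to $0$ uniformly over the relevant bounded range of $(\r_2, \dots, \r_N)$, so \eqref{eq_v} reduces in the limit to
\begin{align}
\lim_{|\r| \to \infty} v(\r) = \inf_{\r_2, \dots, \r_N} \left\{ c_\alpha(\r_2, \dots, \r_N) - \sum_{i=2}^N v(\r_i) \right\} = E_{N-1}(v),
\end{align}
where on the right the cost $c_\alpha(\r_2, \dots, \r_N)$ now only involves the $N-1$ pairwise interactions among $\r_2, \dots, \r_N$, matching the stated formula. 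Making this interchange of limit and infimum rigorous is the main technical point; I would handle it by a two-sided estimate, using the uniform Lipschitz bound on $v$ to control the $\liminf$ from below and an explicit near-optimal test configuration (placing $\r_2, \dots, \r_N$ at a minimizer of $c_\alpha(\cdot) - \sum v(\cdot)$ far from $\r$) for the $\limsup$ from above.

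Finally, to see $E_{N-1}(v) < 0$: by the same reasoning applied one step further, or directly, one notes that $v$ takes positive values somewhere on $\supp(\rho)$ — otherwise $\intRd v \rho \leq 0$ and $E_N(v) = 0$ would give $F_{\rm SCE}(\rho) \leq 0$, contradicting $F_{\rm SCE}(\rho) > 0$ (the Coulomb cost is strictly positive and electrons cannot all coincide by the no-overlap bound). Choosing the test configuration in the definition of $E_{N-1}(v)$ to place $\r_2, \dots, \r_N$ near points where $v > 0$ and mutually separated by more than $\alpha$ (possible since $\rho$ is spread over a set of positive measure, shrinking if necessary, and $N \geq 2$) makes $c_\alpha(\r_2, \dots, \r_N) - \sum_{i=2}^N v(\r_i) < 0$, whence $E_{N-1}(v) < 0$. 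The main obstacle is the uniform control needed to pass the limit inside the infimum in \eqref{eq_v}; everything else is soft once the Lipschitz bound is in hand.
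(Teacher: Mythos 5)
Your Lipschitz argument (infimum of a family of uniformly Lipschitz functions, thanks to the truncation $c_\alpha$) is fine and is essentially what the cited reference does; likewise your $\limsup$ estimate (fix a near-optimal configuration $\r_2,\dots,\r_N$, let $|\r|\to\infty$ so the cross terms vanish, then take the infimum) matches the paper. The gap is in the $\liminf$ direction. You make the interchange of limit and infimum rest on the claim that the minimizing configuration in \eqref{eq_v} stays in a compact set, uniformly in $\r$. This is neither justified nor true in general: the infimum need not be attained, since sending a particle $\r_i$ to infinity can be exactly cost-neutral (this is precisely the degeneracy handled later in the proof of \Cref{thm:charge} through the chain $E_{N-1}(v_\ext)=\dots=E_{N-K}(v_\ext)<E_{N-K-1}(v_\ext)$), and your heuristic for it uses that $v$ is bounded above, which you propose to deduce \emph{afterwards} from the very limit being proved — a circularity. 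None of this machinery is needed: since the truncated interactions between $\r$ and the $\r_i$ are nonnegative, $c_\alpha(\r,\r_2,\dots,\r_N)\geq c_\alpha(\r_2,\dots,\r_N)$ pointwise, so \eqref{eq_v} immediately gives $v(\r)\geq E_{N-1}(v)$ for \emph{every} $\r$, with no compactness, no Lipschitz constant and no a priori bounds; combined with your $\limsup$ bound this yields the limit (the paper's \eqref{v_low}--\eqref{ls}).

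The strict negativity step also has a gap when $N\geq3$. From $F_{\rm SCE}(\rho)=\int v\rho>0$ you only get that $v>0$ on a set of positive $\rho$-measure, which may be contained in an arbitrarily small ball; placing the $N-1$ test points there costs about $\binom{N-1}{2}\alpha^{2-d}$, and ``mutually separated by more than $\alpha$'' does not make $c_\alpha$ small — you would need separations large compared with the (uncontrolled) size of the positive values of $v$, so $c_\alpha(\r_2,\dots,\r_N)-\sum_i v(\r_i)<0$ does not follow (for $N=2$ your argument is fine, since a single point carries no cost). The paper avoids this: letting all $\r_i\to\infty$, far apart, in $v(\r)\leq c_\alpha(\r,\r_2,\dots,\r_N)-\sum_{i\geq2}v(\r_i)$ and using $v\geq E_{N-1}(v)$ gives $E_{N-1}(v)\leq-(N-1)E_{N-1}(v)$, hence $E_{N-1}(v)\leq0$; and equality would force $v\equiv0$, which is impossible because then $F_{\rm SCE}(\rho)=E_N(0)=0$. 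Finally, note the paper does not reprove boundedness and Lipschitz continuity but cites the duality results of Buttazzo--Champion--De Pascale, whereas your sketch of global boundedness again leans on the limit and should be decoupled from it.
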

\begin{proof}[Proof of \Cref{limit_v}]
From \cite[Thm. 3.4 and Thm. 3.6]{buttazzo2018continuity}, we know that $v$ is uniformly bounded and Lipschitz. Now, using \Cref{eq_v}, for all $\r$ we have
\begin{align}\label{v_low}
v(\r) \geq \inf_{\r_2, \dots, \r_N} \left\{ c_\alpha(\r_2, \dots, \r_N) - \sum_{i = 2}^N v(\r_i) \right\} = E_{N-1}(v).
\end{align}
Moreover, for all $\r$ and $(\r_2, \dots, \r_N)$, we have
\begin{align}\label{ff}
v(\r) \leq c_\alpha(\r, \r_2, \dots, \r_N) - \sum_{i = 2}^N v(\r_i),
\end{align}
which entails that
\begin{align}\label{ls}
\limsup_{|\r| \to \infty} v(\r) \leq c_\alpha(\r_2, \dots, \r_N) - \sum_{i = 2}^N v(\r_i).
\end{align}
Taking the infimum with respect to $\r_2, \dots, \r_N$ then leads to the conclusion that $\lim_{|\r| \to \infty} v(\r) = E_{N-1}(v)$. Then, letting $|\r_i| \to \infty$ for all $i = 2, \dots,N$ in \eqref{ff}, we obtain
\begin{align}
E_{N-1}(v) \leq -(N-1)E_{N-1}(v),
\end{align}
which implies that $E_{N-1}(v) \leq 0$. If $E_{N-1}(v) = 0$, then, once again appealing to \eqref{v_low} and \eqref{ff} and letting $|\r_i| \to \infty$ for $i = 2, \dots, N$, we would obtain $v\equiv 0$, which is impossible. Therefore $E_{N-1}(v) < 0$.
\end{proof}

\subsection{Proof of \Cref{thm:charge}}\label{sec:proof:thm:charge}
Let us turn to the proof of \Cref{thm:charge} regarding the existence of a dual charge at zero temperature. As mentioned above, there exists a Kantorovich potential $v$ for \pcref{prob:MOT_Coulomb_d} which verifies 
\begin{equation}\label{eq:t0}
v(\r) = \inf_{\r_2, \dots, \r_N} \left\{ c_\alpha(\r, \r_2, \dots, \r_N) - \sum_{i = 2}^N v(\r_i) \right\}.
\end{equation}
We will show that this particular potential arises from a charge. Indeed, for any $\r_2, \dots, \r_N$, the function
$$
\r \mapsto  c_\alpha(\r, \r_2, \dots, \r_N) - \sum_{i = 2}^N v(\r_i)
$$
is superharmonic (see {\cite[Chap I.2]{land}}) and uniformly Lipschitz in the $\r_i$'s. Therefore, any $v$ which verifies \eqref{eq:t0} remains superharmonic as the pointwise infimum of a set of uniformly Lipschitz superharmonic functions. Now, we recall the following theorem:
\begin{theorem}[Riesz's decomposition theorem {\cite[Thm 1.24]{land}}]\label{thm:riesz}
Let $f : \R^d \to \R$ be a superharmonic function which admits a harmonic minorant, \e{i.e.} there exists a harmonic function $m : \R^d \to \R$ with $m(\r) \leq f(\r)$ for all $\r \in \R^d$. Then, there exists a positive measure $\mu \in \cM_+(\R^d)$ and a harmonic function $h : \R^d \to \R$  such that 
\begin{equation}\label{eq:riesz}
f(\r) = \mu \ast |\r|^{2-d} + h(\r) \quad \text{for all } \r\in \R^d.
\end{equation}
\end{theorem}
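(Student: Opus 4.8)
The plan is to prove the Riesz decomposition theorem by the classical argument: identify the \emph{Riesz measure} of $f$ as minus its distributional Laplacian, peel off from $f$ its \emph{greatest harmonic minorant} to reduce to a \emph{pure potential}, and then recover that pure potential as a Newtonian potential by exhausting $\R^d$ with balls. I take as standard the following potential-theoretic facts: a superharmonic function that is not identically $+\infty$ on a component lies in $L^1_{\mathrm{loc}}$ and satisfies the super-mean-value inequality; a distribution which is non-negative on non-negative test functions is a positive Radon measure (Riesz--Schwartz); Weyl's lemma; Harnack's principle (on a connected open set, a decreasing sequence of harmonic functions converges either to $-\infty$ identically or, locally uniformly, to a harmonic function); the elementary Riesz decomposition of a superharmonic function on a ball through its Green function (itself proved via the Poisson kernel); and domain monotonicity of Green functions, with $G_{B_R}(\r,\r') \uparrow G_{\R^d}(\r,\r') = \frac{1}{(d-2)|\mathbb{S}^{d-1}|}|\r - \r'|^{2-d}$ as $R \to \infty$.

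\textbf{Step 1 (Riesz measure) and Step 2 (reduction).} Since $f$ is $\R$-valued it is nowhere $+\infty$, hence $f \in L^1_{\mathrm{loc}}(\R^d)$. Regularizing, $f_\epsilon := f \ast \chi_\epsilon$ is smooth and superharmonic, $f_\epsilon \uparrow f$ as $\epsilon \downarrow 0$, $\Delta f_\epsilon \leq 0$ pointwise, and $f_\epsilon \to f$ in $L^1_{\mathrm{loc}}$; so for $0 \leq \phi \in C_c^\infty(\R^d)$, $\intRd f\,\Delta\phi\,\d\r = \lim_\epsilon \intRd (\Delta f_\epsilon)\,\phi\,\d\r \leq 0$. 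Hence $\nu := -\Delta f$ is a positive Radon measure. Next, for each $R > 0$, $f$ is bounded below on $\overline{B_R}$ by lower semicontinuity, so its greatest harmonic minorant $h_R$ on $B_R$ exists; for $R' > R$ the restriction $h_{R'}|_{B_R}$ is a harmonic minorant of $f$ on $B_R$, so $h_{R'} \leq h_R$ there, and the family $(h_R)_R$ is decreasing. The harmonic minorant $m$ from the hypothesis satisfies $m \leq h_R$ for all $R$, so by Harnack's principle $h := \lim_{R\to\infty} h_R$ is harmonic on $\R^d$ with $m \leq h \leq f$; and $h$ is the \emph{greatest} harmonic minorant of $f$ on $\R^d$, since any harmonic $g \leq f$ obeys $g|_{B_R} \leq h_R$, hence $g \leq h$. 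Set $p := f - h$: it is superharmonic, $p \geq 0$, $-\Delta p = \nu$, and its greatest harmonic minorant on $\R^d$ is $0$ — for a harmonic $k \leq p$ yields the harmonic minorant $h + k \leq f$ of $f$, whence $h + k \leq h$, i.e. $k \leq 0$.

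\textbf{Step 3 (recovering the potential).} Apply the Riesz decomposition on $B_R$ to $p$: with $\nu_R := \1_{B_R}\nu$ and $G_{B_R}\nu_R(\r) := \int_{B_R} G_{B_R}(\r,\r')\,\nu(\d\r')$, one has $p|_{B_R} = G_{B_R}\nu_R + k_R$, where $k_R$ is the greatest harmonic minorant of $p$ on $B_R$; since $p \geq 0$, $k_R \geq 0$, and exactly as above $(k_R)_R$ is decreasing, so $k_R \downarrow k$ with $k$ harmonic and $\geq 0$ (Harnack). But $k$ is a harmonic minorant of $p$ on $\R^d$, so $k \leq 0$ by Step 2; thus $k \equiv 0$. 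By monotone convergence and the convergence of the Green functions, $G_{B_R}\nu_R(\r) \uparrow \frac{1}{(d-2)|\mathbb{S}^{d-1}|}(\nu \ast |\r|^{2-d})(\r)$; and since $G_{B_R}\nu_R = p - k_R \leq p < \infty$, this limit is finite. Letting $R \to \infty$ in $p = G_{B_R}\nu_R + k_R$ gives $p = \frac{1}{(d-2)|\mathbb{S}^{d-1}|}\,\nu \ast |\r|^{2-d}$. Therefore $f = \mu \ast |\r|^{2-d} + h$ with $\mu := \frac{-\Delta f}{(d-2)|\mathbb{S}^{d-1}|} \in \cM_+(\R^d)$ and $h$ harmonic, which is the asserted decomposition.

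The only genuinely substantive point — and the place where the harmonic-minorant hypothesis is indispensable — is the globalization carried out in Steps 2 and 3: the Riesz measure $\nu$ is produced unconditionally by the local argument of Step 1, but it is precisely the existence of \emph{some} harmonic minorant that keeps the sequences $h_R$ and $k_R$ from decreasing to $-\infty$, which is in turn equivalent to the finiteness of the Newtonian potential $\nu \ast |\r|^{2-d}$ and to $f$ admitting the decomposition globally. Without the hypothesis this can fail outright, e.g. $f(\r) = -|\r|^2$. The remaining ingredients — mollification of superharmonic functions, Weyl's lemma, Harnack's principle, the Riesz decomposition on a ball, and monotone convergence — are all classical.
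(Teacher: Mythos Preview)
The paper does not prove this theorem: it is quoted verbatim from Landkof \cite[Thm.~1.24]{land} and used as a black box in the proof of \Cref{thm:charge}. So there is no ``paper's own proof'' to compare against.

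Your argument is a correct rendition of the classical proof (and is essentially the one Landkof gives): extract the Riesz measure $\nu = -\Delta f$ by mollification, construct the greatest harmonic minorant $h$ as the decreasing limit of the greatest harmonic minorants $h_R$ on balls (kept away from $-\infty$ by the hypothesis), and then identify the nonnegative potential $p = f - h$ with the Newtonian potential of $\nu$ by passing to the limit in the local Riesz decomposition $p|_{B_R} = G_{B_R}\nu_R + k_R$ using domain monotonicity of Green functions and monotone convergence. Your closing remark correctly isolates where the harmonic-minorant hypothesis is used. The proof is sound; there is nothing to add.
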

We are now ready to provide the proof of \Cref{thm:charge}.
\begin{proof}[Proof of \Cref{thm:charge}]
Let us consider $v$ a Kantorovich potential for \pcref{prob:MOT_Coulomb_d} which verifies \eqref{eq:t0}. As noted above, $v$ is superharmonic. Moreover, according to \Cref{limit_v}, $v$ is uniformly bounded. Therefore, according to \Cref{thm:riesz} above, there exists a positive measure $\rho_\ext \in \cM_+(\R^d)$ and an harmonic function $h$ such that $v(\r) = \rho_\ext \ast |\r|^{2-d} + h(\r)$. But $h$ must be bounded from above, so that \e{Liouville's theorem} entails that $h$ is constant. This provides the existence of a dual charge as stated by \Cref{thm:charge}. In fact, let us prove that the constant is provided by $E_{N-1}(v)$ the limit of $v$ at infinity (see \Cref{limit_v}). This is not entirely trivial, as the limit of a Coulomb potential need not to be zero. Nevertheless, we have the following result, whose proof is given below.
\begin{lemma}\label{limit_vc}
Let $\mu \in \cM_+(\R^d)$ be any locally finite measure such that $\mu \ast |\r|^{2-d}$ is bounded at the origin, \e{i.e.} $\intRd |\r|^{2-d}\mu(\d\r) < \infty$. Then 
\begin{align}
\liminf_{|\r| \to \infty} \mu \ast |\r|^{2-d} = 0.
\end{align}
\end{lemma}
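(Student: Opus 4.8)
The plan is to argue by contradiction: suppose $\liminf_{|\r| \to \infty} (\mu \ast |\r|^{2-d}) = 2\delta > 0$. Then there is some radius $R_0$ such that $\mu \ast |\r|^{2-d} \geq \delta$ for all $|\r| \geq R_0$. The strategy is to turn this pointwise lower bound into a lower bound on the average of the Coulomb potential over a large ball, and then to compare that average with an explicit computation of the Newtonian potential of $\mu$ averaged over spheres, which will force $\mu$ to have infinite mass — contradicting local finiteness together with the integrability assumption near the origin.

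Concretely, first I would recall the spherical-average identity for the Newtonian kernel: for a fixed source point $\y$, the average of $\r \mapsto |\r - \y|^{2-d}$ over the sphere $\partial B_r(0)$ equals $r^{2-d}$ if $|\y| \leq r$ and $|\y|^{2-d}$ if $|\y| > r$ (this is the classical Newton/Gauss mean-value property for superharmonic radial kernels; see \cite[Chap I.2]{land}). Integrating against $\mu(\d \y)$ and using Fubini (justified since everything is nonnegative), the spherical average of $\mu \ast |\r|^{2-d}$ over $\partial B_r$ equals
\begin{equation}
\fint_{\partial B_r} \mu \ast |\y|^{2-d} \, \d \sigma(\y) = r^{2-d}\, \mu(B_r) + \int_{\R^d \setminus B_r} |\z|^{2-d} \, \mu(\d \z).
\end{equation}
The second term tends to $0$ as $r \to \infty$ by dominated convergence, using precisely the hypothesis $\intRd |\z|^{2-d} \, \mu(\d \z) < \infty$. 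Hence for $r$ large the left-hand side is at most, say, $r^{2-d}\mu(B_r) + \delta/2$; but by our contradiction hypothesis the left-hand side is also $\geq \delta$ for $r \geq R_0$. Therefore $r^{2-d}\mu(B_r) \geq \delta/2$, i.e. $\mu(B_r) \geq (\delta/2)\, r^{d-2} \to \infty$ as $r \to \infty$.

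This already gives the contradiction in the following sharper form: since $\mu(B_r)$ grows at least like $r^{d-2}$, the tail integral $\int_{\R^d \setminus B_1} |\z|^{2-d}\,\mu(\d\z)$ must diverge. Indeed, writing this integral as $\int_1^\infty \left( \int_{\partial B_s} |\z|^{2-d} \, \d\mu \right) \d s$ is not quite the right slicing; instead I would use the layer-cake / Fubini identity $\int_{\R^d \setminus B_1} |\z|^{2-d} \, \mu(\d\z) = (d-2)\int_1^\infty s^{1-d}\,\mu(B_s \setminus B_1)\,\d s + \text{(boundary term)}$, and the bound $\mu(B_s) \gtrsim s^{d-2}$ makes the integrand $\gtrsim s^{-1}$, whose integral over $[1,\infty)$ diverges. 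This contradicts $\intRd |\z|^{2-d}\mu(\d\z) < \infty$, so the hypothesis $\liminf > 0$ is untenable and the lemma follows.

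The only real subtlety — the "hard part" — is making the passage from the pointwise bound $\mu\ast|\r|^{2-d} \geq \delta$ to the mass bound fully rigorous, i.e. correctly handling the spherical-average identity for the Newtonian kernel and the exchange of integrals; all of this is standard potential theory (Fubini–Tonelli on nonnegative integrands plus the mean-value property of $|\cdot|^{2-d}$), so no delicate estimate is actually needed, only bookkeeping. An alternative, perhaps cleaner, route avoids spheres entirely: restrict the convolution to the contribution of $\mu\lfloor B_{|\r|/2}$, note $\mu\ast|\r|^{2-d} \geq (|\r|/2)^{2-d}\mu(B_{|\r|/2})$ for such $\r$ is false in general (the kernel is not monotone that way) — so one does need the averaging argument above, which is why I would present it via the Gauss mean-value property as the main step.
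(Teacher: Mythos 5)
Your argument is correct, and it shares the paper's central ingredient (Newton's theorem / the spherical mean-value property of the kernel), but it extracts the conclusion by a genuinely different mechanism. The paper radializes $\mu$ by averaging over rotations — which is exactly your spherical-average identity read through Fubini — and then argues \emph{directly}: it selects a sparse sequence of radii $r_n$ with $\mu(B_{r_{n-1}})\le\sqrt{r_n}$, so that convergence of the tail integral forces the telescoping shell contributions to vanish and hence $r_n^{2-d}\mu(B_{r_n})\to 0$, giving a sequence along which the symmetrized potential tends to zero. You instead argue by contradiction: a positive liminf, combined with the vanishing of the tail term $\int_{\R^d\setminus B_r}|\r'|^{2-d}\mu(\d\r')$, yields $\mu(B_r)\ge c\,r^{d-2}$ for large $r$, and a layer-cake computation then makes $\intRd|\r|^{2-d}\mu(\d\r)$ diverge, contradicting the hypothesis. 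Both proofs are complete; yours avoids the paper's subsequence bookkeeping and isolates the clean quantitative statement that the integrability hypothesis forces $\liminf_{r\to\infty}r^{2-d}\mu(B_r)=0$, while the paper's version is constructive, exhibiting explicit radii along which the potential decays. One cosmetic point: no boundary term is actually needed in your layer-cake identity — writing $|\r'|^{2-d}=(d-2)\int_{|\r'|}^{\infty}s^{1-d}\,\d s$ and applying Tonelli gives the exact formula — but since you only use the resulting lower bound, this does not affect the argument.
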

In fact, because $v$ admits a limit at infinity, \Cref{limit_vc} implies the stronger result that $\rhoext \ast |\r|^{2-d}$ converges to zero everywhere at infinity
\begin{align}\label{limit_vc0}
\lim_{|\r| \to \infty} \rhoext \ast |\r|^{2-d} = 0.
\end{align}
Let us prove that the total mass of the dual charge $\rhoext$ cannot exceed $N-1$. In what follows, we denote $v_\ext(\r) = \rhoext \ast |\r|^{2-d}$. Note that, according to \eqref{limit_vc0} and \Cref{limit_v}, we have for all $\r \in \R^d$
\begin{equation}\label{eq:vvext}
v(\r) = v_\ext(\r) + E_{N-1}(v).
\end{equation}
Moreover, for $E_N(v) = 0$, we have the following inequality for all $\r$ and $(\r_2, \dots, \r_N)$
\begin{align}\label{def:step1}
v_\ext(\r) \leq c_{\alpha}(\r, \r_2, \dots, \r_N) - \sum_{i = 2}^N v_\ext(\r_i)- E_{N}(v_\ext).
\end{align}
We first claim that $E_{N}(v_\ext) = E_{N-1}(v_\ext)$. Indeed, according to \eqref{eq:vvext} and still appealing to the fact that $E_N(v) = 0$, we have
\begin{equation}
E_N(v_\ext) = N E_{N-1}(v).
\end{equation}
Furthermore, we have
\begin{equation}
E_{N-1}(v) = E_{N-1}(v_\ext) - (N-1)E_{N-1}(v).
\end{equation}
This entails that $N E_{N-1}(v) = E_{N-1}(v_\ext)$, and ultimately implies the claim that $E_{N}(v_\ext) = E_{N-1}(v_\ext)$. Now, because $v_\ext$ vanishes at infinity, we have the following sequence of inequalities
\begin{align}
E_{N-K}(v_\ext) \leq E_{N-K+1}(v_\ext) \quad \text{for all } K = 1, \dots, N-1,
\end{align}
where the limiting case $K = N-1$ corresponds to the case where all electrons but one are sent to infinity, that is
$$
E_{1}(v_\ext) = \inf_{\r} \left\{-v_\ext(\r)\right\}.
$$
As the supremum of $v$ cannot be attained at infinity, we deduce the existence of a lowest $K \in \{1, \dots, N-1\}$ such that 
\begin{align}
E_{N-1}(v_\ext) = E_{N-2}(v_\ext) = \dots = E_{N-K}(v_\ext) < E_{N-K-1}(v_\ext).
\end{align}
The last inequality implies that the infimum $E_{N-K}(v_\ext)$ is attained inside a compact set for some $\hat{\r}_{1}, \dots, \hat{\r}_{N - K }$. Now, plugging $\hat{\r}_{1}, \dots, \hat{\r}_{N - K}$ into \eqref{def:step1}, we obtain
\begin{align}
v_\ext(\r) \leq \sum_{k = 1}^{N-K} \frac{1}{|\r - \hat{\r}_i|^{d-2}}, \quad \text{for all } \r \in \R^d,
\end{align}
which entails, as claimed in \Cref{thm:charge}, that 
\begin{align}
\intRd \rhoext \leq N-K \leq N-1.
\end{align}

Finally, for the last item of \Cref{thm:charge}, namely that we can suppose $\supp(\rhoext) \subset \Omega$ in the case where $\Omega$ is bounded, it suffices to consider the \e{balayage} of $\rhoext$ onto $\Omega$, as stated in the following technical theorem.
\begin{theorem}[Balayage {\cite[Thm 4.2, Thm 4.4]{land}}]\label{thm:bal} Let $G \subset \R^d$ be a region with compact boundary. Given any $\mu \in \cM_+(\R^d)$ such that $\supp(\mu) \subset \overline{G}$, there exists a measure $\nu \in \cM_+(\R^d)$, the so-called \e{balayage measure of $\mu$ onto $\partial G$}, which verifies that $\supp(\nu) \subset \partial G$, and that $\nu \ast |\r|^{d-2} = \mu \ast |\r|^{d-2}$ for all $\r \in \R^d \setminus G$ and $\nu \ast |\r|^{d-2} \leq \mu \ast |\r|^{d-2}$ for all $\r \in \overline{G}$. Moreover, we have $\intRd \nu \leq \intRd \mu$.
\end{theorem}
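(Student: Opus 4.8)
The plan is to invoke the classical \emph{balayage} (sweeping-out) construction of potential theory, first for a single Dirac mass and then by superposition. Decompose $\mu = \mu|_{\partial G} + \mu|_{G}$; the boundary part is left untouched, and only $\mu|_G$ — the charge sitting in the open set $G$ — is swept onto $\partial G$. For a point $x \in G$, let $\omega_x \in \cM_+(\R^d)$ denote the harmonic measure of the domain $G$ at $x$, defined, when $\partial G$ is irregular, through the Perron--Wiener--Brelot generalized solution of the Dirichlet problem on $G$ (equivalently, the hitting distribution on $\partial G$ of Brownian motion started at $x$). By construction $\supp(\omega_x) \subset \partial G$ and $\omega_x(\partial G) \leq 1$, with equality when $G$ is bounded; in the unbounded case mass can be lost only ``at infinity'', since $\partial G$ is compact. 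Now fix $\r \notin \overline{G}$: the kernel $x' \mapsto |\r - x'|^{2-d}$ is harmonic and bounded on $G$ and continuous on $\overline G$, so it is the Perron solution of its own boundary values, whence $\omega_x \ast |\r|^{2-d} = |\r - x|^{2-d} = \delta_x \ast |\r|^{2-d}$; for $\r \in G$ the same kernel is superharmonic and bounded below on $G$, hence dominates the corresponding Perron solution, giving $\omega_x \ast |\r|^{2-d} \leq \delta_x \ast |\r|^{2-d}$; and at the regular points of $\partial G$ the first identity extends by continuity from $\R^d \setminus \overline G$.

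I would then superpose. Setting
\[
\nu \;:=\; \mu|_{\partial G} \;+\; \int_{G} \omega_x \, \mu(\d x),
\]
weak measurability of $x \mapsto \omega_x$ makes $\nu$ a positive locally finite measure carried by $\partial G$, so $\supp(\nu) \subset \partial G$. Interchanging the $\omega_x$-integral with the $\mu$-integral by Tonelli's theorem, the two pointwise relations above integrate to $\nu \ast |\r|^{2-d} = \mu \ast |\r|^{2-d}$ on $\R^d \setminus G$ and $\nu \ast |\r|^{2-d} \leq \mu \ast |\r|^{2-d}$ on $\overline G$. Finally, integrating $\omega_x(\partial G) \leq 1 = \delta_x(\R^d)$ against $\mu|_G$ and adding back $\mu(\partial G)$ gives $\intRd \nu \leq \mu(\overline G) = \intRd \mu$.

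The main obstacle lies entirely in the first step when $\partial G$ is \emph{irregular}: there one only gets $\omega_x \ast |\r|^{2-d} = |\r - x|^{2-d}$ for quasi-every $\r \in \partial G$, and upgrading this to a clean statement on all of $\R^d \setminus G$ requires the maximum principle for Newtonian potentials together with lower semicontinuity of potentials — delicate but standard, and in our intended application $\mu \ast |\r|^{2-d}$ is continuous, which removes the difficulty on $\partial G$ (moreover the exceptional set of irregular points is polar, hence negligible under the superposition). A secondary, milder point is the escape of mass to infinity when $G$ is unbounded — precisely what turns the mass equality into the inequality $\intRd \nu \leq \intRd \mu$ — which one handles by exhausting $G$ by the bounded domains $G \cap B_R$ and letting $R \to \infty$. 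An alternative to this Perron/probabilistic route is Cartan's variational one: characterise $\nu$ via a projection, in the Hilbert space of finite-energy measures, onto the cone of positive measures carried by $\overline{\R^d \setminus G}$, whose equilibrium (Euler--Lagrange) conditions yield exactly the equality/inequality pair together with the support constraint. All of this is carried out in \cite[Ch. IV]{land}, so we content ourselves with this sketch.
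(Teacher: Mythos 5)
There is nothing in the paper to compare your argument against: \Cref{thm:bal} is not proved there at all, but imported verbatim from \cite[Thm 4.2, Thm 4.4]{land} and used as a black box in the proofs of \Cref{thm:charge} and \Cref{thm:charge_conv}. Your sketch supplies the standard construction that underlies that citation: leave $\mu|_{\partial G}$ untouched, replace each $\delta_x$ with $x\in G$ by the harmonic measure $\omega_x$ of $G$ at $x$, and superpose. The individual steps are sound: for $\r\notin\overline G$ the kernel $x'\mapsto|\r-x'|^{2-d}$ is bounded, harmonic on $G$ and continuous on $\overline G$, hence reproduced by $\omega_x$ (in the unbounded case one also uses that it vanishes at infinity, which your transience/escape-of-mass remark covers); superharmonicity gives the inequality on $\overline G$; $\omega_x(\partial G)\le 1$ integrated against $\mu|_G$ gives the mass bound; Tonelli justifies exchanging the integrals; and you correctly place the genuinely technical work (irregular points, the Perron--Wiener--Brelot machinery, or Cartan's energy projection) where Landkof carries it out. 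Two caveats, both minor for the paper's purposes. First, equality for \emph{all} $\r\in\R^d\setminus G$ is, in full generality, only true quasi-everywhere on $\partial G$: at irregular boundary points it can fail (think of a polar slit contained in $\partial G$), and your suggested fix via continuity of $\mu\ast|\r|^{2-d}$ does not repair such cases either; Landkof states the equality up to a polar set, and this weaker form is all the paper ever needs, since the swept charge is only tested against objects that do not charge polar sets. Second, for a merely locally finite $\mu$ of infinite total mass on an unbounded $G$, your superposed measure could have infinite mass on the compact set $\partial G$; this is immaterial in the application, where $\intRd\rhoext\le N-1$. With those provisos, your proposal is correct and in fact more informative than the paper's own treatment, which consists of the citation alone.
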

This concludes the proof of \Cref{thm:charge}.
\end{proof}
We now turn to the proof of \Cref{limit_vc}.
\begin{proof}[Proof of \Cref{limit_vc}]
It suffices to consider the case where $\mu$ is radial, that is $\mu(\cR A) = \mu(A)$ for all Borel sets $A \subset \R^d$ and all rotations $\cR \in SO(d)$. Indeed, we always have
\begin{equation}\label{eq:radial_liminf}
0 \leq \liminf_{|\r| \to \infty} \mu \ast |\r|^{2-d} \leq \liminf_{|\r| \to \infty} \widetilde{\mu} \ast |\r|^{2-d}
\end{equation}
where $\widetilde{\mu} \in \cM_+(\R^d)$ is the radial measure defined as 
\begin{equation}
\widetilde{\mu}(A) = \int_{SO(d)} \mu(\cR A) \nu(\d\cR).
\end{equation}
where $\nu$ is the \e{Haar measure} of $SO(d)$. Therefore, let us suppose that $\mu$ is radial.
According to \e{Newton's theorem} (see, \emph{e.g.} \cite[Thm 9.7]{lieb1996analysis}), we have 
\begin{equation}\label{eq:newton}
\mu \ast |\r|^{2-d} = \frac{1}{|\r|^{d-2}}\int_{B_{|\r|}} \mu(\d \r') + \int_{\R^d \setminus B_{|\r|}} \frac{\mu(\d \r')}{|\r'|^{d-2}}.
\end{equation}
Let $(\r_n)_{n \geq 0}$ be any sequence such that $|\r_n| \to \infty$ as $n \to \infty$, and suppose that $\r_0 = 0$. We write $r_n = |\r_n|$ and $\mu(r_n) = \mu(B_{r_n})$. Up to a subsequence, since $\mu$ is locally finite, we can suppose that 
\begin{equation}\label{eq:hypothesis}
\mu(r_{n-1}) \leq \sqrt{r_n}, \quad \text{for all } n \geq 1.
\end{equation}
But, we can write
\begin{equation}
\int_{\R^d} \frac{\mu(\d \r')}{|\r'|^{d-2}}  = \sum_{n \geq 1} \int_{B_{r_n} \setminus B_{r_{n-1}}} \frac{\mu(\d \r')}{|\r'|^{d-2}} \geq \sum_{n \geq 1} \frac{1}{r_{n}} (\mu(r_n) - \mu(r_{n-1})).
\end{equation}
By assumption, $\int_{\R^d} |\r'|^{2-d} \mu(\d \r') < \infty$, so it must be that 
\begin{equation}
\frac{1}{r_{n}} (\mu(r_n) - \mu(r_{n-1})) \xrightarrow[n \to \infty]{} 0.
\end{equation}
But, according to \eqref{eq:hypothesis}, we have
\begin{equation}
\frac{1}{r_n} \mu(r_{n-1}) \leq \frac{1}{\sqrt{r_n}} \xrightarrow[n \to \infty]{} 0,
\end{equation}
so that
\begin{equation}\label{eq:key}
\frac{1}{r_n} \mu(r_n) \xrightarrow[n \to \infty]{} 0.
\end{equation}
Now, still appealing to Newton's theorem, we have
\begin{equation}
 \mu \ast |\r|^{2-d}(\r_n) = \frac{1}{r_n} \mu(r_n) + \int_{\R^d \setminus B_{r_n}}\frac{\mu(\d \r')}{|\r'|^{d-2}}  \xrightarrow[n \to \infty]{} 0,
\end{equation} 
which yields the thesis of \Cref{limit_vc}.
\end{proof}

\subsection{Proof of \Cref{thm:charge_unique}}\label{sec:thm:charge_unique}The uniqueness of the dual charge is intrinsically linked to that of the uniqueness of the Kantorovich potential — and, as such, follows from the following proposition:
\begin{proposition}\label{prop:vunique}
Let $\rho \in L^1(\R^d, \R_+)$ with $\intRd \rho = N \geq 2$, and suppose that $\rho$ has connected support $\Omega$. Let $v$ and $w$ be two Lipschitz Kantorovich potentials for \pcref{prob:MOT_Coulomb_d}. Then $v$ and $w$ are equal up to an additive constant on the support $\Omega$ of $\rho$. 
\end{proposition}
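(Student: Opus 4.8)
The strategy is to combine an optimal transport plan with complementary slackness and the almost-everywhere differentiability of Lipschitz functions. I would first fix a minimizer $\Pro$ of \pcref{prob:MOT_Coulomb}; by the facts recalled in \Cref{duality_zero} one may take $\Pro$ supported away from the diagonal set $D_\alpha$ and, choosing the truncation parameter slightly below the uniform separation gap, one may moreover assume that $c$ and $c_\alpha$ agree, and are smooth, on a neighborhood of $\supp(\Pro)$; all one-particle marginals of $\Pro$ equal $\rho/N$. Since adding a constant $\lambda$ to a potential changes $E_{N,\alpha}(v)$ into $E_{N,\alpha}(v)-N\lambda$ and $\intRd v\rho$ into $\intRd v\rho+N\lambda$, I may normalize $v$ and $w$ by additive constants (which affects nothing in the statement) so that $E_{N,\alpha}(v)=E_{N,\alpha}(w)=0$; both then attain the supremum in \pcref{dual}, whence $\intRd v\rho=\intRd w\rho=F_{\rm SCE}(\rho)$.

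Complementary slackness comes next. From $\int c_\alpha\,\d\Pro=F_{\rm SCE}(\rho)=\int\sum_i v(\r_i)\,\d\Pro$ together with the pointwise bound $c_\alpha(\r_1,\dots,\r_N)-\sum_i v(\r_i)\ge E_{N,\alpha}(v)=0$, the integrand must vanish $\Pro$-a.e., hence on all of $\supp(\Pro)$ by continuity; the same holds for $w$. Subtracting, the Lipschitz function $u:=v-w$ satisfies $\sum_{i=1}^N u(\r_i)=0$ at every point of $\supp(\Pro)$.

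From the minimization property I would then extract a gradient identity. For $\Pro$-a.e.\ $(\r_1,\dots,\r_N)$ this point globally minimizes $p\mapsto c_\alpha(p)-\sum_i v(p_i)$ on $\R^{dN}$, so $\r_1$ globally minimizes $\r\mapsto c_\alpha(\r,\r_2,\dots,\r_N)-v(\r)$ on $\R^d$, and likewise with $w$ in place of $v$. Near such a point $\r\mapsto c_\alpha(\r,\r_2,\dots,\r_N)$ is $C^1$, while $v$ and $w$, being Lipschitz, are differentiable $\mathcal{L}^d$-a.e., hence $\rho$-a.e.\ (as $\rho\in L^1$), hence at $\Pro$-a.e.\ point (the first marginal of $\Pro$ being $\rho/N$); there the vanishing of the first-order derivative gives $\nabla v(\r_1)=\nabla_{\r}c_\alpha(\r_1,\dots,\r_N)=\nabla w(\r_1)$. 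Pushing forward along the first coordinate, $\nabla u=0$ holds $\rho$-a.e.\ on $\Omega$.

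The remaining point, which I expect to be the crux, is to upgrade ``$\nabla u=0$ $\rho$-a.e.'' to ``$u$ is constant on $\Omega$'', and this is precisely where the connectedness of $\Omega$ enters. A Lipschitz function with vanishing gradient on a connected open set is constant there, so $u$ is constant on each connected component of the interior of $\{\rho>0\}$, and one must glue these constants together over $\Omega$. When $\rho$ is positive a.e.\ on $\Omega$ — the situation in all the examples of the paper — this is immediate, since then $\nabla u=0$ $\mathcal{L}^d$-a.e.\ on the connected set $\Omega$. To treat a general $\rho$ one has to bring in the relation $\sum_i u(\r_i)=0$ on $\supp(\Pro)$, whose projection on any coordinate has full $\rho$-measure in $\Omega$, in order to bridge the regions where $\rho$ degenerates and conclude that $u$ takes a single value $\rho$-a.e.; continuity of $u$ together with $\Omega=\supp(\rho)$ then forces $u\equiv\mathrm{const}$ on $\Omega$, which is the assertion of \Cref{prop:vunique} (and hence, by taking distributional Laplacians of the two potentials, of \Cref{thm:charge_unique} on the interior of $\Omega$).
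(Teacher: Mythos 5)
Your argument is correct in substance and follows the same overall strategy as the paper, but it is more self-contained at the key step. Where the paper simply invokes \cite[Thm 1.15]{santambrogio2015optimal} (a two-marginal statement) to get $\nabla v=\nabla w$ a.e.\ on $\Omega$, you re-derive the multimarginal gradient identity from scratch: fix an optimal plan $\Pro$ supported away from $D_\alpha$ with the truncation chosen strictly below the separation gap, use attainment and complementary slackness to get $c_\alpha-\sum_i v(\r_i)=0$ on $\supp(\Pro)$ (and likewise for $w$), and then read off $\nabla v(\r_1)=\nabla_{\r}c_\alpha(\r_1,\dots,\r_N)=\nabla w(\r_1)$ from the first-order condition at $\Pro$-a.e.\ point where both Lipschitz potentials are differentiable. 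This is exactly the mechanism hidden behind the paper's citation, and spelling it out in the $N$-marginal, truncated-cost setting is a genuine service rather than a detour. On the final step both proofs rely on ``Lipschitz $+$ vanishing gradient $+$ connected $\Rightarrow$ constant''; you are right that the honest conclusion of the slackness argument is $\nabla(v-w)=0$ only $\rho$-a.e., which coincides with Lebesgue-a.e.\ on $\Omega$ when $\rho>0$ a.e.\ there but not for degenerate densities (fat-Cantor-type sets where $\rho$ vanishes on a positive-measure subset of its support), a point the paper's proof silently elides by identifying the a.e.\ gradient with the distributional one on $\Omega$. Your proposed bridge via $\sum_i(v-w)(\r_i)=0$ on $\supp(\Pro)$ is only sketched and it is not clear it closes that case (the relation constrains sums of values along the plan, not individual values), so in the fully general setting your proof and the paper's share the same unresolved point; under the implicit assumption that $\rho$-a.e.\ statements upgrade to a.e.\ statements on $\Omega$, your argument is complete and matches the paper's conclusion, including the passage to \Cref{thm:charge_unique} by taking distributional Laplacians on the interior of $\Omega$.
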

\begin{proof}[Proof of \Cref{prop:vunique}]
It follows from \cite[Thm 1.15]{santambrogio2015optimal} that, for any two almost-everywhere differentiable Kantorovich potentials $v$ and $w$, we have
\begin{align}
\nabla v(\r) = \nabla w(\r) \quad\text{for almost-all } \r \in \Omega,
\end{align}
where the gradients $\nabla v$ and $\nabla w$ are to be understood in the \e{Fréchet} sense. Since Lipschitz functions are differentiable almost-everywhere according to \e{Rademacher's theorem}, the above equality makes sense. Furthermore, it is known \cite{evans1998partial} that the gradient of a Lipschitz function identifies with its \e{distributional} gradient. Therefore, it must be that
\begin{align}\label{eee}
v(\r) = w(\r) + c, \quad\text{for all }\r \in \Omega, \quad c \in \R.
\end{align}
\end{proof}
\begin{proof}[Proof of \Cref{thm:charge_unique}]
This follows immediately from \Cref{prop:vunique} by considering the distributional Laplacian in \eqref{eee}.
\end{proof}
\subsection{On duality theory at positive temperature}\label{duality_theory_temp}
We recall important facts regarding the duality theory for \pcref{prob:MOT_Coulomb_temp}. As mentioned above, the SCE problem at positive temperature can be viewed as the Legendre transform 
\begin{align}\label{eq:dual} 
F_{\SCE, \beta}(\rho) = \sup_{v} \left\{\cF_\beta(v) + \intRd v(\r) \rho(\r) \d \r \right\},
\end{align}
where $\cF_\beta(v)$ is the Helmholtz free energy of the canonical ensemble with external potential (minus) $v$, that is $$\cF_\beta(v) = -\beta^{-1} \ln z_{\beta}(v)
$$ where $z_\beta(v)$ is defined as the volume $\int_{\R^{dN}} \d G_{\beta}(v)$ of the configurational canonical ensemble $G_{\beta}(v)$ defined as the Gibbs-Boltzmann measure whose density with respect to $\mu^{\otimes N}$ is given by
\begin{equation}\label{GBm}
\d G_\beta(v)(\r_1, \dots, \r_N) = \exp\left[\textstyle- \beta\left( c(\r_1, \dots, \r_N)-\sum_{i=1}^N v(\r_i)\right)\right] \d \mu^{\otimes N}.
\end{equation}

Appealing to the strict concavity of \pcref{prob:MOT_Coulomb_temp_d}, one can formally take the functional derivative of the objective and solve for the first-order optimality condition. One finds 
\begin{align}\label{opt_cond}
\frac{\delta}{\delta v(\r)} \left( \cF_\beta(v) + \intRd v(\r) \rho(\r) \d \r \right) = \rho_{G_{\beta}(v)}(\r) - \rho(\r)
\end{align}
where $\rho_{G_{\beta}(v)}$ is the one-particle density of the ensemble $G_{\beta}(v)$, that is
\begin{align}\label{1_particle_density_ce}
\rho_{G_{\beta}(v)}(\r) = N \int_{\R^{d(N-1)}} {G_{\beta}(v)}(\r, \d\r_2, \dots, \d\r_N).
\end{align}
Therefore \pcref{prob:MOT_Coulomb_temp_d} amounts to finding the unique (up to an additive constant) potential $v_\beta$ such that the one-particle density of the associated canonical ensemble is $\rho$. We will always write $G_\beta$ for $G_\beta(v_\beta)$.

\begin{convention}\label{conv:1}
Up to an additive constant, we will always assume that $v_\beta$ verifies $z_\beta(v_\beta) = N$. 
\end{convention} 
Note that the first-order optimality condition $\rho_{G_\beta} = \rho$ rewrites under the above convention as the fixed-point equation
\begin{equation}\label{eq:Vbetaeq}
v_\beta(\r) = -\beta^{-1}\ln \int_{\R^{d(N-1)}} G_{\beta}^\r(\r_2, \dots, \r_N) \d\mu^{\otimes (N-1)}(\r_2, \dots, \r_N)
\end{equation}
for almost all $\r \in \Omega$, where $\Omega$ is the support of $\rho$ and where $G_\beta^\r$ is defined as
\begin{equation}\label{def:Gbr}
G_\beta^\r(\r_2, \dots, \r_N) = \exp\left[\textstyle- \beta\left( c(\r, \r_2, \dots, \r_N)-\sum_{i=2}^N v_\beta(\r_i)\right)\right].
\end{equation} 

\begin{convention}\label{conv:2}
Since in \eqref{GBm} the Gibbs measure $G_\beta$ is defined only with respect to $\rho$, in the first-order optimality condition \eqref{opt_cond} one can free modified $v_\beta$ outside of the support $\Omega$ of $\rho$. Furthermore, although \Cref{eq:Vbetaeq} is only valid inside of $\Omega$, the right-hand side is well-defined for all $\r$ and can be used to extend $v_\beta$ over the entire space. In what follows we will always suppose that $v_\beta$ is defined everywhere according to \Cref{eq:Vbetaeq}.
\end{convention} 

Finally, let us indicate that the equality at \eqref{eq:dual} between \pcref{prob:MOT_Coulomb_temp} and  \pcref{prob:MOT_Coulomb_temp_d} follows from classical Convex Optimization theorem, see \emph{e.g.} \cite[Thm 5.17]{simon2011convexity}, and that the above argument leading to the equation \eqref{eq:Vbetaeq} was made rigorous in \cite{chayes1984inverse} under the assumption that (\e{i.e.} see \eqref{A1})
\begin{align}
 \iint_{\R^d \times \R^d} \frac{\rho(\d \r) \rho(\d \r')}{|\r - \r'|^{d-2}} < \infty.
\end{align}
\subsection{Proof of \Cref{thm:charge_temp}}\label{sec:thm:charge_temp} Let us start with the following lemma:
\begin{lemma}\label{lem:Vsmooth}
Let $\rho \in L^1(\R^d, \R_+)$ with $\intRd \rho = N \geq 2$ be such that the assumptions \eqref{A1} and \eqref{A2} are verified. Then, the unique maximizer $v_\beta$ of \pcref{prob:MOT_Coulomb_temp_d} under \Crefrange{conv:1}{conv:2} is a twice continuously differentiable superharmonic function, \emph{i.e.} $v_\beta \in C^2(\R^d)$ and $-\Delta v_\beta \geq 0$. \end{lemma}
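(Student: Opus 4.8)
The plan is to read off all three assertions from the fixed-point equation \eqref{eq:Vbetaeq}, which under \Crefrange{conv:1}{conv:2} holds on all of $\R^d$ and expresses $v_\beta = -\beta^{-1}\ln f$, where, isolating the $\r$-dependence of the cost,
\[
f(\r) = \int_{\R^{d(N-1)}} h_\r(\r_2,\dots,\r_N)\,g(\r_2,\dots,\r_N)\,\d\mu^{\otimes(N-1)},
\qquad h_\r := \exp\Big[-\beta\sum_{i=2}^N |\r-\r_i|^{2-d}\Big],
\]
and $g(\r_2,\dots,\r_N) = \exp\big[-\beta\big(\sum_{2\le i<j\le N}|\r_i-\r_j|^{2-d} - \sum_{i=2}^N v_\beta(\r_i)\big)\big]$ carries no $\r$-dependence. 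I would proceed in three steps: (a) an a priori $L^\infty$-bound on $v_\beta$; (b) smoothness of $f$, hence of $v_\beta$, by differentiation under the integral sign; (c) the sign of $\Delta v_\beta$ through a variance identity.

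For step (a): since $\mu^{\otimes(N-1)}$ is a probability measure, \emph{Jensen's inequality} for the convex function $-\ln$ applied to \eqref{eq:Vbetaeq} gives, for every $\r$,
\[
v_\beta(\r) \le \int_{\R^{d(N-1)}} c(\r,\r_2,\dots,\r_N)\,\d\mu^{\otimes(N-1)} - (N-1)\int_{\R^d} v_\beta\,\d\mu .
\]
The first term equals $\tfrac{N-1}{N}(\rho\ast|\r|^{2-d})(\r) + \binom{N-1}{2}N^{-2}D(\rho)$, which is bounded uniformly in $\r$ by \eqref{A1} and \eqref{A2}, while $\int_{\R^d} v_\beta\,\d\mu$ is a finite constant since the maximizer is admissible, $\int_{\R^d}|v_\beta|\rho<\infty$ (cf.\ \S\ref{duality_theory_temp}). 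Hence $v_\beta\le C_+$ on $\R^d$. Inserting this back into \eqref{eq:Vbetaeq} and using $c\ge0$, so that the Gibbs weight is $\le\exp[\beta\sum_{i\ge2}v_\beta(\r_i)]$, yields the matching lower bound $v_\beta(\r)\ge -\tfrac{N-1}{\beta}\ln\int e^{\beta v_\beta}\d\mu\ge -(N-1)C_+$. Thus $v_\beta\in L^\infty(\R^d)$, and in particular $0<g\le e^{\beta(N-1)\|v_\beta\|_{L^\infty}}$.

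For steps (b)–(c): the key observation is that $h_\r$, together with every one of its $\r$-derivatives, extends to a bounded continuous function of $(\r,\r_2,\dots,\r_N)$, because differentiating $e^{-\beta|\r-\r_i|^{2-d}}$ only produces negative powers of $|\r-\r_i|$, which are absorbed by the exponential factor since $\sup_{s>0} s^{-m}e^{-\beta s^{2-d}}<\infty$ for every $m\ge0$. Together with the bound on $g$ from step (a) and the finiteness of $\mu^{\otimes(N-1)}$, this furnishes on every ball an integrable dominating function for each $\r$-derivative of $h_\r g$, so one may differentiate under the integral sign: $f\in C^\infty(\R^d)$ (in particular $C^2$), and $f>0$ everywhere since $h_\r g>0$, whence $v_\beta=-\beta^{-1}\ln f\in C^\infty(\R^d)$. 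For superharmonicity, write $\phi(\r):=\sum_{i=2}^N|\r-\r_i|^{2-d}$ (so $h_\r=e^{-\beta\phi(\r)}$) and let $\langle\cdot\rangle_\r$ denote the average against the probability measure on $\R^{d(N-1)}$ proportional to $h_\r g\,\d\mu^{\otimes(N-1)}$. Since the Coulomb kernel is harmonic off its pole while $e^{-\beta|\r-\r_i|^{2-d}}$ vanishes there to infinite order, one has $\Delta_\r e^{-\beta\phi(\r)}=\beta^2|\nabla_\r\phi(\r)|^2 e^{-\beta\phi(\r)}$ as continuous functions — no Dirac mass appears — and then, using $\partial_k f=-\beta f\langle\partial_k\phi\rangle_\r$ and $\Delta f=\beta^2 f\langle|\nabla\phi|^2\rangle_\r$,
\[
\Delta v_\beta = -\frac1\beta\Big(\frac{\Delta f}{f}-\frac{|\nabla f|^2}{f^2}\Big) = -\beta\sum_{k=1}^d\big(\langle(\partial_k\phi)^2\rangle_\r-\langle\partial_k\phi\rangle_\r^2\big) = -\beta\sum_{k=1}^d\operatorname{Var}_{\langle\cdot\rangle_\r}(\partial_k\phi) \le 0,
\]
i.e.\ $-\Delta v_\beta\ge0$.

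I expect step (a), the a priori $L^\infty$-bound, to be the main obstacle: everything downstream rests on it, and it is the one place where one must genuinely combine \eqref{eq:Vbetaeq} on all of $\R^d$ (available only thanks to \Cref{conv:2}), Jensen's inequality, \emph{both} integrability hypotheses \eqref{A1}--\eqref{A2}, and the admissibility of the maximizer. Steps (b)–(c) are then essentially bookkeeping, the only point of care being that the exponential Gibbs weight dominates the Coulomb singularities strongly enough both to license differentiation under the integral sign and to rule out singular contributions to $\Delta_\r e^{-\beta\phi}$.
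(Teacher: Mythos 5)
Your proof is correct and follows essentially the same route as the paper: differentiation under the integral sign in the fixed-point equation \eqref{eq:Vbetaeq} (justified by the exponential vanishing of the Gibbs weight at the Coulomb singularities), followed by the variance/Jensen identity $-\beta^{-1}\Delta v_\beta=\langle|\nabla_\r c|^2\rangle-|\langle\nabla_\r c\rangle|^2\geq 0$. Your preliminary step (a) is not a new ingredient but a re-derivation of the $L^\infty$ bound that the paper isolates as \Cref{lem:Vbounds} (same Jensen argument, with $\int v_\beta\,\d\mu$ controlled there via $F_{\SCE,\beta}(\rho)$ and \Cref{conv:1} rather than by ``admissibility''), so the two proofs coincide in substance.
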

\begin{proof}[Proof of \Cref{lem:Vsmooth}] It follows from dominated convergence that $v_\beta$ is twice continuously differentiable. To prove that $v_\beta$ is superharmonic, it then suffices to check that $-\Delta v_\beta \geq 0$ \cite[p.56-57]{land}. We denote by $\overline{G_\beta^\r}$ the Gibbs-Boltzmann measure $G_\beta^\r$ normalized to unity. We have
\begin{align}
\nabla v_\beta(\r) = - \int_{\R^{d(N-1)}} \nabla_{\r} c(\r, \r_2, \dots, \r_N) \d \overline{G_\beta^\r}(\r_2, \dots, \r_N) 
\end{align} 
and, taking the divergence with respect to $\r$ above, we obtain
\begin{multline}
-\beta^{-1}\Delta v_\beta(\r) =  \int_{\R^{d(N-1)}} \left|\nabla_{\r} c(\r, \r_2, \dots, \r_N)\right|^2  \d \overline{G_\beta^\r} \\ - \left|\int_{\R^{d(N-1)}} \nabla_{\r} c(\r, \r_2, \dots, \r_N)  \d \overline{G_\beta^\r}\right|^2.
\end{multline}
The thesis is then obtained by appealing to Jensen's inequality. Note that for any $\r$, the function $(\r_2, \dots, \r_N) \mapsto \nabla_\r c(\r_2, \dots, \r_N)$ is integrable against the Gibbs measure $\overline{G_\beta^\r}$ for the latter vanishes 
exponentially fast as $\r_j \to \r$ for all $j = 2, \dots, N$. Therefore, the above equation makes sense for all $\r$.
\end{proof}

\begin{lemma}\label{lem:Vbounds}
Let $\rho \in L^1(\R^d, \R_+)$ with $\intRd \rho = N \geq 2$ be such that the assumptions \eqref{A1} and \eqref{A2} are verified. Then, the unique maximizer $v_\beta$ of \pcref{prob:MOT_Coulomb_temp_d} under \Crefrange{conv:1}{conv:2} is bounded in $L^\infty(\R^d)$ independently in $\beta$ in the limit $\beta \to \infty$. More precisely, we have
\begin{equation}\label{eq:Vboundup}
\sup_{\r \in \R^d} v_\beta (\r) \leq \Vert \rho\ast|\r|^{2-d} \Vert_{L^\infty}+ \frac{1}{2}D(\rho).
\end{equation}
and
\begin{equation}\label{eq:Vbounddown}
\inf_{\r \in \R^d}v_\beta(\r) \geq - (N-1) (\Vert \rho\ast|\r|^{2-d}  \Vert_{L^\infty} + \frac{1}{2}D(\rho)).
\end{equation}
\end{lemma}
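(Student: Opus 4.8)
The plan is to work directly with the fixed-point equation \eqref{eq:Vbetaeq} for $v_\beta$, which under \Crefrange{conv:1}{conv:2} holds at every point $\r$, and to feed it into Jensen's inequality. The one preliminary fact I would establish first is that $\int v_\beta\,\d\mu \geq 0$, where $\mu = \rho/N$. Indeed, \Cref{conv:1} forces $\cF_\beta(v_\beta) = -\beta^{-1}\ln z_\beta(v_\beta) = -\beta^{-1}\ln N \leq 0$; since $v_\beta$ maximizes \pcref{prob:MOT_Coulomb_temp_d} we have $\intRd v_\beta \rho = F_{\SCE,\beta}(\rho) - \cF_\beta(v_\beta) = F_{\SCE,\beta}(\rho) + \beta^{-1}\ln N$, and this is nonnegative because $F_{\SCE,\beta}(\rho)$ is an infimum of nonnegative quantities (the Coulomb cost $c$ and the relative entropy are both $\geq 0$).

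For the upper bound \eqref{eq:Vboundup}, set $g_\r(\r_2,\dots,\r_N) = c(\r,\r_2,\dots,\r_N) - \sum_{i=2}^N v_\beta(\r_i)$, so that \eqref{eq:Vbetaeq}--\eqref{def:Gbr} read $v_\beta(\r) = -\beta^{-1}\ln\int_{\R^{d(N-1)}} e^{-\beta g_\r}\,\d\mu^{\otimes(N-1)}$. Since $\mu^{\otimes(N-1)}$ is a probability measure, Jensen's inequality for the exponential gives $v_\beta(\r) \leq \int g_\r\,\d\mu^{\otimes(N-1)}$. I would then expand $c(\r,\r_2,\dots,\r_N)$ into its $N-1$ "cross" terms $|\r-\r_i|^{2-d}$ and its $\binom{N-1}{2}$ "internal" terms $|\r_i-\r_j|^{2-d}$ and integrate term by term, obtaining $(N-1)(\mu\ast|\r|^{2-d})(\r) + \binom{N-1}{2}N^{-2}D(\rho) - (N-1)\int v_\beta\,\d\mu$. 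Bounding $(\mu\ast|\r|^{2-d})(\r) \leq N^{-1}\Vert\rho\ast|\r|^{2-d}\Vert_{L^\infty}$, using $\binom{N-1}{2}N^{-2}\leq \tfrac12$, and discarding the last term by the preliminary fact yields \eqref{eq:Vboundup}; this is where assumptions \eqref{A1} and \eqref{A2} enter, precisely to make the right-hand side finite.

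For the lower bound \eqref{eq:Vbounddown} I would use $c \geq 0$, so that $e^{-\beta g_\r} \leq \exp\big(\beta\sum_{i=2}^N v_\beta(\r_i)\big)$; by Tonelli the integral factorizes into $\int e^{-\beta g_\r}\,\d\mu^{\otimes(N-1)} \leq \big(\int e^{\beta v_\beta}\,\d\mu\big)^{N-1}$, whence $v_\beta(\r) \geq -(N-1)\beta^{-1}\ln\int e^{\beta v_\beta}\,\d\mu$. Here the two halves of the argument link up: the upper bound just proved shows $v_\beta \leq C := \Vert\rho\ast|\r|^{2-d}\Vert_{L^\infty} + \tfrac12 D(\rho)$ everywhere, so $\int e^{\beta v_\beta}\,\d\mu \leq e^{\beta C}$ and therefore $v_\beta(\r) \geq -(N-1)C$, which is exactly \eqref{eq:Vbounddown}. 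Every bound produced is manifestly $\beta$-free. I expect the only genuinely delicate point to be the preliminary step, i.e.\ extracting the sign of $\int v_\beta\,\d\mu$ from the normalization convention, and noticing that the lower bound must be derived \emph{after} and \emph{from} the upper bound; the remainder is a routine combination of Jensen's inequality, Tonelli, and bookkeeping of the combinatorial coefficients.
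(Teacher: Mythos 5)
Your proposal is correct and follows essentially the same route as the paper: Jensen's inequality applied to the fixed-point equation \eqref{eq:Vbetaeq} for the upper bound, with the term $\intRd v_\beta\,\d\mu$ controlled through the identity $F_{\SCE,\beta}(\rho)=\cF_\beta(v_\beta)+\intRd v_\beta\rho$, \Cref{conv:1} and the nonnegativity of $F_{\SCE,\beta}(\rho)$, and then the lower bound obtained from the positivity of the Coulomb cost, factorization of the integral, and insertion of the upper bound just proved. Your explicit isolation of the sign of $\intRd v_\beta\,\d\mu$ as a preliminary step is only a minor repackaging of the paper's argument (which invokes positivity of $F_{\SCE,\beta}(\rho)$ at the same point), so no substantive difference remains.
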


\begin{proof}[Proof of \Cref{lem:Vbounds}]
The upper bound \eqref{eq:Vboundup} follows by appealing to Jensen's inequality in \Cref{eq:Vbetaeq}. Indeed, we have 
\begin{equation}
v_\beta(\r) \leq \int_{\R^{d(N-1)}} \left(c(\r, \r_2, \dots, \r_N) - \sum_{i =2}^N v_\beta(\r_i)\right) \d \mu^{\otimes N-1},
\end{equation}
which implies 
\begin{multline}
v_\beta(\r) \leq (N-1) \mu \ast |\r|^{2-d} + \binom{N-1}{2}D(\mu) \\ - (N-1)\intRd v_\beta(\r) \mu(\d \r).
\end{multline}
Now, recalling that $\mu = \rho/N$, we have by definition
\begin{equation}
F_{\SCE, \beta}(\rho) = -\beta^{-1}\ln z_\beta(v_\beta) + N\intRd v_\beta(\r) \mu(\d \r).
\end{equation}
Therefore, using the above equality and the fact that $z_\beta(v_\beta) = N$ (see \Cref{conv:1}), we obtain
\begin{multline}
v_\beta(\r) \leq (N-1) \mu \ast |\r|^{2-d} + \binom{N-1}{2}D(\mu) \\ - \frac{(N-1)}{N}(F_{\SCE, \beta}(\rho) + \beta^{-1} \ln N).
\end{multline}
Finally, from the fact that $F_{\SCE, \beta}(\rho)$ is positive and converges to $F_{\SCE}(\rho)$ as $\beta \to \infty$, we obtain the upper bound \eqref{eq:Vboundup} as claimed.
Now using the positivity of the Coulomb cost, we have
\begin{align}\label{eq:Vbounddown2}
v_\beta(\r) \geq -\beta^{-1}(N-1)\ln\left( \int_{\R^d} e^{\beta v_\beta(\r')} \d \mu(\r')\right)
\end{align}
and plugging the upper bound \eqref{eq:Vboundup} into \eqref{eq:Vbounddown2} we obtain the lower bound \eqref{eq:Vbounddown}.
\end{proof}

We can now provide the proof of  \Cref{thm:charge_temp}.
\begin{proof}[Proof of \Cref{thm:charge_temp}]
From \Cref{lem:Vsmooth} and \Cref{lem:Vbounds}, we known that $v_\beta$ (under \Crefrange{conv:1}{conv:2}) is superharmonic and bounded from below. Therefore, the proof of \Cref{thm:charge_temp}, regarding the existence of a dual charge at positive temperature, is identical to that of the proof of \Cref{thm:charge} regarding the existence of a dual charge at zero temperature.
\end{proof}

\subsection{Proofs of \Cref{thm:charge_conv} and \Cref{thm:vconv}}\label{sec:2thm}
We start with the proof of  \Cref{thm:vconv} regarding the convergence of the Kantorovich potential $v_\beta$ to a Kantorovich potential for \pcref{prob:MOT_Coulomb_d_o} in the limit $\beta \to \infty$.
\begin{lemma}\label{lem:vconvW1}
Let $\rho \in L^1(\R^d, \R_+)$ with $\intRd \rho = N \geq 2$ be such that the assumptions \eqref{A1} and \eqref{A2} are verified, and such that the support $\Omega$ of $\rho$ is bounded. Then, the unique maximizer $v_\beta$ of \pcref{prob:MOT_Coulomb_temp_d} under \Crefrange{conv:1}{conv:2} is bounded in $W^{1,\infty}(\R^d)$ independently in $\beta$ in the limit $\beta \to \infty$.
\end{lemma}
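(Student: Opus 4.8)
The strategy is to bound $\|\nabla v_\beta\|_{L^\infty}$ uniformly in $\beta$ by exploiting the explicit gradient formula from the proof of \Cref{lem:Vsmooth},
\begin{equation*}
\nabla v_\beta(\r) = - \int_{\R^{d(N-1)}} \nabla_{\r} c(\r, \r_2, \dots, \r_N)\, \d \overline{G_\beta^\r}(\r_2, \dots, \r_N),
\end{equation*}
so that $|\nabla v_\beta(\r)| \leq \int \sum_{j=2}^N |\r - \r_j|^{1-d}\, \d\overline{G_\beta^\r}$. The point is that this is the expectation, under the normalized Gibbs measure $\overline{G_\beta^\r}$, of the (singular) quantity $\sum_{j} |\r - \r_j|^{1-d}$, and we must show it does not blow up as $\beta \to \infty$. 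Since $\overline{G_\beta^\r}$ has density proportional to $\exp[-\beta(c(\r,\r_2,\dots,\r_N) - \sum_i v_\beta(\r_i))]$ and $c$ contains the term $\sum_j |\r - \r_j|^{2-d}$, the measure is exponentially suppressed near the collision set $\{\r_j = \r\}$; the heart of the matter is to turn this qualitative suppression into a quantitative bound independent of $\beta$.

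First I would reduce to controlling, for each fixed $j$, the one-body marginal quantity $\int_{\R^d} |\r - \r_j|^{1-d} g_\beta^\r(\r_j)\,\d\r_j$ where $g_\beta^\r$ is the $j$-th marginal density of $\overline{G_\beta^\r}$ with respect to Lebesgue (absorbing the $\mu^{\otimes(N-1)}$ into the density via $\rho \in L^1$, using the boundedness of $\Omega$). Using the $L^\infty$ bounds on $v_\beta$ from \Cref{lem:Vbounds} (uniform in $\beta$) and the positivity of all the other Coulomb terms in $c$, one bounds $g_\beta^\r(\r_j) \leq C \exp(-\beta |\r - \r_j|^{2-d}) / z$ from above on $\Omega$, where $z$ is the partition function, which by \Cref{conv:1} and the $L^\infty$ bounds on $v_\beta$ is bounded below uniformly in $\beta$ (here the boundedness of $\Omega$ together with assumption \eqref{A1} is used to see $z = z_\beta(v_\beta) = N$ is the normalization and the measure does not degenerate). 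Then the integral $\int_{\Omega} |\r - \r_j|^{1-d} \exp(-\beta|\r-\r_j|^{2-d})\,\d\r_j$ is estimated by splitting into the region $|\r - \r_j| \leq \delta$ and its complement: on the far region one just uses $|\r-\r_j|^{1-d} \leq \delta^{1-d}$ and $|\Omega| < \infty$; on the near region one substitutes $s = |\r-\r_j|$ and observes $\int_0^\delta s^{1-d} e^{-\beta s^{2-d}} s^{d-1}\,\d s = \int_0^\delta e^{-\beta s^{2-d}}\,\d s \leq \delta$, which is bounded uniformly in $\beta$ (indeed it tends to $0$). Summing over $j = 2, \dots, N$ gives the uniform bound on $\|\nabla v_\beta\|_{L^\infty}$, and combined with \Cref{lem:Vbounds} this is exactly the claimed uniform $W^{1,\infty}$ bound.

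The main obstacle is the lower bound on the partition function uniform in $\beta$: one needs that $\overline{G_\beta^\r}$ is a genuine probability measure whose normalization constant does not vanish as $\beta \to \infty$, so that the pointwise upper bound on the unnormalized density translates to one on the normalized density. This is where \Cref{conv:1} ($z_\beta(v_\beta) = N$) together with the uniform $L^\infty$ bounds on $v_\beta$ does the work: $z_\beta(v_\beta) = \int \exp[-\beta(c - \sum v_\beta(\r_i))]\,\d\mu^{\otimes N} = N$, and relating $\int \d\overline{G_\beta^\r}$-type normalizations to $z_\beta(v_\beta)$ via the fixed-point equation \eqref{eq:Vbetaeq} — which states precisely that $e^{-\beta v_\beta(\r)} = \int G_\beta^\r \,\d\mu^{\otimes(N-1)}$ — shows the normalization of $G_\beta^\r$ is $e^{-\beta v_\beta(\r)}$, bounded above and below uniformly by \Cref{lem:Vbounds}. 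A secondary technical point is justifying the use of the gradient formula and the interchange of gradient and integral at every $\r$, including near $\partial\Omega$, but this is already handled in the proof of \Cref{lem:Vsmooth} (the Gibbs weight vanishes exponentially at collisions), so it can be invoked directly.
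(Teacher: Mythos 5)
There is a genuine gap in the uniformity bookkeeping at the heart of your argument. You claim that the normalization of $G_\beta^\r$ --- which by \eqref{eq:Vbetaeq} equals $g_\beta(\r)=e^{-\beta v_\beta(\r)}$ --- is ``bounded above and below uniformly by \Cref{lem:Vbounds}''. But \Cref{lem:Vbounds} bounds $v_\beta$, not $\beta v_\beta$: with $\Vert v_\beta\Vert_{L^\infty}\le M$ you only get $e^{-\beta M}\le g_\beta(\r)\le e^{\beta M}$, which degenerates exponentially as $\beta\to\infty$. The same problem affects your constant $C$: to reduce the Gibbs weight to $\exp(-\beta|\r-\r_j|^{2-d})$ you must discard the factor $\exp(\beta\sum_{i\ge2}v_\beta(\r_i))$, which costs $e^{\beta(N-1)M}$, not a $\beta$-independent constant. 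Since your final estimate of the near-collision region only records $\int_0^\delta e^{-\beta s^{2-d}}\,\d s\le\delta$, the surviving prefactor of order $e^{\beta NM}$ is not compensated, and the argument as written does not produce a bound on $\nabla v_\beta$ that is uniform in $\beta$. A secondary issue is that your radial computation silently replaces $\mu(\d\r_j)=\rho(\r_j)\,\d\r_j/N$ by Lebesgue measure, which requires $\rho\in L^\infty$, not just $\rho\in L^1$.

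The idea is nonetheless salvageable, and then it becomes a genuinely different route from the paper's. Keep the exponential gain you threw away, $\int_0^\delta e^{-\beta s^{2-d}}\,\d s\le\delta\,e^{-\beta\delta^{2-d}}$ (and, for the measure issue, use $t^{1-d}e^{-\beta t^{2-d}}\le\delta^{-1}e^{-\beta\delta^{2-d}}\,t^{2-d}$ on $t\le\delta$ together with \eqref{A2}); then choosing $\delta$ so small that $\delta^{2-d}>N\sup_\beta\Vert v_\beta\Vert_{L^\infty}$ makes the near-region contribution $O(e^{\beta[NM-\delta^{2-d}]})\to0$, while the far region contributes only $\delta^{1-d}$, which is the desired uniform bound. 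This repaired version trades pointwise control of the Gibbs density against the hard-core-type suppression at collisions, whereas the paper avoids any such balancing act: it writes the dangerous expectation as a ratio of partition functions $h_\beta(\r)/g_\beta(\r)$, applies the Gibbs variational principle to turn the ratio into $\exp(-\int\ln|\r-\r_2|^{d-1}\,\d\overline{H_\beta^\r})$, uses $-\ln t^{d-1}\le t^{2-d}$, and then applies the variational principle a second time with the uniform test measure on $\Omega^{N-1}$ to bound the resulting average uniformly in $\beta$ and $\r$.
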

\begin{proof}[Proof of \Cref{lem:vconvW1}]
According to \Cref{lem:Vbounds}, we know that $v_\beta$ is bounded in $L^\infty(\R^d)$ uniformly in $\beta$ in the small temperature limit, so that it only remains to prove that $\nabla V_\beta$ is also bounded in $L^\infty(\R^d)$ uniformly in $\beta$ as $\beta \to \infty$. Recall that 
\begin{align}
\nabla v_\beta(\r) = - \int_{\R^{d(N-1)}} \nabla_{\r} c(\r, \r_2, \dots, \r_N) \d \overline{G_\beta^\r}(\r_2, \dots, \r_N) 
\end{align}
where $\overline{G_\beta^\r}$ is the Gibbs-Boltzmann measure $G_\beta^\r$ normalized to unity. Using the fact that $G_\beta^\r$ is symmetric, we have
\begin{align}
|\nabla v_\beta(\r)| \leq (N-1)\int_{\R^{d(N-1)}} \frac{ \d \overline{G_\beta^\r}(\r_2, \dots, \r_N)}{|\r - \r_2|^{d-1}}.
\end{align}
For all $\beta > 0$ and all $\r\in\R^d$, we have $\supp(G_\beta^\r) \subset \Omega$, where $\Omega$ is the support of $\rho$. Therefore, since $\Omega$ is by hypothesis bounded, we have
$$
\int_{\R^{d(N-1)}}\frac{ \d \overline{G_\beta^\r}(\r_2, \dots, \r_N)}{|\r - \r_2|^{d-1}} = \frac{1}{|\r|^{d-1}}+ o_{|\r| \to \infty}(1),
$$
where the $o_{|\r| \to \infty}(1)$ is independent of $\beta$. Therefore, for a large enough compact set $K$, the gradient $\nabla V_\beta$ is bounded in $L^\infty(\R^d \setminus K)$ uniformly in $\beta$ in the limit $\beta \to \infty$. It remains to control $|\nabla V_\beta|$ inside of $K$. 

Let us define the (unnormalized) Gibbs-Boltzmann measure $H_\beta^\r$ as 
$$
\d H_\beta^\r(\r_2, \dots, \r_N) = \frac{G_\beta^\r(\r_2, \dots, \r_N)}{|\r - \r_2|^{d-1}} \d \mu^{\otimes (N-1)}(\r_2, \dots, \r_N).
$$
and let $\overline{H_\beta^\r}$ be the associated probability measure, \emph{i.e.} $H_\beta^\r$ normalized to unity. We denote by $g_\beta(\r)$ the volume of $G_\beta^\r$, \emph{i.e.} $g_\beta(\r) = \int_{\R^{d(N-1)}} \d G_\beta^\r$, and by $h_\beta(\r)$ the volume of $H_\beta^\r$. We can then rewrite
\begin{align}
A(\r) := \int_{\R^{d(N-1)}}\frac{ \d \overline{G_\beta^\r}(\r_2, \dots, \r_N)}{|\r - \r_2|^{d-1}}  = \frac{h_\beta(\r)}{g_\beta(\r)}
\end{align}
According to the \e{Gibbs Variational Principle}, the free energies associated with $G_\beta^\r$ and $H_\beta^\r$ can be rewritten as 
\begin{align}
-\beta^{-1} \ln g_\beta(\r) &= F(\overline{G_\beta^\r}) = \min_{\Pro} F(\Pro) \\
\text{and } -\beta^{-1} \ln h_\beta(\r) &= F'(\overline{H_\beta^\r}) = \min_{\Pro} F'(\Pro),
\end{align}
where the minimum runs over all probability measures $\Pro$ on $\R^{d(N-1)}$ and where the functionals $F$ and $F'$ are defined as the total energies
\begin{align}
F(\Pro) &= \int_{\R^{d(N-1)}} \mathcal{H}^\r (\r_2, \dots, \r_N)\d \Pro + \beta^{-1}{\sf Ent}(\Pro | \mu^{\otimes (N-1)})\\
\text{and } \,\, F'(\Pro) &= F(\Pro) + \beta^{-1}\int_{\R^{d(N-1)}} \ln|\r - \r_2|^{d-1}\d\Pro,
\end{align}
where $\mathcal{H}^\r$ is the Hamiltonian defined as
$$
 \mathcal{H}^\r(\r_2, \dots, \r_N) = c(\r, \r_2, \dots, \r_N) - \sum_{i = 2}^N v_\beta(\r_i).
$$
Therefore, we obtain
\begin{align}
-\beta^{-1} \ln A(\r) = F'(\overline{H_\beta^\r}) - F(\overline{G_\beta^\r}) \geq F'(\overline{H_\beta^\r}) - F(\overline{H_\beta^\r}),
\end{align}
and the preceding inequality rewrites as
\begin{align}
A(\r) \leq \exp\left( - \int_{\R^{d(N-1)}} \ln |\r - \r_2|^{d-1} \d \overline{H_\beta^\r} \right).
\end{align}
Using the inequality $-\log t^{d-1} \leq t^{2-d}$ for $t > 0$, we then obtain
\begin{equation}\label{eq:Wxin3}
A(\r) \leq \exp \left( \int_{\R^{d(N-1)}} \frac{\d \overline{H_\beta^\r} }{|\r - \r_2|^{d-2}} \right).
\end{equation}
Now, let us appeal to the Gibbs Variational Principle once again to further the inequality \eqref{eq:Wxin3}. Indeed, for any probability measure $\Pro$, we have (by definition) that $F'({\overline{H_\beta^\r}}) \leq F'(\Pro)$. Using the fact that the entropy and the Coulomb cost are positive, we are led to
\begin{multline}\label{eq:Gxpb}
\int_{\R^{d(N-1)}} \frac{\d {\overline{H_\beta^\r}}}{|\r - \r_2|^{d-2}} \leq F'(\Pro) \\ + \int_{\R^{d(N-1)}} \left(\sum_{i=2}^N v_\beta(\r_i) - \beta^{-1} \log|\r - \r_N|^{d-1} \right) \d \overline{H_\beta^\r}.
\end{multline}
Using that $v_\beta$ is uniformly bounded in $\beta$ in the limit $\beta \to \infty$, say $\Vert v_\beta \Vert_{L^\infty(\R^d)} \leq M$ for all $\beta$ as $\beta \to \infty$, we further obtain
\begin{equation}\label{eq:Gxpb2}
(1 - \beta^{-1})\int_{\R^{d(N-1)}} \frac{\d {\overline{H_\beta^\r}} }{|\r - \r_2|^{d-2}} \leq F'(\Pro) + (N-1)M.
\end{equation}
Now, it suffices to put, \e{e.g.} $\Pro = |\Omega|^{-(N-1)} \mathds{1}_\Omega^{\otimes (N-1)}$, in \eqref{eq:Gxpb2}. One can check that the bound then obtained is bounded in $L^\infty(K)$ uniformly in the temperature $\beta$, yielding the thesis of \Cref{lem:vconvW1}.
\end{proof}
\begin{lemma}\label{lem:vconvAA}
Let $\rho \in L^1(\R^d, \R_+)$ with $\intRd \rho = N \geq 2$ be such that the assumptions \eqref{A1} and \eqref{A2} are verified, and let $v_\beta$ be the unique maximizer of \pcref{prob:MOT_Coulomb_temp_d} under \Crefrange{conv:1}{conv:2}. Then there exists $v_\infty \in W^{1, \infty}(\R^d)$ such that, up to a subsequence $\beta' \to \infty$, we have
\begin{equation}\label{eq:Vuniform}
v_\beta \xrightarrow[\beta \to \infty]{} v_\infty \quad \text{uniformly on every compact set}.
\end{equation}
Moreover, $v_\infty$ is a Kantorovich potential for \pcref{prob:MOT_Coulomb_d_o}.
\end{lemma}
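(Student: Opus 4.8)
The plan is a compactness-and-identification argument, in the spirit of $\Gamma$-convergence for the dual problems. \textbf{Step 1 (compactness).} First I would invoke \Cref{lem:vconvW1}: the family $(v_\beta)_\beta$ is bounded in $W^{1,\infty}(\R^d)$ uniformly as $\beta\to\infty$, hence uniformly bounded and uniformly Lipschitz, so equicontinuous on every compact set. Arzelà--Ascoli together with a diagonal extraction over an exhaustion of $\R^d$ by compact sets then produces a sequence $\beta'\to\infty$ and a function $v_\infty$ with $v_{\beta'}\to v_\infty$ uniformly on every compact set; $v_\infty$ inherits the uniform $L^\infty$ bound and the Lipschitz constant, so $v_\infty\in W^{1,\infty}(\R^d)$, which settles the convergence part of the statement.

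\textbf{Step 2 (the linear term and one-sided inequality).} Next I would identify the value of $\intRd v_\infty\rho$. Under \Cref{conv:1} we have $z_\beta(v_\beta)=N$, hence $\cF_\beta(v_\beta)=-\beta^{-1}\ln N$, and since $v_\beta$ maximizes \pcref{prob:MOT_Coulomb_temp_d},
\[
F_{\SCE,\beta}(\rho)=\cF_\beta(v_\beta)+\intRd v_\beta(\r)\rho(\r)\,\d\r=-\beta^{-1}\ln N+\intRd v_\beta(\r)\rho(\r)\,\d\r .
\]
Passing to the limit along $\beta'$, using $F_{\SCE,\beta}(\rho)\to F_{\SCE}(\rho)$ \cite{carlier2017convergence}, $\beta^{-1}\ln N\to0$, and $\intRd v_{\beta'}\rho\to\intRd v_\infty\rho$ (uniform convergence on the compact support $\Omega$ of $\rho$ together with $\rho\in L^1$), I obtain $\intRd v_\infty(\r)\rho(\r)\,\d\r=F_{\SCE}(\rho)$. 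Since $v_\infty$ is continuous and bounded and $\rho\in L^1$, it is feasible for \pcref{prob:MOT_Coulomb_d_o}, and weak duality gives $E_{N,\Omega}(v_\infty)+\intRd v_\infty\rho\le F_{\SCE}(\rho)=\intRd v_\infty\rho$, i.e. $E_{N,\Omega}(v_\infty)\le0$. Hence it only remains to show $E_{N,\Omega}(v_\infty)\ge0$, for then $E_{N,\Omega}(v_\infty)=0$ and $E_{N,\Omega}(v_\infty)+\intRd v_\infty\rho=F_{\SCE}(\rho)$, making $v_\infty$ a Kantorovich potential for \pcref{prob:MOT_Coulomb_d_o}.

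\textbf{Step 3 (the crux).} To prove $E_{N,\Omega}(v_\infty)\ge0$ I would argue by contradiction. Suppose $E_{N,\Omega}(v_\infty)<0$. Since $v_\infty$ is bounded and $c(\r_1,\dots,\r_N)\to+\infty$ whenever two coordinates collide, a minimizing sequence for $\inf_{\Omega^N}\big(c-\sum_i v_\infty(\r_i)\big)$ stays away from the collision set, and — $\Omega=\supp\rho$ being bounded and closed — the infimum is attained at some $\hat\r=(\hat\r_1,\dots,\hat\r_N)\in\Omega^N$ with pairwise distinct coordinates and $c(\hat\r)-\sum_i v_\infty(\hat\r_i)=E_{N,\Omega}(v_\infty)<0$. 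By continuity of $c-\sum_i v_\infty(\r_i)$ near $\hat\r$, there are $\varepsilon>0$ and $\eta>0$ such that, setting $U:=\prod_{i=1}^N\big(B(\hat\r_i,\eta)\cap\Omega\big)$, one has $c-\sum_i v_\infty(\r_i)\le-\varepsilon$ on $U$; moreover $\mu^{\otimes N}(U)=\prod_i\mu\big(B(\hat\r_i,\eta)\big)>0$ since each $\hat\r_i\in\supp\mu=\Omega$. Because $v_{\beta'}\to v_\infty$ uniformly on $\Omega$, for $\beta'$ large one has $c-\sum_i v_{\beta'}(\r_i)\le-\varepsilon/2$ on $U$, so
\[
N=z_{\beta'}(v_{\beta'})=\int_{\R^{dN}}e^{-\beta'\left(c(\r_1,\dots,\r_N)-\sum_i v_{\beta'}(\r_i)\right)}\d\mu^{\otimes N}\ge\mu^{\otimes N}(U)\,e^{\beta'\varepsilon/2}\xrightarrow[\beta'\to\infty]{}+\infty ,
\]
contradicting $z_{\beta'}(v_{\beta'})=N$. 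Therefore $E_{N,\Omega}(v_\infty)=0$ and $v_\infty$ is a Kantorovich potential for \pcref{prob:MOT_Coulomb_d_o}.

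The only genuinely delicate point is Step 3: the test region $U$ must simultaneously carry positive $\mu^{\otimes N}$-mass — which forces $\hat\r_i\in\supp\rho$, and is precisely why $E_{N,\Omega}$ is an infimum over $\Omega$ and not over $\R^d$ — and stay away from the collision set, so that $c$ is bounded on $U$ and the uniform convergence $v_{\beta'}\to v_\infty$ can be transferred into the exponent. Everything else (Arzelà--Ascoli, the uniform/dominated convergence of the linear term, weak duality, and the cited value convergence $F_{\SCE,\beta}\to F_{\SCE}$) is routine. I would also remark that the normalization $\cF_\beta(v_\beta)=0$ appearing in \Cref{thm:vconv} differs from \Cref{conv:1} only by the additive constant $-\tfrac{\ln N}{\beta N}\to0$, hence produces the same accumulation points.
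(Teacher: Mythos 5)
Your proposal is correct and follows essentially the same route as the paper: compactness from the uniform $W^{1,\infty}$ bound of \Cref{lem:vconvW1}, identification of $\int_\Omega v_\infty\rho=F_{\SCE}(\rho)$ via $F_{\SCE,\beta}\to F_{\SCE}$ and the normalization $z_\beta(v_\beta)=N$, weak duality reducing the claim to $E_{N,\Omega}(v_\infty)\ge 0$, and then exploiting that the partition function would blow up otherwise. The only (cosmetic) difference is in the last step, where the paper applies Fatou's lemma to the relatively open set $A_\epsilon\subset\Omega^N$ and uses that a null open subset of $(\supp\mu)^N$ must be empty, while you lower-bound $z_{\beta'}(v_{\beta'})$ directly on a product neighborhood of an attained (collision-free) minimizer — the same mechanism, implemented by contradiction.
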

\begin{proof}[Proof of \Cref{lem:vconvAA}]
The Kantorovich potential $v_\beta$ being bounded in $W^{1,\infty}(\R^d)$ uniformly in $\beta$ in the limit $\beta \to \infty$, it follows that there exists $v_\infty \in W^{1, \infty}(\R^d)$ such that (up to a subsequence $\beta' \to \infty$) 
\begin{equation}\label{eq:Vbetawsl}
v_\beta \to v_\infty \,\, \text{weakly in } W^{1,\infty}(\R^d) \text{ and locally uniformly in } L^\infty(\R^d).
\end{equation}
We claim that $v_\infty$ is a Kantorovich potential for \pcref{prob:MOT_Coulomb_d_o}. Since $F_{\SCE, \beta}(\rho) \to F_{\rm SCE}(\rho)$ as $\beta \to \infty$ \cite{carlier2017convergence}, we have
\begin{equation}\label{limit_temp}
F_{\SCE, \beta}(\rho) = -\beta^{-1} \ln N + \int_\Omega v_\beta \rho \xrightarrow[\beta \to \infty]{} \int_\Omega v_\infty \rho = F_{\rm SCE}(\rho).
\end{equation}
Now, by duality we have 
\begin{equation}
E_{N, \Omega}(v_\infty) + \int_\Omega v_\infty \rho \leq F_{\rm SCE}(\rho) = \int_\Omega v_\infty \rho,
\end{equation}
where the last equality follows from \eqref{limit_temp}. Therefore, we will have proved that $v_\infty$ is a Kantorovich potential for \pcref{prob:MOT_Coulomb_d_o} if we can prove that $E_{N, \Omega}(v_\infty) \geq 0$ — which will eventually imply that $E_{N, \Omega}(v_\infty) = 0$.
Let us then define
$$
A_\epsilon= \left\{ (\r_1, \dots, \r_N) \in \Omega^N: \sum_{i = 1}^N v_\infty(\r_i) > c_\alpha(\r_1, \dots, \r_N) + \epsilon \right\}.
$$
By convention, we have $z_\beta(v_\beta) = N$ for all $\beta > 0$. Appealing to \e{Fatou's lemma}, we have
\begin{equation}\label{eq:Zine}
N \geq \int_{A_\epsilon} \liminf_{\beta \to \infty} G_\beta(\r_1, \dots, \r_N) \d \mu^{\otimes N}(\r_1, \dots, \r_N)
\end{equation}
But, by definition of $A_\epsilon$, for all $(\r_1, \dots, \r_N) \in A_\epsilon$ we have 
\begin{equation}\label{eq:limiteHV}
\liminf_{\beta \to \infty} G_\beta(\r_1, \dots, \r_N) = + \infty.
\end{equation}
Therefore, in regards of \eqref{eq:Zine} and \eqref{eq:limiteHV}, it must be that
$$
\mu^{\otimes N}(A_\varepsilon) = 0.
$$
Since $A_\varepsilon$ is open in $\Omega^N$, we necessarily have that $A_\epsilon$ is empty for all $\epsilon > 0$. Therefore, we obtain $E_{N,\Omega}(v_\infty) \geq 0$ as wanted, yielding the thesis that $v_\infty$ is a Kantorovich potential for  \Cref{prob:MOT_Coulomb_d_o}.
\end{proof}
\begin{proof}[Proof of \Cref{thm:vconv}]
The proof of  \Cref{thm:vconv} now follows entirely from \Cref{lem:vconvW1} and \Cref{lem:vconvAA}.
\end{proof}

We now turn to the proof of \Cref{thm:charge_conv}. Given two measures $\mu, \nu \in \cM(\R^d)$, their \e{Coulomb energy} is defined as
$$
D(\mu, \nu) = \iint_{\R^d \times \R^d} \frac{\mu(\d\r) \nu(\d\r')}{|\r - \r'|^{d-2}},
$$
and we use the shorthand notation $D(\mu) := D(\mu, \mu)$ to denote the \e{self-energy} of $\mu$. If we define the space $\cE(\R^d) \subset \cM(\R^d)$ as 
$$
\cE(\R^d) = \left\{\mu \in \cM(\R^d) : D(\mu) < \infty \right\},
$$
then $(\mu, \nu) \mapsto D(\mu,\nu)$ defines an inner product which endows $\cE(\R^d)$ with a Hilbert space structure \cite[Thm 1.18]{land}. In particular, the weak topology on $\cE(\R^d)$ is defined as follows: given $(\mu_n)_n \subset \cE(\R^d)$, we say that the sequence $(\mu_n)_n$ weakly converge (in energy) to $\mu$ if, for all $\nu \in \cE(\R^d)$, we have 
$$
D(\mu_n, \nu) \xrightarrow[n \to \infty]{} D(\mu, \nu).
$$
This weak topology is stronger than that of the vague topology on $\cM(\R^d)$ \cite[Lem 1.3]{land}. In particular, $\{ \mu \in \cE(\R^d) : D(\mu) < c \}$ is compact for the vague topology for any $c > 0$. 
\begin{proof}[Proof of \Cref{thm:charge_conv}]
Let $\rhoextbeta$ be the dual charge associated with $v_\beta$ which is not yet ``swept'' onto $\Omega$ using \Cref{thm:bal}. We claim that $D(\rhoextbeta)$ is bounded uniformly in $\beta$ in the limit $\beta \to \infty$. Integrating by parts, we have
$$
D(\rhoextbeta) = \frac{1}{c_d} \int_{\R^d} |\nabla v_\beta(\r)|^2 \d \r \quad \text{where } c_d = \frac{d(d-2)\pi^{d/2}}{\Gamma(\frac d2 +1)}.
$$
Now, as in the proof of \Cref{thm:vconv}, we have
$$
|\nabla v_\beta(\r)|^2 \leq (N-1)^2\int_{\R^{d(N-1)}} \frac{\d \overline{G_\beta^\r}(\r_2, \dots, \r_N)}{|\r - \r_2|^{2(d-1)}}.
$$
Outside a large enough compact set $K$, the above integral is bounded in $L^2(\R^d \setminus K)$ uniformly in $\beta$ in the limit $\beta \to \infty$. It then remains to control the $L^2$-norm of $\nabla v_\beta$ inside $K$. The strategy to do so is completely analogous to that of the proof of \Cref{thm:vconv}. 

Now, if $\rhoextbeta$ is swept onto $\Omega$, its self-energy does not increase \cite[Thm. 4.4.]{land}. Therefore, according to what precedes, the self-energy of the balayage measure remains uniformly bounded in $\beta$ in the limit $\beta \to \infty$. As such, there exists $\rho_\infty \in \cE(\R^d)$ such that 
\begin{equation}\label{eq:rhobOc}
\rhoextbeta \wsl \rho_\infty \quad \text{in } \cM(\R^d),
\end{equation}
and we have that $\rho_\infty$ is a dual charge for \pcref{prob:MOT_Coulomb_d_o}. Indeed, we know that (up to a subsequence; see \Cref{thm:vconv}) 
\begin{equation}\label{eq:VbOc}
\rhoextbeta \ast |\r|^{2-d} \xrightarrow[\beta \to \infty]{} v_\infty \quad \text{uniformly on } \Omega,
\end{equation}
where $v_\infty$ is a Kantorovich potential for \pcref{prob:MOT_Coulomb_d_o}. Therefore, using \eqref{eq:rhobOc}, we obtain $\rho_\infty \ast |\r|^{2-d}= v_\infty$, yielding that $\rho_\infty$ is a dual charge for  \pcref{prob:MOT_Coulomb_d_o}.
\end{proof}
\bibliographystyle{acm}
\bibliography{mybib}
\end{document}